\DeclareMathAlphabet{\mathbbm}{U}{bbm}{m}{n}
\DeclareSymbolFont{ltrs}     {OT1}{pzc}{m}{it}
\DeclareSymbolFont{ltrsa}     {OMS}{cmsy}{m}{n}
\DeclareSymbolFont{ltrsA}{U}{txmia}{m}{it}
\DeclareSymbolFont{symbolsC}{U}{txsyc}{m}{n}
\DeclareSymbolFont{ltrsB}{U}{rsfs}{m}{n}
\DeclareSymbolFontAlphabet{\mfrak}{ltrsA}
\DeclareMathAlphabet{\mathpzc}{OT1}{pzc}{m}{it}
\DeclareMathAlphabet{\mathrsfs}{U}{rsfs}{m}{n}
\def\beq{\begin{equation}}
\def\eeq{\end{equation}}
\def\beqn{\begin{equation*}}
\def\eeqn{\end{equation*}}
\def\beqa{\begin{eqnarray}}
\def\eeqa{\end{eqnarray}}
\def\beqs{\begin{eqnarray*}}
\def\eeqs{\end{eqnarray*}}
\def\nn{\nonumber}
\def\Z{\mathbb{Z}}
\def\R{\mathbb{R}}
\def\C{\mathbb{C}}
\def\Om{\Omega}
\def\spr{\cdot}
\def\di{\partial}
\def\llb{[\![}
\def\rrb{]\!]}
\def\vac{\boldsymbol{1}}
\def\e{\mathrm e}
\def\z{\mathrm z}
\def\w{\mathrm w}
\def\u{\mathrm u}
\def\vv{\mathrm v}
\def\T{\mathrm T}
\def\xo{x}
\def\ii{\mathrm i}
\def\cD{\mathfrak{c}_D}
\def\so{\mathfrak{so}}
\def\SO{\mathrm{SO}}
\def\sl{\mathfrak{sl}}
\def\ph{\varphi}
\def\de{\delta}
\def\io{\iota}
\def\harm{\mathrm{har}}
\DeclareMathOperator\Hom{Hom}
\DeclareMathOperator\End{End}
\DeclareMathOperator\ad{ad}
\newtheorem{theorem}{Theorem}
\newtheorem{lemma}[theorem]{Lemma}
\newtheorem{proposition}[theorem]{Proposition}
\newtheorem{corollary}[theorem]{Corollary}
\theoremstyle{definition}
\newtheorem{definition}{Definition}
\theoremstyle{remark}
\newtheorem{remark}{Remark}
\numberwithin{equation}{section}
\numberwithin{theorem}{section}
\numberwithin{definition}{section}
\numberwithin{remark}{section}
\newcommand\thref{Theorem \ref}
\newcommand\leref{Lemma \ref}
\newcommand\prref{Proposition \ref}
\newcommand\deref{Definition \ref}
\newcommand\reref{Remark \ref}
\newcommand\seref{Sect.\ \ref}
\newcounter{DEFINi}
\newcounter{STATEM}
\renewcommand{\theSTATEM}{\arabic{section}.\arabic{STATEM}}
\newcounter{EXAMp}
\renewcommand{\theEXAMp}{\arabic{section}.\arabic{EXAMp}}
\newcounter{Remk}
\renewcommand{\theRemk}{\arabic{section}.\arabic{Remk}}
\renewcommand{\theequation}{\arabic{section}.\arabic{equation}}
\def\ZP{\Z_{\geqslant 0}}
\begin{document}

\title[Reconstruction of vertex algebras in higher dimensions]{Reconstruction of vertex algebras in even higher dimensions}

\author{Bojko N. Bakalov}
\address{Department of Mathematics, North Carolina State University, Raleigh, NC 27695-8205, USA}
\email{bojko\_bakalov@ncsu.edu}

\author{Nikolay M. Nikolov}
\address{Institute for Nuclear Research and Nuclear Energy of Bulgarian Academy of Sciences, Tzarigradsko chaussee 72, BG-1784, Sofia, Bulgaria}
\address{Faculty of Mathematics and Informatics, Sofia University ''St. Kliment Ohridski'', James Bourchier 5, BG-1164, Sofia, Bulgaria}
\email{nikolov.qft@gmail.com}
\email{mitov@inrne.bas.bg}

\date{December 19, 2022; Revised June 25, 2023}

\subjclass[2010]{Primary 17B69; Secondary 81R10, 81T40}

\keywords{Vertex algebra; conformal Lie algebra; unitary positive energy representation}

\setcounter{footnote}{0}

\begin{abstract}
Vertex algebras in higher dimensions correspond to models of quantum field theory with global conformal invariance.
Any vertex algebra in dimension $D$ admits a restriction to a vertex algebra in any lower dimension and, in particular, to dimension one.
In the case when $D$ is even, we find natural conditions under which the converse passage is possible.
These conditions include a unitary action of the conformal Lie algebra with a positive energy, which is given by local endomorphisms and obeys certain integrability properties.
\end{abstract}

\maketitle

\tableofcontents

\section{Introduction}\label{Se.1}

In Wightman's axiomatic approach to Quantum Field Theory (QFT), it is possible to restrict quantum fields to time-like submanifolds. This fact is a consequence of the most general and fundamental physical principles such as positivity of energy and locality. Thus, a natural question arises: which QFT models in low-dimensional space-time can be obtained in this way? In other words, which models in low dimensions can generate models in higher dimensions, and what additional structure is needed for this?

In this paper, we fully solve this problem within a special class of quantum field models, those with Global Conformal Invariance (GCI). These models were introduced in \cite{NT01}, where it was shown that they are characterized as conformally invariant models with rational correlation functions. This rationality allows the QFT models with GCI to be described in a purely algebraic way in terms of the vertex algebras in higher dimensions introduced in \cite{N05} and further developed in \cite{BN06}. One of the main results of \cite{N05} (see Theorems 9.2 and 9.3), is a precise one-to-one correspondence between conformal vertex algebras in higher dimensions with Hermitian structure and models of Wightman's axioms that possess GCI.

The present work is done entirely within the formalism of vertex algebras in higher dimensions, which are briefly reviewed in Sect.\ \ref{Se.2} below.
A vertex algebra in dimension $D$ admits a restriction to any lower dimension $D'<D$ (see Sect.\ \ref{sec2.3}). In particular, vertex algebras in dimension $1$ are the same as the (chiral) vertex algebras introduced by Borcherds in \cite{B86};  see Sect.\ \ref{Se.2.1-v3}.
They are well studied in connection with representation theory \cite{FLM88,K98,FB04,LL04} and $2$-dimensional conformal field theory \cite{BPZ84,G89,DMS97}.

Our main result is the Reconstruction Theorem \ref{TH}, which shows how one can construct a vertex algebra in even dimension $D$ from a chiral vertex algebra that possesses additional symmetry structure. In a different form, this result was previously announced in \cite{N05a,BN-ICM,BN08}. In this paper, we present a complete proof of the theorem.

Let us explain our construction in a heuristic way. Consider a vertex algebra $V$ in dimension $D$ with a state-field correspondence $Y$ (see Sect.\ \ref{Se.2}).
Suppose that $V$ is equipped with an action of the Lie algebra $\cD$ of infinitesimal conformal transformations of the complex Euclidean space $\C^D$, 
so that $Y$ is covariant under this action (see Sect.\ \ref{Se.2.3}). Additionally, we assume that
the action of $\cD$ on $V$ can be integrated to a group action $U(g)$ on $V$ for suitable elements $g$ of the conformal group (see \deref{main-def}; for general elements $g$ one needs a Hilbert space completion of $V$). 
Intuitively, for $a,b\in V$, we can think of $Y(a,\z)b$ as a conformally-covariant vector-valued function of $\z\in\C^D$. 
If we consider the restriction $\z \mapsto x\e_1$ to a line, where $\e_1 =(1,0,\dots,0) \in\C^D$, we obtain that $V$ is a chiral vertex algebra with
a state-field correspondence $Y_1(a,x)=Y(a,x\e_1)$. 

Conversely, if we are given a chiral vertex algebra $(V,Y_1)$, which is a restriction of a vertex algebra $(V,Y)$ in dimension $D$, then we can reconstruct $Y$ from $Y_1$ and the action of $\cD$.
Explicitly, $Y(a,\z)b$ is obtained from the adjoint action $U(g)Y_1(U(g)^{-1}a,x)U(g)^{-1}b$ of a conformal transformation $g$ that maps the pair of points $(0,x\e_1)$ to the pair $(0,\z)$;
see Eq.\ \eqref{vert4-nn1-mod090422} below.
In order for this construction to be well defined, we need a certain invariance condition for the action of the stabilizer of a pair $(0,x\e_1)$ in the conformal group action.  We have found two equivalent formulations of this invariance. The first consists of explicit commutation relations between the Lie algebra $\cD$ and the vertex operators $Y_1(a,x)$ of the initial chiral vertex algebra (see Proposition \ref{ploc2}). The second condition is more abstract and is given in terms of the notion of a local endomorphism of a vertex algebra, introduced in Sect.\ \ref{sloc}.
We further develop this notion, which may be of independent interest and is related to the pseudoderivations of Etingof and Kazhdan \cite{EK00} (see also \cite{L05}).  
In fact, local endomorphisms have already found applications in other works such as \cite{Ne15,Ne16,Ne16a}.
We also remark that there is an analogy between our algebraic construction and the geometric dimensional reduction introduced in \cite{N87} (see also \cite{NP03-1,NP03-2}).

After we reconstruct the fields $Y(a,\z)$ from their restrictions $Y(a,x\e_1)$, as outlined above, we have to show that the state-field correspondence $Y$ endows $V$ with the structure of a vertex algebra in dimension $D$. One of the most important axioms to check is the locality of the fields. Let us explain heuristically the idea behind its proof (see Sect.\ \ref{sse5.7n}).
Locality is a statement about the product $Y(a,\z)Y(b,\w)c$ for three mutually non-isotropic points $0$, $\z$, $\w$ in $\C^D$. It is known that the conformal group acts transitively on triples of mutually non-isotropic complex points. In particular, they can be mapped into a line, and in this way locality in dimension $D$ can be derived from locality in dimension $1$.

Another axiom of a vertex algebra requires that $Y(a,\z)b$ become a formal power series in $\z=(z^1,\dots,z^D)$ after multiplication by some power of the Euclidean square-length $\z^2=(z^1)^2+\dots+(z^D)^2$. However, our construction only defines $Y(a,\z)b$ as a doubly-infinite formal series in $\z$ and $1/\z^2$ (see \cite{N05,BN06} for more details about such series).
Proving the bounds of the poles in $\z^2$ turned out to be challenging and solving this problem is one of the main contributions of this paper.
This is where we essentially used unitarity (see Sect.\ \ref{sse5.6n}),
while the rest of our construction does not rely on any Hilbert space positivity.

It is remarkable that in low space-time dimensions ($1$ or $2$) there are large classes of non-trivial QFT models, while in dimension $4$ and higher only the free field models are mathematically fully developed within the established axiomatic frameworks of Wightman and Haag--Kastler (see \cite{BLOT90,H96,A99}).
With the present work, we suggest an intriguing bridge between lower- and higher-dimensional models. 
The long-term goal of constructing QFT models in higher dimensions is one of the main application prospects of our results.


\subsection*{Notation}

All vector superspaces in this paper are over the field $\C$ of complex numbers.
We denote by $\Z$ and $\ZP$ the sets of integers and non-negative integers, respectively. 
We will mainly follow the notation from our previous works \cite{N05,BN06} but for completeness we list the most important notation here.

Throughout the paper, we fix a positive integer $D$.
We will denote by $\z=(z^1,\dots,z^D)$, $\w=(w^1,\dots,w^D)$, etc.,
vectors of formal variables.
We let
$\z \spr \w$ $=$ $z^1w^1 + \cdots + z^Dw^D$, $\z^2 := \z \spr \z$, and
$\di_{z^{\alpha}} = \di / \di z^{\alpha}$ for $\alpha = 1,\dots,D$.
For a vector space $V$, we denote by $V\llb \z \rrb$
the space of formal power series in $z^1,\dots,z^D$ with coefficients in $V$.
The space $V \llb \z \rrb_{\z^2}$ is the space of series of the form
$(\z^2)^{-N} v(\z)$ where $v (\z) \in V \llb \z \rrb$ and $N \in \ZP$.
Similarly, we will use the spaces
$V \llb \z,\w \rrb_{\z^2 \w^2}$.

A complete introduction to the various spaces of formal series
that appear in the theory of vertex algebras in higher dimensions
can be found in \cite[Sect.\ 1]{N05} and \cite[Sect.\ 1]{BN06}.
In particular, 
we recall that 
$V \llb \z \rrb_{\z^2}$ is a module over $\C \llb \z \rrb_{\z^2}$,
as well as a differential module with derivations
$\di_{z^{\alpha}}$ ($\alpha = 1,\dots,D$).
In dimension $D=1$, we have $\z \equiv z^1$ and we shall just denote it by $z$.
In this case, the space $V \llb \z \rrb_{\z^2}$ reduces to the space of Laurent series $V(\!(z)\!)$ $:=$ $V\llb z\rrb [z^{-1}]$ $\equiv$ $V\llb z\rrb_z$ $\equiv$ $V\llb z\rrb_{z^2}$.

\section{Vertex Algebras in Dimension $1$ and $D$}\label{Se.3}
In this section, we first briefly recall the definition and main properties of vertex algebras in dimension $1$ (for more details, see \cite{FLM88,K98,FB04,LL04}). 
Then we review the theory of vertex algebras in higher dimensions, which were introduced in \cite{N05}
and developed further in \cite{BN06}.

\subsection{Vertex algebras in dimension $1$}\label{Se.2.1-v3}

The notion of a vertex algebra introduced by Borcherds \cite{B86}
provides a rigorous algebraic description of two-dimensional 
chiral conformal field theory (see e.g.\ \cite{BPZ84,G89,DMS97}).
A \textbf{vertex algebra} is a vector superspace $V$ 
(space of states) with a distinguished even vector $\vac\in V$ 
(vacuum vector), together with an even linear map 
(state-field correspondence):
\begin{equation}\label{vert2}
Y(\cdot,z)\cdot \colon V \otimes V \to V(\!(z)\!) = V\llb z\rrb [z^{-1}] \,.
\end{equation}
For every state $a\in V$, we have the \textbf{field}
$Y(a,z) \colon V \to V(\!(z)\!)$. This field can also be viewed as
a formal power series from $(\End V)\llb z,z^{-1} \rrb$, which after applied
to any vector involves only finitely many negative powers of $z$.
The coefficients in front of powers of $z$ in this expansion are known as the
\textbf{modes} of $a$:
\begin{equation}\label{vert4}
Y(a,z)b = \sum_{n=-\infty}^{N_{a,b}-1} a_{(n)}b \, z^{-n-1} 
\,, \qquad a, b \in V \,.
\end{equation}
We can think of $a_{(n)}b$ as an infinite sequence of products that vanish for sufficiently large $n\geqslant N_{a,b}$ (depending
on $a$ and $b$) and are subject to certain axioms.

First, the vacuum vector $\vac$ behaves as a partial identity:
\begin{equation}\label{vert6}
a_{(m)} \vac = \de_{m,-1} a \,, \quad
\vac_{(n)} a = \de_{n,-1} a \,, \qquad
m,n \in \Z \,, \; m\geqslant -1 \,.
\end{equation}
Next, define the \textbf{translation operator} $T \in \End V$ by
$Ta = a_{(-2)} \vac$. Notice that $T$ is even.
Then all fields $Y(a,z)$ are required to be
\textbf{translation covariant}:
\begin{equation}\label{vert7}
[T,Y(a,z)] = \di_z Y(a,z) \,.
\end{equation}
Finally, all fields $Y(a,z)$ in a vertex algebra must be
\textbf{local} with each other, which means that for all
$a,b \in V$ there is a non-negative integer $N_{a,b}$
such that
\begin{equation}\label{vert8}
(z-w)^{N_{a,b}} Y(a,z) Y(b,w) = (z-w)^{N_{a,b}} (-1)^{p_a p_b} Y(b,w) Y(a,z) \,,
\end{equation}
where $p_a$ denotes the parity of $a$.
This completes the definition of a vertex algebra.
We will denote a vertex algebra as $(V,Y,\vac,T)$, or simply as $V$.
Note that our notion of a vertex algebra is a superspace, but for brevity
we call it a vertex algebra instead of a vertex superalgebra, following the convention of \cite{K98}.

We can take the same $N_{a,b}$ in Eqs.\ \eqref{vert4} and \eqref{vert8}.
It follows from the vacuum and translation axioms 
\eqref{vert6}, \eqref{vert7} that
\begin{equation}\label{vert9}
Y(a,z)\vac = e^{zT} a \,, \qquad Y(\vac,z)a = a \,.
\end{equation}
The locality condition \eqref{vert8} implies the
Borcherds \textbf{commutator formula}
$(a,b\in V, \, m,n\in\Z)$:
\begin{equation}\label{vert10}
\begin{split}
[a_{(m)}, b_{(n)}] :=&\; a_{(m)} b_{(n)} - (-1)^{p_a p_b} b_{(n)} a_{(m)} \\
=&\, \sum_{j=0}^{N_{a,b}-1} \binom{m}{j} (a_{(j)}b)_{(m+n-j)} \,,
\end{split}
\end{equation}
which can be rewritten equivalently as
\begin{equation}\label{vert12}
[a_{(m)}, Y(b,w)]
= \sum_{j=0}^{N_{a,b}-1} \binom{m}{j} w^{m-j} \, Y(a_{(j)}b,w) \,.
\end{equation}

Among other important results in the theory of vertex algebras, let us mention Dong's Lemma (\cite[Lemma 3.2]{K98}),
the Goddard Uniqueness Theorem (\cite[Theorem 4.4]{K98}),
and the Kac Existence Theorem (\cite[Theorem 4.5]{K98}).
The last theorem allows one to generate a vertex algebra from a collection of translation covariant local fields.

\subsection{The definition of vertex algebras in dimension $D$}\label{Se.2}

From now on, we will fix a positive integer $D$ and use the notation from the introduction; in particular,
$\z=(z^1,\dots,z^D)$ will be a vector of formal variables.
Let us recall from \cite[Definition 2.1]{N05}, \cite[Definition 4.1]{BN06}
the notion of a \textbf{vertex algebra} 
in dimension $D$. This is a vector superspace $V$, endowed with an even linear map
\begin{equation*}
V \otimes V \to V \llb \z \rrb_{\z^2} \,, \quad
a \otimes b \mapsto Y (a,\z) b \,, 
\end{equation*}
called a \textbf{state-field correspondence},
a system of commuting even endomorphisms $T_1,\dots,T_D$ of $V$
called \textbf{translation operators}, 
and an even vector $\vac \in V$ called the \textbf{vacuum},
subject to the following axioms.

\begin{enumerate}
\medskip
\item[$(a)$] \textit{Locality}:
The formal series
$f_{a,b,c} (\z,\w) := \bigl((\z-\w)^2\bigr)^{N_{a,b}} Y (a,\z) Y (b,\w) c$
takes values in $V \llb \z,\w\rrb_{\z^2\w^2}$ for some non-negative integer $N_{a,b} = N_{b,a}$,
and $f_{a,b,c} (\z,\w)$ $= (-1)^{p_a p_b} f_{b,a,c} (\w,\z)$
for all $a,b,c \in V$, where $p_a$ and $p_b$ are the parities of $a$ and $b$,
respectively.

\medskip
\item[$(b)$] \textit{Translation covariance}:
$[T_{\alpha}, Y(a,\z)] = \di_{z^{\alpha}} Y (a,\z)$ for all
$a \in V$, $\alpha =1,\dots,D$.

\medskip
\item[$(c)$] \textit{Vacuum}:
$T_{\alpha} \vac = 0$, $Y(\vac,\z)a = a$ and $Y(a,\z)\vac \in V\llb \z \rrb$ with
the property that
$Y(a,\z)\vac |_{\z = 0} = a$,
for all $a \in V$ and $\alpha = 1,\dots,D$.
\end{enumerate}

\medskip
We will denote a vertex algebra as $(V,Y,\vac,\T)$ or simply as $V$, where $\T$ is the collection $(T_1,$ $\dots,$ $T_D)$.
In the case $D=1$, the above definition reduces to that of the well-known chiral vertex algebras; see \cite{B86,FLM88,K98,FB04,LL04} and 
\seref{Se.2.1-v3}.

\begin{remark}\label{remdim}
In the above definition of a vertex algebra $V$ in dimension $D$, the word ``dimension" refers to the dimension of the vector
$\z=(z^1,\dots,z^D)$, which is a vector formal variable (i.e., $z^1,\dots,z^D$ are formal variables as is customary in the theory of
vertex algebras; see e.g., \cite{K98}). For some purposes, it is convenient to think of $\z$ as a vector in $\C^D$. 
From that point of view, $D$ denotes a \emph{complex} dimension.
However, it should not be confused with the dimension $\dim_\C V$ of $V$ as a complex vector superspace, which is almost always infinite
except in trivial cases. In quantum field theory (QFT), the vector variable $\z$ represents a point in the complex-analytic continuation of the complexified Minkowski space-time. This continuation exists as a consequence of the most general axiomatic principles of QFT (see for more details \cite[Chapt.\ 1]{K98} or \cite[Sect.\ 9]{N05a}).
\end{remark}

It is convenient to expand the series $Y(a,\z)b$ in a basis similarly to Eq.~(\ref{vert4}). 
To this end, we use bases of harmonic polynomials $\{h_{m,\sigma}\}_{\sigma = 1}^{\mathfrak{h}_m^D}$ $\subset$ $\C_m^{\harm} [\z]$ for each degree $m$ of homogeneity (cf.\ \cite[Sect.\ 1]{N05} or \cite[Sect.\ II.B]{BN06}):
$$
(\di_{z^1}^2 + \cdots + \di_{z^D}^2) \, h_{m,\sigma} (\z) \,=\, 0
\,,\quad
(z^1\di_{z^1} + \cdots + z^D\di_{z^D}) \, h_{m,\sigma} (\z) \,=\, m \, h_{m,\sigma} (\z)
\,.
$$
Then we have
\beq\label{vert4-nn}
Y (a,\z) b \, = \, 
\sum_{n \,=\, -N_{a,b}}^{\infty} \
\mathop{\sum}\limits_{m = 0}^{\infty} \
\mathop{\sum}\limits_{\sigma = 1}^{\mathfrak{h}_m^D} \
a_{\{n,m,\sigma\}}b \ (\z^2)^n \,h_{m,\sigma}(\z)
\,
\eeq
for some elements $a_{\{n,m,\sigma\}}b \in V$
(cf.\ \cite[Eq.\ (2.1)]{N05}, \cite[Eq.\ (2.9)]{BN06}).

\begin{remark}\label{remharm}
The expansion \eqref{vert4-nn} is an extension of the well-known expansion of an arbitrary homogeneous polynomial $f(\z) \in\C[\z]$ of degree $m$ in terms of harmonic polynomials:
\beq\label{harexp}
f(\z) = \sum_{0\leqslant n \leqslant \frac{m}2}
\mathop{\sum}\limits_{\sigma = 1}^{\mathfrak{h}_{m-2n}^D}
f_{n,m-2n,\sigma} \, (\z^2)^n \,h_{m-2n,\sigma}(\z)
\,,
\eeq
for unique coefficients $f_{n,m-2n,\sigma} \in\C$ (see e.g.\ \cite[Lemma 1.1]{N05}).
\end{remark}

Many important results from the theory of vertex algebras in dimension $1$ have generalizations to higher dimensions.
In particular, we have
\begin{equation}\label{vert9-hd}
Y(a,\z)\vac = e^{\z\spr \T} a \,, \qquad Y(\vac,\z)a = a \,,
\end{equation}
where $\z\spr \T$ stands for $\mathop{\sum}\limits_{\alpha=1}^D z^{\alpha} T_{\alpha}$ (cf.\ \cite[Proposition 3.2]{N05}).
The analog of Goddard's Uniqueness Theorem is Theorem 3.1 of \cite{N05}, which is the vertex algebra counterpart of the Reeh--Schlieder Theorem in 
Wightman's approach to quantum field theory.
There are also analogs of Dong's Lemma and the Kac Existence Theorem, which are Proposition 3.4 and Theorem 4.1 of \cite{N05}, respectively.

\subsection{Restriction to lower dimensions}\label{sec2.3}

Now let us assume that $D\geqslant2$ and fix a positive integer $D' < D$.
We will denote vectors of dimension $D'$ by $\z'$, $\w'$, etc.
Notice that
\begin{equation*}
(\z')^2 = \z^2 \,, \; (\z'-\w')^2 = (\z-\w)^2
\quad\text{for}\quad \z = (\z',0,\dots,0) \,, \;
\w = (\w',0,\dots,0) \,.
\end{equation*}
This induces a restriction morphism 
$V \llb \z \rrb_{\z^2} \to V \llb \z' \rrb_{\z'{}^2}$,
which agrees with the module actions of 
$\C \llb \z \rrb_{\z^2}$ and $\C \llb \z' \rrb_{\z'{}^2}$
under the algebra homomorphism
$\C \llb \z \rrb_{\z^2} \to \C \llb \z' \rrb_{\z'{}^2}$ induced also by the restriction.
Moreover, the operations of taking a partial derivative $\di_{z^\alpha}$ $(\alpha=1,\dots,D')$ and evaluation $\z = (\z',0,\dots,0)$ commute.
Hence, the restriction of the state-field correspondence
\beq\label{Eq.2.1}
Y_{D'} (a,\z') b := Y (a,\z) b \big|_{\z = (\z',0,\dots,0)}
\eeq
makes sense, and
$V$ endowed with
the restricted state-field correspondence (\ref{Eq.2.1}), the translation endomorphisms
$T_1,\dots,T_{D'}$ and the same vacuum $\vac \in V$ is a vertex algebra in dimension $D'$.
It is called a $D'$-dimensional \textbf{restriction} of $V$.

\subsection{Conformal vertex algebras}\label{Se.2.3}

The complex Lie algebra $\cD$ of \textbf{infinitesimal conformal transformations} of 
$\C^D$ is spanned by the infinitesimal translations $T_{\alpha}$,
dilation $H$, rotations $\Omega_{\alpha\beta} = -\Omega_{\beta\alpha}$,
and special conformal transformations $C_{\alpha}$. These generators
satisfy the following relations:
\begin{equation}\label{cDcomm}
\begin{split}
[H,\Omega_{\alpha\beta}] &=[T_{\alpha},T_{\beta}]=[C_{\alpha},C_{\beta}]=0 \,, \\
[\Omega_{\alpha\beta},T_{\gamma}] &=\delta_{\alpha\gamma} T_{\beta}-\delta_{\beta\gamma} T_{\alpha} \,, \qquad\quad\;\;\;
[\Omega_{\alpha\beta},C_{\gamma}] =\delta_{\alpha\gamma} C_{\beta}-\delta_{\beta\gamma} C_{\alpha} \,, \\
[H,T_{\alpha}] &=T_{\alpha}, \quad\; [H,C_{\alpha}]=-C_{\alpha} \,, \quad\;\;
[T_{\alpha},C_{\beta}]=2\delta_{\alpha\beta} H - 2\Omega_{\alpha\beta} \,, \\
[\Omega_{\alpha_1\beta_1},\Omega_{\alpha_2\beta_2}]
&=\delta_{\alpha_1\alpha_2}\Omega_{\beta_1\beta_2}
+
\delta_{\beta_1\beta_2}\Omega_{\alpha_1\alpha_2}
- \delta_{\alpha_1\beta_2}\Omega_{\beta_1\alpha_2}-\delta_{\beta_1\alpha_2}\Omega_{\alpha_1\beta_2} \,. 
\end{split}
\end{equation}
Note that we have an isomorphism of Lie algebras $\cD\cong\so(D+2,\C)$,
and the infinitesimal rotations $\Omega_{\alpha\beta}$ span a subalgebra isomorphic to $\so(D,\C)$.

A \textbf{conformal vertex algebra} \cite[Definition 7.1]{N05} is a vertex algebra whose underlying vector superspace is a $\mathfrak{c}_D$--module,
which extends the action of the infinitesimal translations $T_{\alpha}$ and satisfies the following properties.

\begin{enumerate}
\medskip
\item[$(a)$] \textit{Integrability}: Every vector
is contained in a finite-di\-men\-si\-o\-nal subspace invariant under all $\Om_{\alpha\beta}$.

\medskip
\item[$(b)$] \textit{Energy positivity}:
$H$ is diagonalizable with eigenvalues in $\frac12\ZP$.

\medskip
\item[$(c)$] \textit{Conformal equivariance}:
\begin{align}\label{CVA1}
\bigl[H,Y(a,\z)\bigr] &= Y(Ha,\z) + \z\cdot\di_\z Y(a,\z),
\\ \label{CVA2}
\bigl[\Om_{\alpha\beta},Y(a,\z)\bigr] &= Y(\Om_{\alpha\beta}a,\z)
+ (z^\alpha \di_{z^\beta} - z^\beta \di_{z^\alpha}) Y(a,\z),
\\ \label{CVA3}
\bigl[C_\alpha,Y(a,\z)\bigr] &= Y(C_\alpha a,\z) - 2 z^\alpha Y(Ha,\z)
- 2 \sum_{\beta=1}^D z^\beta Y(\Om_{\alpha\beta} a,\z)
 \\
&+ (\z^2 \di_{z^\alpha} - 2 z^\alpha \z \cdot\di_\z) Y(a,\z). \nn
\end{align}

\medskip
\item[$(d)$] \textit{Vacuum}: $\mathfrak{c}_D \vac = 0$.
\end{enumerate}

\medskip

When we want to specify all the structure of a conformal vertex algebra in dimension $D$, we will denote it as
$(V,Y,\vac,\mathfrak{c}_D)$.

\begin{remark}
If we restrict the above definition to the case $D=1$, then
$\mathfrak{c}_1 \cong \so(3,\C) \cong \sl(2,\C)$ and our notion
of a conformal vertex algebra coincides with that of a
{M\"obius} vertex algebra from \cite{K98}.
\end{remark}

\begin{remark}
It follows from \thref{tloc1} below that the above condition $(d)$ is a consequence of condition $(c)$.
\end{remark}

We introduce a real form of $\mathfrak{c}_D$ by the following anti-linear anti-involution:
\beq\label{eq2.1}
H^* \, = \, H,\quad \Omega_{\alpha\beta}^* \, = \, -\Omega_{\alpha\beta},\quad T_{\alpha}^* \, = \, C_{\alpha},\quad C_{\alpha}^* \, = \, T_{\alpha} \,,
\eeq
so that $(\lambda A)^* = \bar\lambda A^*$, $A^{**} = A$, and $[A,B]^* = [B^*,A^*]$ for all $\lambda\in\C$ and $A,B\in\cD$.
A representation of $\cD$ on a complex vector space $V$ is called \textbf{unitary} if $V$ is equipped with a positive-definite Hermitian product such that
the above anti-involution coincides with the Hermitian conjugation.

\section{Local Endomorphisms of Vertex Algebras}\label{sloc}

In this section, we introduce the notion of a local endomorphism of a vertex algebra (in any dimension)
and present a formula for its commutator with the fields of the vertex algebra. In the case of dimension $1$,
local endomorphisms are related to the pseudoderivations of \cite{EK00}.
We consider the example of a local action of $\cD$, which will be important for the rest of the paper.

\subsection{Local endomorphisms}\label{locend}

\begin{definition}\label{dloc1}
Let $(V,Y,\vac,\T)$ be a vertex algebra in dimension $D$ (including the case $D=1$).
We say that a linear operator $X\in\End V$ is \textbf{local} if\/ $[X,Y(a,\z)]$ is local with respect to $Y(b,\z)$ for all $a,b \in V$. 
\end{definition}

As before, brackets here denote the \emph{supercommutator}; for example,
\begin{equation}\label{supercom}
[X,Y(a,\z)] := X \, Y(a,\z) - (-1)^{p_X p_a} Y(a,\z) \, X \,,
\end{equation}
where $p_X$ and $p_a$ are the parities of $X$ and $a$, respectively.
It is easy to see that $[X,Y(a,\z)]$
is again a field for any $X\in\End V$, i.e., $[X,Y(a,\z)]v \in V\llb \z \rrb_{\z^2}$ for every $v\in V$. 
Explicitly, the condition that $X$ is \emph{local} means that, for all $a,b \in V$, there is a non-negative integer $N_{a,b}^X$ such that
\begin{equation}\label{loc1}
\bigl((\z-\w)^2\bigr)^{N_{a,b}^X} \bigl[[X,Y(a,\z)], Y(b,\w)\bigr] = 0 \,.
\end{equation}

\begin{remark}
\begin{enumerate}
\item[$(a)$]
In general, the space of all local endomorphism of\/ $V$ is not a Lie superalgebra under the supercommutator. 
However, under some additional assumptions like the conditions of Theorem \ref{tloc1} below, one obtains a Lie superalgebra.

\medskip
\item[$(b)$]
In any vertex algebra, the translation operators $T_\alpha$ are local. More generally, any derivation of $V$ is a local endomorphism.

\medskip
\item[$(c)$]
Other examples of local endomorphisms in dimension $D=1$ are provided by all modes of fields: the commutator formula \eqref{vert12} implies that $a_{(m)} \in\End V$ 
are local for $a\in V$, $m\in\Z$.

\medskip
\item[$(d)$]
Suppose that the vertex algebra $V$ is generated by a collection of fields $\{\ph_a(\z)\}$.
It follows from Dong's Lemma that a linear operator $X\in\End V$ is local if and only if $[X,\ph_a(\z)]$ is local with respect to $\ph_b(\z)$ for all $a,b$.
\end{enumerate}
\end{remark}

\begin{theorem}\label{tloc1}
For a vertex algebra\/ $V$ and a linear operator\/ $X$ of\/ $V$, the following two conditions are equivalent$:$
\begin{enumerate}
\item[$(i)$]
$X$ is a local endomorphism of\/ $V$ such that\/
$X\vac=0$ and\/ $(e^{-\ad(\z\spr\T)} X)a \in V[\z]$ for every\/ $a\in V$.

\medskip
\item[$(ii)$]
There is a linear map\/ $X(\z)\colon V\to V[\z]$ such that\/ $X(0) = X$ and\/
$[X,Y(a,\z)] = Y(X(\z)a, \z)$ for all\/ $a\in V$.
\end{enumerate}
In either of these two cases, one has\/ $X(\z)=e^{-\ad(\z\spr\T)} X$ and
\begin{equation}\label{loc2}
[X,Y(a,\z)] = Y\bigl((e^{-\ad (\z\spr\T)} X)a, \z\bigr) \,.
\end{equation}
\end{theorem}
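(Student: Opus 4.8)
The plan is to prove the two implications in the reverse of the listed order, since the explicit formula $X(\z)=e^{-\ad(\z\spr\T)}X$ emerges most directly from condition $(ii)$. Throughout I will use two facts: the vacuum identity $Y(a,\z)\vac=e^{\z\spr\T}a$ from \eqref{vert9-hd}, and the elementary conjugation identity $e^{-\z\spr\T}\,X\,e^{\z\spr\T}=(e^{-\ad(\z\spr\T)}X)$ as a formal power series in $\z$, which holds for any $X\in\End V$ because $T_1,\dots,T_D$ are commuting even operators (so supercommutators coincide with ordinary ones and the standard expansion $e^{-A}Xe^{A}=e^{-\ad A}X$ applies).

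Assume $(ii)$. Taking $a=\vac$ and using $Y(\vac,\z)=\mathrm{id}$ gives $0=[X,Y(\vac,\z)]=Y(X(\z)\vac,\z)$; applying this to $\vac$ and invoking the vacuum identity yields $e^{\z\spr\T}X(\z)\vac=0$, whence $X(\z)\vac=0$ and in particular $X\vac=X(0)\vac=0$. For general $a$, apply $[X,Y(a,\z)]=Y(X(\z)a,\z)$ to $\vac$: the left side equals $X\,e^{\z\spr\T}a$ (the term $Y(a,\z)X\vac$ drops out because $X\vac=0$), while the right side equals $e^{\z\spr\T}X(\z)a$. Comparing gives $X(\z)a=e^{-\z\spr\T}X\,e^{\z\spr\T}a=(e^{-\ad(\z\spr\T)}X)a$, which is the asserted formula; since $X(\z)a\in V[\z]$ by hypothesis, the polynomiality condition in $(i)$ follows, and locality of $X$ is immediate because $[X,Y(a,\z)]=Y(X(\z)a,\z)$ is a finite $\C[\z]$-linear combination of the fields $Y(v,\z)$, each local with respect to every $Y(b,\w)$.

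Assume $(i)$, and set $X(\z):=e^{-\ad(\z\spr\T)}X$, so that $X(0)=X$ and, by hypothesis, $X(\z)a\in V[\z]$ for every $a$. Writing $X(\z)a$ as a finite sum $\sum_{\mathbf{k}}v_{\mathbf{k}}\z^{\mathbf{k}}$, the expression $Y(X(\z)a,\z):=\sum_{\mathbf{k}}\z^{\mathbf{k}}Y(v_{\mathbf{k}},\z)$ is a genuine field, local with respect to every $Y(b,\w)$. The endomorphism $[X,Y(a,\z)]$ is likewise local with respect to every $Y(b,\w)$, since $X$ is a local endomorphism. I then compute both fields on the vacuum: using $X\vac=0$ one gets $[X,Y(a,\z)]\vac=X\,e^{\z\spr\T}a$, while $Y(X(\z)a,\z)\vac=e^{\z\spr\T}X(\z)a=X\,e^{\z\spr\T}a$ by the conjugation identity. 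Hence the difference $B(\z):=[X,Y(a,\z)]-Y(X(\z)a,\z)$ is a field, local with respect to all the fields of $V$, satisfying $B(\z)\vac=0$. The uniqueness theorem (the Reeh--Schlieder/Goddard analog, Theorem 3.1 of \cite{N05}), in the form that a field local with respect to all fields of $V$ and annihilating $\vac$ must vanish, then gives $B(\z)=0$, which is exactly $(ii)$ together with \eqref{loc2}.

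The delicate point is the final appeal to uniqueness in its vanishing form. Concretely, for each $b\in V$ one has $B(\z)Y(b,\w)\vac=[B(\z),Y(b,\w)]\vac$ because $B(\z)\vac=0$, and locality provides $N$ with $\bigl((\z-\w)^2\bigr)^{N}[B(\z),Y(b,\w)]=0$; one must then cancel the factor $\bigl((\z-\w)^2\bigr)^{N}$ --- which is legitimate because it is not a zero divisor in the relevant spaces of formal series of \cite{N05,BN06} --- and specialize $\w=0$, using $Y(b,\w)\vac=e^{\w\spr\T}b$, to conclude $B(\z)b=0$ for all $b$. This formal-series bookkeeping is the only genuinely technical ingredient; note also that the polynomiality hypothesis in $(i)$ is used precisely to guarantee that $Y(X(\z)a,\z)$ is a finite combination of honest fields and therefore local, without which the whole argument would break down.
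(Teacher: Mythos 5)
Your proof is correct, and for the direction $(ii)\Rightarrow(i)$ it takes a genuinely different route from the paper's. For $(i)\Rightarrow(ii)$ the two arguments coincide: compute $[X,Y(a,\z)]\vac = Xe^{\z\spr\T}a = e^{\z\spr\T}\bigl((e^{-\ad(\z\spr\T)}X)a\bigr)$ using $X\vac=0$ and \eqref{vert9-hd}, note that both $[X,Y(a,\z)]$ and $Y(X(\z)a,\z)$ are local with respect to all fields, and invoke the higher-dimensional Goddard uniqueness theorem \cite[Theorem 3.1]{N05}; the only difference is that you unpack the proof of that uniqueness statement in its vanishing form, whereas the paper simply cites it. For $(ii)\Rightarrow(i)$, the paper proceeds by repeatedly applying $\ad T_\alpha$ to the identity $[X,Y(a,\z)]=Y(X(\z)a,\z)$ and using translation covariance to obtain \eqref{loc7}; the polynomiality of $X(\z)a$ then forces $\bigl((\ad T_{\alpha_1})\cdots(\ad T_{\alpha_k})X\bigr)a=0$ for large $k$ (after evaluating on $\vac$ and setting $\z=0$), which gives $(e^{-\ad(\z\spr\T)}X)a\in V[\z]$. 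You instead apply the identity of $(ii)$ directly to the vacuum, obtaining $Xe^{\z\spr\T}a = e^{\z\spr\T}X(\z)a$ and hence $X(\z)=e^{-\z\spr\T}X\,e^{\z\spr\T}=e^{-\ad(\z\spr\T)}X$, so polynomiality is simply inherited from the hypothesis on $X(\z)$. Your route is shorter and has the added benefit of establishing at once the identification $X(\z)=e^{-\ad(\z\spr\T)}X$ asserted in the final sentence of the theorem, which in the paper's proof of $(ii)\Rightarrow(i)$ is left implicit. What the paper's longer computation buys is the intermediate formula \eqref{loc7} itself: Corollary \ref{ploc3} (the pseudoderivation relations \eqref{loc5}, \eqref{loc6}) is deduced by the paper directly from \eqref{loc7}, so under your argument that corollary would need a separate, though short, derivation.
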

\begin{proof}
If $(i)$ holds, then $X(\z):=e^{-\ad(\z\spr\T)} X$ obviously satisfies $X(0) = X$ and $X(\z)a\in V[\z]$ for all $a\in V$ by assumption. Hence, $(ii)$ will follow from formula (\ref{loc2}).
To prove Eq.\ (\ref{loc2}), we observe that $[X,Y(a,\z)]$ is a field on $V$, which is local with respect to all fields $Y(b,\z)$ for $b\in V$.
Furthermore, by Eq.\ \eqref{vert9-hd}, we have
\[
[X,Y(a,\z)] \vac = X Y(a,\z)\vac = X e^{\z \spr \T} a = e^{\z\spr\T} \bigl((e^{-\ad (\z\spr\T)} X)a \bigr) \,.
\]
Hence, Eq.\ (\ref{loc2}) follows from the higher-di\-men\-sio\-nal analog of 
Goddard's Uniqueness Theorem \cite[Theorem 3.1]{N05}.

Conversely, suppose that $(ii)$ holds. Then $[X,Y(a,\z)] = Y(X(\z)a, \z)$ implies that $X$ is a local endomorphism of $V$.
Setting $a=\vac$ in this formula, we get
\[
0 = [X,id] = [X,Y(\vac,\z)] = Y(X(\z)\vac, \z) \,.
\]
Hence, applying the right-hand side to $\vac$ and setting $\z=0$, we obtain $X\vac = X(0)\vac = 0$.
To finish the proof, it remains to show that $(e^{-\ad(\z\spr\T)} X)a \in V[\z]$ for all $a\in V$.

Fix an index $\alpha$, apply the operator $\ad T_{\alpha}$ to both sides of formula
$[X,Y(a,\z)] = Y(X(\z)a, \z)$ and use the translation covariance $[T_{\alpha}, Y(a,\z)] = \di_{z^{\alpha}} Y (a,\z)$.
We obtain:
\[
\bigl[[T_{\alpha},X],Y(a,\z)\bigr] + [X,\di_{z^{\alpha}} Y(a,\z)] = \di_{w^{\alpha}} Y(X(\z)a, \w) \big|_{\w=\z} \,.
\]
The second summand above is equal to
\[
[X,\di_{z^{\alpha}} Y(a,\z)] = \di_{z^{\alpha}} [X,Y(a,\z)] = \di_{z^{\alpha}} \bigl(Y(X(\z)a, \z) \bigr) \,.
\]
Therefore,
\[
\bigl[[T_{\alpha},X],Y(a,\z)\bigr]  = -Y\bigl((\di_{z^{\alpha}} X(\z)a), \z\bigr) \,.
\]
Repeating this process, we see that
\beq\label{loc7}
\bigl[ (\ad T_{\alpha_1}) \cdots (\ad T_{\alpha_k}) X, Y(a,\z)\bigr]  = (-1)^k Y\bigl((\di_{z^{\alpha_1}} \cdots \di_{z^{\alpha_k}} X(\z)a), \z\bigr) 
\eeq
for any $\alpha_1,\dots,\alpha_k \in\{1,\dots,D\}$. 
Since $X(\z)a$ is a polynomial in $\z$, there exists a positive integer $N_a$ (depending on $a$) such that the right-hand side above is zero
for all $k\geqslant N_a$ and any choice of $\alpha_1,\dots,\alpha_k$. This implies that
\[
\bigl[ (\ad T_{\alpha_1}) \cdots (\ad T_{\alpha_k}) X, Y(a,\z)\bigr]  = 0 \,, \qquad k\geqslant N_a \,.
\]
Applying this identity to the vacuum vector $\vac$ and setting $\z=0$, we obtain
\[
\bigl( (\ad T_{\alpha_1}) \cdots (\ad T_{\alpha_k}) X\bigr) a = 0 \,, \qquad k\geqslant N_a \,,
\]
where we used that $X\vac=T_\alpha\vac=0$ and $Y(a,\z)\vac |_{\z = 0} = a$.
Hence, $(e^{-\ad(\z\spr\T)} X)a \in V[\z]$.
\end{proof}

As is customary in the theory of vertex algebras, let us denote by $\io_{\z,\u}$ the Taylor expansion
\begin{equation}\label{loc4}
\io_{\z,\u} f(\z+\u) = e^{\u \cdot \di_{\z}} f(\z) \,,
\end{equation}
for polynomials $f(\z)$ or more generally Laurent series in $\z$.
The next result is a consequence of Eq.\ \eqref{loc7} from the proof of Theorem \ref{tloc1}.

\begin{corollary}\label{ploc3}
With the notation of Theorem \ref{tloc1}, we have{\rm:}
\begin{align}\label{loc5}
[T_\alpha,X(\u)] &= -\di_{u^\alpha} X(\u) \,, \qquad \alpha=1,\dots,D \,,
\\ \label{loc6}
[X(\u),Y(a,\z)] &= \io_{\z,\u} Y(X(\z+\u)a, \z) \,.
\end{align}
\end{corollary}

In dimension $D=1$, this corollary means that the map $X(\u)$ is a {pseudoderivation} of $V$ in the sense of Etingof and Kazhdan \cite{EK00} (see also \cite{L05}).

\subsection{Local action of $\cD$}\label{Se.4.1}

In any conformal vertex algebra $V$ in dimension $D$, the conformal equivariance \eqref{CVA1}--\eqref{CVA3} implies that the action of $\cD$ on $V$ is a \textbf{local action}, i.e., any $X\in\cD$ acts as a local endomorphism of $V$.
In fact, we have:

\begin{proposition}\label{ploc1}
Let\/ $(V,Y,\vac,\T)$ be a vertex algebra in dimension\/ $D$, equipped with an action of\/ $\cD$ such that
each\/ $T_\alpha\in\cD$ acts as the translation operator\/ $T_\alpha$ of\/ $V$
$(\alpha=1,\dots,D)$.
Then the action of\/ $\cD$ is local and annihilates the vacuum\/ $(\mathfrak{c}_D \vac = 0)$
if and only if the conformal equivariance relations \eqref{CVA1}--\eqref{CVA3} hold.
\end{proposition}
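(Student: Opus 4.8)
The plan is to deduce Proposition \ref{ploc1} from Theorem \ref{tloc1} applied to each generator $X\in\cD$. The key observation is that an element $X\in\cD$ acts as a local endomorphism with $X\vac=0$ precisely when it satisfies a commutation relation of the form $[X,Y(a,\z)]=Y(X(\z)a,\z)$ for a polynomial-valued map $X(\z)$, and that conformal equivariance \eqref{CVA1}--\eqref{CVA3} is exactly the statement that such a relation holds with the specific $X(\z)=e^{-\ad(\z\spr\T)}X$ dictated by the Lie algebra structure of $\cD$. So the whole proposition reduces to computing $e^{-\ad(\z\spr\T)}X$ for $X=H,\Om_{\alpha\beta},C_\alpha$ (the case $X=T_\alpha$ being trivially translation covariance) and matching it against the right-hand sides of \eqref{CVA1}--\eqref{CVA3}.

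First I would treat the forward direction. Assume the action of $\cD$ is local and $\mathfrak{c}_D\vac=0$. Since each generator $T_\beta\in\cD$ already acts as the translation operator, I need to verify the hypotheses of part $(i)$ of Theorem \ref{tloc1} for $X\in\{H,\Om_{\alpha\beta},C_\alpha\}$: locality is assumed, $X\vac=0$ holds by $\mathfrak{c}_D\vac=0$, and the polynomiality $(e^{-\ad(\z\spr\T)}X)a\in V[\z]$ follows because $\ad(\z\spr\T)$ acts on each generator through the bracket relations \eqref{cDcomm}, which show that repeated application of $\ad T_\beta$ terminates: from \eqref{cDcomm} one has $\ad(T_\beta)H=-T_\beta$, $\ad(T_\beta)\Om_{\alpha\gamma}=-\delta_{\alpha\beta}T_\gamma+\delta_{\gamma\beta}T_\alpha$, and $\ad(T_\beta)C_\alpha=-2\delta_{\alpha\beta}H+2\Om_{\alpha\beta}$, so that $(\ad(\z\spr\T))^k$ annihilates each generator for $k\geqslant 3$. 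Thus Theorem \ref{tloc1} gives $[X,Y(a,\z)]=Y((e^{-\ad(\z\spr\T)}X)a,\z)$. Computing the finite exponential series explicitly from these brackets yields, for instance, $e^{-\ad(\z\spr\T)}H=H+\z\spr\T$, which reproduces \eqref{CVA1} upon expanding $Y((\z\spr\T)a,\z)=\z\cdot\di_\z Y(a,\z)$ via translation covariance; the rotation and special-conformal cases are analogous, with the quadratic term in $C_\alpha$ coming from the degree-two part of the exponential.

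For the converse, I would assume \eqref{CVA1}--\eqref{CVA3} and read them as asserting, for each generator $X$, a relation $[X,Y(a,\z)]=Y(X(\z)a,\z)$ where $X(\z)$ is the polynomial-valued map whose constant term is $X$ (here one uses translation covariance \eqref{vert7}, i.e. part $(b)$ of the vertex-algebra axioms, to absorb the differential-operator terms $\z\cdot\di_\z$, $z^\alpha\di_{z^\beta}-z^\beta\di_{z^\alpha}$, etc.\ into $Y(\,\cdot\,,\z)$ of the corresponding $\cD$-action, matching the bracket computations above). This is precisely condition $(ii)$ of Theorem \ref{tloc1}, which immediately gives that $X$ is a local endomorphism and that $X\vac=0$; ranging over the generators yields locality of the full $\cD$-action and $\mathfrak{c}_D\vac=0$.

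The main obstacle is bookkeeping rather than conceptual: correctly computing the finite Taylor series $e^{-\ad(\z\spr\T)}X$ for each generator and verifying, term by term, that the resulting $Y((\z\spr\T)^k X\,a/k!\,,\z)$ match the inhomogeneous differential-operator terms on the right of \eqref{CVA1}--\eqref{CVA3}. The trickiest case is $C_\alpha$, where one must keep the second-order term $\tfrac12(\z\spr\T)^2$ in the exponential and check that it produces exactly the $\z^2\di_{z^\alpha}-2z^\alpha\z\cdot\di_\z$ piece together with the correct $-2z^\alpha Y(Ha,\z)$ and $-2\sum_\beta z^\beta Y(\Om_{\alpha\beta}a,\z)$ linear terms; getting the signs and the $\delta$-symbol contractions from \eqref{cDcomm} right is the delicate part. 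A clean way to organize this is to verify once and for all that the map $\z\mapsto e^{-\ad(\z\spr\T)}X$ satisfies the first-order system \eqref{loc5} of Corollary \ref{ploc3} together with the initial condition $X(0)=X$, and to observe that the right-hand sides of \eqref{CVA1}--\eqref{CVA3} define maps satisfying the same system, so that uniqueness of solutions forces the identification.
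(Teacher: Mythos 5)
Your plan is essentially identical to the paper's proof: both directions are obtained by applying Theorem \ref{tloc1} to each generator $X\in\{H,\Om_{\alpha\beta},C_\alpha\}$, using the nilpotency of $\ad(\z\spr\T)$ on $\cD$ for the polynomiality hypothesis, and then matching the finite exponentials $e^{-\ad(\z\spr\T)}X$, computed from \eqref{cDcomm}, against \eqref{CVA1}--\eqref{CVA3}. The paper's proof is exactly this computation, recorded in three lines.

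Two corrections, one of which matters. First, your bracket for the special conformal generator has a sign error: from \eqref{cDcomm}, $[T_\beta,C_\alpha]=2\delta_{\beta\alpha}H-2\Om_{\beta\alpha}=2\delta_{\alpha\beta}H+2\Om_{\alpha\beta}$, so $\ad(T_\beta)C_\alpha=+2\delta_{\alpha\beta}H+2\Om_{\alpha\beta}$, not $-2\delta_{\alpha\beta}H+2\Om_{\alpha\beta}$ as you wrote. With your sign, the first-order term of $e^{-\ad(\z\spr\T)}C_\alpha$ would be $+2z^\alpha H-2\sum_\beta z^\beta\Om_{\alpha\beta}$, and (after the second-order term also degenerates) the exponential would fail to reproduce the $-2z^\alpha Y(Ha,\z)$ and $-2z^\alpha\,\z\cdot\di_\z$ pieces of \eqref{CVA3}; with the correct sign one gets $e^{-\ad(\z\spr\T)}C_\alpha=C_\alpha-2z^\alpha H-2\sum_\beta z^\beta\Om_{\alpha\beta}+\z^2T_\alpha-2z^\alpha(\z\spr\T)$, which matches \eqref{CVA3} exactly. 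Since your whole argument reduces to this matching, the slip must be fixed, though it is a computational repair rather than a conceptual one. Second, a minor imprecision: to convert $Y((\z\spr\T)a,\z)$ into $\z\cdot\di_\z Y(a,\z)$ (and conversely in your converse direction), what you need is the identity $Y(T_\alpha a,\z)=\di_{z^\alpha}Y(a,\z)$ on states --- a consequence of the axioms via \eqref{vert9-hd} and Goddard uniqueness --- rather than the commutator axiom $[T_\alpha,Y(a,\z)]=\di_{z^\alpha}Y(a,\z)$ itself; this is the identity the paper explicitly invokes to finish its proof.
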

\begin{proof}
This follows from \thref{tloc1} and the commutation relations \eqref{cDcomm}. Indeed, from \eqref{cDcomm}, we find:
\begin{align*}
e^{-\ad(\z\spr\T)} H &= H+\z\spr\T \,, \\
e^{-\ad(\z\spr\T)} \Om_{\alpha\beta} &= \Om_{\alpha\beta} + z^\alpha T_\beta - z^\beta T_\alpha \,, \\
e^{-\ad(\z\spr\T)} C_\alpha &= C_\alpha - 2 z^\alpha H - 2 \sum_{\beta=1}^D z^\beta \Om_{\alpha\beta} + \z^2 \, T_\alpha - 2 z^\alpha (\z\spr\T) \,,
\end{align*}
which together with $Y(T_\alpha a,\z)=\di_{z^\alpha} Y(a,\z)$ completes the proof.
\end{proof}

We will also investigate the case of a local action of $\cD$ on a vertex algebra in dimension $1$.

\begin{proposition}\label{ploc2}
Let\/ $(V,Y_1,\vac,T_1)$ be a vertex algebra equipped with a local action of\/ $\mathfrak{c}_D$ 
such that it annihilates the vacuum\/ $(\mathfrak{c}_D \vac = 0)$
and\/ $T_1 \in \mathfrak{c}_D$ is the translation operator of\/ $V$. Then we have $(2\le\alpha,\beta\leqslant D)$$:$
\begin{align*}
\bigl[T_{\alpha},Y_1(a,\xo)\bigr] &= Y_1(T_{\alpha}a,\xo), \qquad\;\;
\bigl[H,Y_1(a,\xo)\bigr] = Y_1(Ha,\xo) + \xo\di_\xo Y_1(a,\xo), \\
\bigl[\Om_{\alpha\beta},Y_1(a,\xo)\bigr] &= Y_1(\Om_{\alpha\beta}a,\xo), \quad
\bigl[\Om_{1\alpha},Y_1(a,\xo)\bigr] = Y_1(\Om_{1\alpha}a,\xo)
+ \xo Y_1(T_{\alpha} a,\xo), \\
\bigl[C_\alpha,Y_1(a,\xo)\bigr] &= Y_1(C_\alpha a,\xo)
+ 2 \xo Y_1(\Om_{1\alpha}a,\xo) + \xo^2 Y_1(T_\alpha a,\xo), \\
\bigl[C_1,Y_1(a,\xo)\bigr] &= Y_1(C_1 a,\xo)
- 2 \xo Y_1(Ha,\xo) - \xo^2\di_\xo Y_1(a,\xo).
\end{align*}
Conversely, the above commutation relations imply that the action of\/ $\cD$ is local and annihilates the vacuum.
\end{proposition}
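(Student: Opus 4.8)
The plan is to obtain all six identities as instances of \thref{tloc1} applied to the dimension-$1$ vertex algebra $(V,Y_1,\vac,T_1)$, in which the \emph{only} translation operator is $T_1$, so that the combination $\z\spr\T$ appearing in \thref{tloc1} specializes to $\xo\,T_1$. Note that the other generators $T_\alpha$ ($\alpha\geqslant2$) enter here not as translation operators but merely as local endomorphisms of this one-dimensional vertex algebra. For each generator $X\in\{T_\alpha,H,\Om_{\alpha\beta},\Om_{1\alpha},C_\alpha,C_1\}$ the hypotheses give that $X$ is a local endomorphism with $X\vac=0$, so to invoke condition $(i)$ of \thref{tloc1} it remains only to verify that $(e^{-\ad(\xo T_1)}X)a\in V[\xo]$. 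This is automatic: by \eqref{cDcomm} the operator $\ad T_1$ raises the $H$-eigenvalue by $1$, and the $H$-eigenvalues on $\cD$ lie in $\{-1,0,1\}$, so $\ad T_1$ is nilpotent on the finite-dimensional space $\cD$; hence $e^{-\xo\ad T_1}X$ is a polynomial in $\xo$ with coefficients in $\cD$ and sends $a$ into $V[\xo]$. Theorem \ref{tloc1} then yields $[X,Y_1(a,\xo)]=Y_1\bigl((e^{-\xo\ad T_1}X)a,\xo\bigr)$.

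The next step is to compute $e^{-\xo\ad T_1}X$ for each generator. These are precisely the expressions obtained in the proof of \prref{ploc1}, now specialized to $\z=(\xo,0,\dots,0)$; explicitly, the relations \eqref{cDcomm} give (for $2\leqslant\alpha,\beta\leqslant D$)
\begin{align*}
e^{-\xo\ad T_1}T_\alpha &= T_\alpha, & e^{-\xo\ad T_1}H &= H+\xo T_1, \\
e^{-\xo\ad T_1}\Om_{\alpha\beta} &= \Om_{\alpha\beta}, & e^{-\xo\ad T_1}\Om_{1\alpha} &= \Om_{1\alpha}+\xo T_\alpha, \\
e^{-\xo\ad T_1}C_\alpha &= C_\alpha+2\xo\Om_{1\alpha}+\xo^2 T_\alpha, & e^{-\xo\ad T_1}C_1 &= C_1-2\xo H-\xo^2 T_1.
\end{align*}
Substituting these into \eqref{loc2} and rewriting the $T_1$-terms via translation covariance, $Y_1(T_1 a,\xo)=\di_\xo Y_1(a,\xo)$, produces exactly the six displayed commutation relations.

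For the converse I would run \thref{tloc1} in the direction $(ii)\Rightarrow(i)$. Each displayed relation can be put in the form $[X,Y_1(a,\xo)]=Y_1(X(\xo)a,\xo)$ for a polynomial map $X(\xo)\colon V\to V[\xo]$ with $X(0)=X$: one reads off $X(\xo)$ directly from the right-hand sides, using $\di_\xo Y_1(a,\xo)=Y_1(T_1 a,\xo)$ to absorb the derivative terms in the $H$- and $C_1$-cases, recovering the same polynomials $e^{-\xo\ad T_1}X$ listed above. Theorem \ref{tloc1} then guarantees that each such generator $X$ is a local endomorphism with $X\vac=0$.

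Finally, I would promote these conclusions from the spanning generators to the whole of $\cD$. Since the listed elements span $\cD$, the identity $\cD\vac=0$ is immediate. For locality, the point is that the local endomorphisms annihilating $\vac$ form a linear subspace of $\End V$: for a finite linear combination one takes $N$ to be the maximum of the individual bounds $N_{a,b}^{X}$ in \eqref{loc1}, and the corresponding power of $(\xo-\y)^2$ annihilates the double commutator term by term. Hence every element of $\cD$ acts as a local endomorphism annihilating the vacuum. Overall there is no genuinely hard step here: the analytic content is entirely contained in \thref{tloc1}, and the main points requiring care are the (easy) verification of the integrability condition $(e^{-\ad(\xo T_1)}X)a\in V[\xo]$ via nilpotency of $\ad T_1$, and the passage from generators to all of $\cD$ by linearity of locality.
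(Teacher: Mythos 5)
Your proposal is correct and takes essentially the same approach as the paper: both invoke Theorem \ref{tloc1} with $T=T_1$ (justified by nilpotency of $\ad T_1$ on $\cD$), compute the identical expressions $e^{-\xo\ad T_1}X$ from the relations \eqref{cDcomm}, and convert the $T_1$-terms using $Y_1(T_1a,\xo)=\di_\xo Y_1(a,\xo)$. Your explicit handling of the converse via direction $(ii)\Rightarrow(i)$ of Theorem \ref{tloc1}, and the linearity argument extending locality from the generators to all of $\cD$, simply spell out what the paper compresses into ``the proof follows immediately from Theorem \ref{tloc1}.''
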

\begin{proof}
We can apply \thref{tloc1} with $\z=\xo$, $T=T_1$ and any $X\in\cD$, as the action of $\ad T$ on $\cD$ is nilpotent. Using the commutation relations \eqref{cDcomm}, we find:
\begin{align*}
e^{-\xo\ad T} T_\alpha &= T_\alpha \,, &
e^{-\xo\ad T} H &= H+\xo T \,, \\
e^{-\xo\ad T} \Om_{\alpha\beta} &= \Om_{\alpha\beta} \,, &
e^{-\xo\ad T} \Om_{1\alpha} &= \Om_{1\alpha} + \xo T_\alpha \,, \\
e^{-\xo\ad T} C_\alpha &= C_\alpha +2\xo \Om_{1\alpha} + \xo^2 T_\alpha\,, &
e^{-\xo\ad T} C_1 &= C_1 - 2\xo H - \xo^2 T \,.
\end{align*}
Now the proof follows immediately from \thref{tloc1} and the property $Y_1(Ta,\xo)$ $=$ $\di_\xo Y_1(a,\xo)$.
\end{proof}

\section{The Reconstruction Theorem}\label{Se.4}
In this section, we state and prove the main result of the paper, the Reconstruction Theorem \ref{TH}.

\subsection{Formulation of the theorem}
We start by listing the necessary conditions satisfied by the action of $\mathfrak{c}_D$.

\begin{definition}\label{main-def}
Let $V$ be a 
vector space
equipped with an action of the Lie algebra $\mathfrak{c}_D$ of infinitesimal conformal transformations. 
\begin{enumerate}
\medskip
\item[$(a)$]
The action of $\mathfrak{c}_D$ is called \textbf{integrable} if every vector is contained in a {\rm finite-dimensional} subspace invariant under all infinitesimal rotations $\Om_{\alpha\beta}\in\so(D,\C)$
(as in the definition of a conformal vertex algebra in Sect.\ \ref{Se.2.3}).

\medskip
\item[$(b)$]
The action of $\mathfrak{c}_D$ is called \textbf{strongly integrable} if it is integrable and the infinitesimal dilation operator $H$ is {\rm diagonalizable} such that the spectrum of 
\beq\label{tot-cart}
H+\ii(\Omega_{12}+\Omega_{34}+\cdots+\Omega_{k,k+1})
\eeq 
consists of {\rm even integers}, where $k$ is the integer part of $\frac{D}{2}$
(cf.\ Remark $\ref{v2-rem1}$ below).
In other words,
\beq\label{tot-cart2}
(-1)^{H+\ii(\Omega_{12}+\Omega_{34}+\cdots+\Omega_{k,k+1})}
\,=\, 1 \quad \text{on} \quad V.
\eeq
$($Note that in the $\so(D+2,\C)$ realization of $\mathfrak{c}_D$, the generator $H$ is 
$\ii \Omega_{-1,0}$ 
and then in $(\ref{tot-cart})$ we have the sum over all of the Cartan basis.$)$

\medskip
\item[$(c)$]
We say that $V$ has a \textbf{positive energy} if the infinitesimal dilation operator $H$ is diagonalizable with eigenvalues in $\frac12\ZP$
(again as in the definition of a conformal vertex algebra in Sect.\ \ref{Se.2.3}).
\end{enumerate}
\end{definition}

\begin{remark}\label{v2-rem1}
The meaning of the above strong integrability condition is that 
in the even (bosonic) case, we will have a unitary representation of the geometric group of rotations and dilations, which is $\SO(D,\R) \times_{\Z/2\Z} \SO(2,\R)$, where the $\Z/2\Z$--quotient identifies $-1$ from $\SO(D,\R)$ and from the  group $\SO(2,\R)$ of dilations.
This identification will be crucial for Corollary \ref{mu-parity-cr} below, where  we will establish the poles' integrality.
In the general case (including fermions), the $\SO(D,\R)$ group is replaced by the Euclidean spinor group $\mathrm{Spin}(D,\R)$.
This is natural to expect, since unitary vertex algebras are in one-to-one correspondence with Wightman quantum field theories with global conformal invariance \cite{N05}, and in globally conformal invariant quantum field theories the spacial unitary symmetry is provided by the (connected) geometric conformal group $\SO_0 (D,2)$ (again in the bosonic case, while in the general case $\mathrm{Spin}_0 (D,2)$).
\end{remark}

The following is the main result of the paper.

\begin{theorem}\label{TH}
Let $D$ be an even positive integer.
Let\/ $(V,Y_1,\vac,T_1)$ be a vertex algebra
equipped with a local, strongly integrable, positive-energy unitary action of\/ $\mathfrak{c}_D$ such that\/ $\mathfrak{c}_D \vac = 0$ and\/ $T_1 \in \mathfrak{c}_D$ is the translation operator of\/ $V$.
Then\/ $V$ can be endowed with a unique structure\/
$(V,Y_D,\vac,\mathfrak{c}_D)$
of a conformal vertex algebra
in dimension\/ $D$, so that\/ $Y_1$ is the restriction of\/ $Y$.
\end{theorem}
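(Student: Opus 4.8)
The plan is to construct the higher-dimensional state-field correspondence $Y_D$ by the conformal-transport formula sketched heuristically in the introduction, and then verify the vertex-algebra axioms one at a time, reducing each to the known dimension-one statement. Concretely, for $a,b\in V$ I would define $Y_D(a,\z)b$ by transporting $Y_1$ along a conformal transformation carrying the configuration $(0,x\e_1)$ to $(0,\z)$, i.e.\ via the adjoint action $U(g)\,Y_1(U(g)^{-1}a,x)\,U(g)^{-1}$ referenced as Eq.~\eqref{vert4-nn1-mod090422}. The key preliminary is \emph{well-definedness}: the result must not depend on the choice of $g$, which amounts to invariance under the stabilizer of $(0,x\e_1)$. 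Here I would invoke \prref{ploc2}, whose explicit commutation relations between $\mathfrak{c}_D$ and the one-dimensional fields $Y_1(a,x)$ are precisely the infinitesimal form of this stabilizer invariance; strong integrability (\deref{main-def}(b)) is what upgrades the infinitesimal statement to the group-level consistency $U(g)$-action on the relevant completion.

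Once $Y_D$ is defined, I would establish covariance first, since it is the structural backbone. The conformal equivariance relations \eqref{CVA1}--\eqref{CVA3} should follow essentially by construction: differentiating the transport formula and using \prref{ploc1} (which characterizes a local, vacuum-annihilating action of $\mathfrak{c}_D$ as equivalent to \eqref{CVA1}--\eqref{CVA3}) converts the group covariance into the required Lie-algebra equivariance. Translation covariance $(b)$ and the vacuum axiom $(c)$ are comparatively routine: the vacuum relations $Y_D(\vac,\z)a=a$ and $Y_D(a,\z)\vac|_{\z=0}=a$ follow from the corresponding dimension-one identities \eqref{vert9} together with $\mathfrak{c}_D\vac=0$, and translation covariance follows from $T_\alpha\in\mathfrak{c}_D$ and the equivariance already obtained. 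Uniqueness is immediate, since any conformal $Y_D$ restricting to $Y_1$ must satisfy the transport formula by covariance.

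The two genuinely hard axioms are locality $(a)$ and the pole bound that makes $Y_D(a,\z)b$ land in $V\llb\z\rrb_{\z^2}$ rather than in a doubly-infinite series in $\z$ and $1/\z^2$. For locality I would follow the transitivity idea: the conformal group acts transitively on triples of mutually non-isotropic points, so the product $Y_D(a,\z)Y_D(b,\w)c$ can be conformally mapped to a configuration lying on a line, where one-dimensional locality \eqref{vert8} applies; transporting back yields the symmetry $f_{a,b,c}(\z,\w)=(-1)^{p_ap_b}f_{b,a,c}(\w,\z)$. The control of poles in $\z^2$ is where I expect the main obstacle, and where unitarity must enter. My plan is to use positive-definiteness of the Hermitian form to bound matrix coefficients $\langle d, Y_D(a,\z)b\rangle$ and thereby constrain the order of the pole along the isotropic cone $\z^2=0$; the parity/integrality consequence flagged for Corollary~\ref{mu-parity-cr} — that $(-1)^{H+\ii(\Omega_{12}+\cdots)}=1$ forces the exponents in the $(\z^2)^n$ expansion \eqref{vert4-nn} to be bounded below — is precisely the mechanism that turns a naive two-sided Laurent expansion into a genuine element of $V\llb\z\rrb_{\z^2}$. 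I would isolate this as the technical heart of the proof, treating the remaining axiom verifications as transport-and-reduce arguments built on \prref{ploc1} and \prref{ploc2}.
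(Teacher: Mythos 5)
Your overall architecture tracks the paper's proof: the transport formula you describe is exactly the generating identity \eqref{vert4-nn1-mod090422}, well-definedness via \prref{ploc2} is how the paper handles the stabilizer (Lemma \ref{lm5.x1}), and the covariance relations are indeed obtained by differentiating the transport formula. However, there are two genuine gaps where your plan diverges from what actually works.

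First, your locality argument is the paper's \emph{heuristic}, not a proof, and it cannot be implemented as stated. The only finite group elements whose action $U(g)$ exists on $V$ are the rotations in $\SO(D,\C)$, integrated from $\so(D,\C)$ thanks to \deref{main-def}$(a)$ (note: plain integrability, not strong integrability, which you misattribute here). Finite translations act only as formal series $e^{\z\spr\T}$, and the conformal maps needed to carry a generic triple of points onto a line involve translations and special conformal transformations evaluated at formal points --- these are not operators on $V$; moreover, ``mutually non-isotropic'' has no meaning for formal variables. The paper's actual proof (Sect.~\ref{sse5.7n}) is a formal-series argument: start from Borcherds' commutator formula \eqref{vert10} for $Y_1$, replace $a$ by $e^{\u\spr\T}a$, multiply by $\bigl((\u+x\e_1-y\e_1)^2\bigr)^{N_{a,b}}$ and use that $\io_{x,y}-\io_{y,x}$ annihilates formal power series, conjugate by $e^{\vv\spr\T}$ (an automorphism of $Y_1$ by \prref{ploc2}), which gives \leref{l-loc1}; then transfer to $Y_D$ through the expansion identity \eqref{Prop1} and set $x=y=0$ (\leref{l-loc2}). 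Some such formal mechanism is indispensable; transitivity of the conformal group by itself does not produce a proof in this setting.

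Second, you conflate two logically independent steps in controlling the series $Y_D(a,\z)b$. Strong integrability, through \coref{mu-parity-cr} (where $U(g)=(-1)^H$ for $g=(-1)^{\ii(\Omega_{12}+\cdots+\Omega_{k-1,k})}$), yields only \emph{integrality}: $\mu^D_{n;m,\sigma}=0$ when $n-m$ is odd, i.e., no half-integer powers of $\z^2$. It says nothing about the powers being \emph{bounded below}, which is the separate step where unitarity and positive energy enter. In the paper this is \leref{v2lm-5.5-new}, giving $\Delta\geqslant|j|$ for joint eigenvalues of $H$ and $\ii\Omega_{12}$ (via $0\leqslant\|L_{-1}^{\pm}a\|^2=2\Delta^{\pm}\|a\|^2$ on quasi-primary vectors, then induction), followed by \leref{Lm.5.6-1new}: by Schur's lemma the $\so(D,\C)$--map $\C_m^{\harm}[\z]\to\Hom(W'\otimes W'',W)$, $h_{m,\sigma}\mapsto\mu^D_{n;m,\sigma}$, is injective whenever non-zero, so one may evaluate on the extreme harmonic polynomial $h_{m,\sigma_m}(\z)=(z^1-\ii z^2)^m$ (eigenvalue $m$ for $\ii\Omega_{12}$) and apply the inequality to get $\frac{m-n}{2}\leqslant\frac12(\Delta'+J'+\Delta''+J'')$. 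Your proposed substitute --- bounding matrix coefficients $\langle d, Y_D(a,\z)b\rangle$ by positive-definiteness --- is too vague to check and, as written, attributes the lower bound to the parity corollary, leaving the actual ``bounded below'' mechanism missing from your proof.
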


The {\it proof} of this theorem is contained in the rest of this section.
Here is a brief outline.
We start in Sect.\ \ref{Se.3.1} with the reconstruction problem for $Y_D$.
Initially, we construct $Y_D(a,\z)\,b$ as a generalized series, possibly containing not only infinitely many negative powers of $\z^2$ but also half-integer powers.
Then in Sects.\ \ref{sse5.4n} and \ref{sse5.6n}, we show that in $Y_D(a,\z)\,b$ there are only integer powers of $\z^2$ and the powers are bounded from below.
Between these two subsections, in Sect.\ \ref{sse5.2n-v2}, we consider the special case $D=2$, which is not only an illustration of the reconstruction scheme of Sect.\ \ref{Se.3.1} but also serves as an important technical preparation for the 
pole bounds in Sect.\ \ref{sse5.6n}.
Then, in Sects.\ \ref{sse5.5n} and \ref{sse5.7n}, we establish all remaining axioms for the vertex algebra $(V,Y_D,\vac,\mathfrak{c}_D)$.
Our construction is more geometric,
while in
Appendix \ref{sse5.3n}, 
we give an alternative representation-theoretic interpretation of the construction of $Y_D$.

\subsection{Reconstructing the state-field correspondence}\label{Se.3.1}

Suppose that $V$ is a vertex algebra endowed with a strongly integrable local action of the conformal Lie algebra $\mathfrak{c}_D$.
Let us write
\beq\label{vert44}
Y_1 (a,\xo) b \, = \, \sum_{n \,=\, -N_{a,b}^1}^{\infty} 
\mu^1_n (a \otimes b) \xo^n \,,
\eeq
so that $\mu^1_n (a \otimes b)$ $=$ $a_{(-n-1)}b$ in the usual notation of Eq.\ (\ref{vert4}).
If we introduce the $\frac{1}{2}\ZP$--grading of $V$ provided by $H$,
\beq\label{v2e-grading1}
V \,=\, \mathop{\bigoplus}\limits_{\Delta \,\in\, \frac{1}{2}\ZP}
\, V_{\Delta}
\,,
\eeq
it follows that 
\beq\label{mu1-graded}
\mu^1_n \colon V_{\Delta'} \otimes V_{\Delta''} \to V_{\Delta'+\Delta''+n}
\,,
\eeq
i.e., $\mu^1_n$ is a degree $n$ map ($n \in \Z$).
(Note that, in general, the map $\mu^1_n$ is a $\Z$--graded map $V \otimes V \to V$ but by a slight abuse of notation we will use the same notation for its restrictions (\ref{mu1-graded}), which implicitly depend on $\Delta'$ and $\Delta''$.)

Similarly, if we had a vertex algebra structure $Y_D$ on $V$, then we could write
\beq\label{vert4-nn1}
Y_D (a,\z) b \, = \, 
\sum_{n\in\Z} \
\mathop{\sum}\limits_{m = 0}^{\infty} \
\mathop{\sum}\limits_{\sigma = 1}^{\mathfrak{h}_m^D} \
\mu^D_{n;m,\sigma}(a \otimes b) \ (\z^2)^{\frac{n-m}{2}} \, h_{m,\sigma}(\z)
\,,
\eeq
where the relation to the expansion (\ref{vert4-nn}) is via
\beq\label{e03922-1604}
\mu_{2n+m;m,\sigma}^D (a\otimes b) \,=\, a_{\{n,m,\sigma\}}b
\,.
\eeq
Furthermore, the commutation relations (\ref{CVA1}) with $H$ require that
\beq\label{muD-graded}
\mu^D_{n;m,\sigma} \colon V_{\Delta'} \otimes V_{\Delta''} \to V_{\Delta'+\Delta''+n}
\,.
\eeq
On the other hand, relations (\ref{CVA2}) are equivalent to the condition that all linear maps
\beq\label{muD-hat}
\C_m^\harm[\z] \to \Hom(V_{\Delta'} \otimes V_{\Delta''}, V_{\Delta'+\Delta''+n}) \,, \qquad
h_{m,\sigma}(\z) \mapsto \mu^D_{n;m,\sigma}
\,,
\eeq
are homomorphisms of $\so(D,\C)$--modules, where $\Om_{\alpha\beta}$ acts on $\C[\z]$
(and by restriction on $\C_m^\harm[\z]$) as $z^\alpha \di_{z^\beta} - z^\beta \di_{z^\alpha}$.

\begin{remark}\label{newrem150922}
The above maps 
$\mu^1_n \colon V \otimes V \to V$ and 
$\mu^D_{n;m,\sigma} \colon V \otimes V \to V$
(Eqs.\ (\ref{mu1-graded}), (\ref{muD-graded}))
can be expressed in terms of the residue map $\mathrm{res}_x \colon V \llb x\rrb_{x} \to V$ and its higher-dimensional analog $\mathrm{Res}_{\z} \colon V \llb \z\rrb_{\z^2} \to V$ introduced in \cite[Sect. III]{BN06}:
\beqs
\mu^1_n(a \otimes b) 
& \hspace{-8pt}
=\,
a_{(-n-1)}b
\,=
\hspace{-8pt} &
\mathrm{res}_x \, x^{-n-1} \, Y_1(a,x)b
\,,
\\
\mu^D_{2n+m;m,\sigma}(a \otimes b) 
& \hspace{-8pt}
=\,
a_{\{n,m,\sigma\}}b
\,=
\hspace{-8pt} &
\mathrm{Res}_{\z} \, (\z^2)^{-(D/2)-m-n} \, h_{m,\sigma} (\z) \, Y_D(a,\z)b
\,
\eeqs
(cf.\ \cite[Eq.\ (3.8)]{BN06}).
This brings further clarity to the $\mathfrak{so} (D,\C)$--equivariance of the map (\ref{muD-hat}) in a conformal vertex algebra in dimension $D$.
\end{remark}

In this subsection, we give an explicit construction of the maps $\mu^D_{n;m,\sigma}$ 
from the state-field correspondence $Y_1$ and the action of $\cD$,
satisfying the conditions of Theorem \ref{TH}.
In the subsequent subsections, we will derive the properties of $Y_D$. In particular,
in Sect.~\ref{sse5.4n} we prove that the powers of $\z^2$ that appear in Eq.~(\ref{vert4-nn1}) are integral;
in Sect.~\ref{sse5.6n} we prove that these powers are bounded from bellow;
in Sect.~\ref{sse5.5n} we prove that the so-obtained series (\ref{vert4-nn1}) is the only $\so(D,\C)$--equivariant series whose restriction to $\z$ $=$ $(x,$ $0,$ $\dots,$ $0)$ gives (\ref{vert44}).

For the sake of simplicity, from now on we shall make the additional assumption that the eigenspaces $V_{\Delta}$ of $H$ are {\it finite dimensional}.
In this way, the linear maps
$\mu^1_n$ and $\mu^D_{n;m,\sigma}$ 
become maps between finite-dimensional spaces (see \eqref{mu1-graded}, \eqref{muD-graded}). 
However, this assumption is not crucial, since due to the integrability condition $(a)$ from Definition \ref{main-def}, for any two elements $a \in V_{\Delta'}$ and $b \in V_{\Delta''}$, one can find finite-dimensional subspaces 
$W' \subseteq V_{\Delta'}$ and $W'' \subseteq V_{\Delta''}$ with $a\in W'$, $b\in W''$, which are invariant under the $\so(D,\C)$--subalgebra.
Then the image $W := \mu^1_n (W' \otimes W'')$ will be a finite-dimensional subspace of $V_{\Delta'+\Delta''+n}$, which is also invariant under the $\so(D,\C)$--subalgebra.
The maps $\mu^D_{n;m,\sigma}$ will be constructed again as linear maps $W'\otimes W'' \to W$. 
Furthermore, the resulting $\mu^D_{n;m,\sigma}(a \otimes b)$ will not depend on the choice of $W'$ and $W''$. 
Indeed, if we choose other subspaces $U'$ and $U''$ with $a \in U'$, $b \in U''$, then without loss of generality we can assume that $U' \subseteq W'$ and $U'' \subseteq W''$; hence setting $U := \mu^1_n (U' \otimes U'') \subseteq W$ we can restrict the construction from $W'\otimes W'' \to W$ to $U'\otimes U'' \to U$ due to the $\so(D,\C)$--equivariance of $\mu^1_n$.
However, the proofs bellow become more transparent if we assume that all $V_{\Delta}$ are finite dimensional, which in fact is the most relevant case in physics.

We pass now to the construction of the maps $\mu^D_{n;m,\sigma} \colon V_{\Delta'} \otimes V_{\Delta''} \to V_{\Delta'+\Delta''+n}$. 
Let us consider a complex vector $\u \in \C^D$ with $\u^2 = \u \spr \u=1$ (not a formal variable). We can also write it as
$$
\u \,=\, (\cos \vartheta, \sin \vartheta \, \u_{\perp}')
$$
with $\vartheta \in \C$ and $\u_{\perp}' \in \C^{D-1}$ where again $(\u_{\perp}')^2=1$.
Define a group element $g_{\u}$ $\in$ $\SO(D,\C)$ by
$$
g_{\u} := e^{\vartheta \Omega_{1,\u'_{\perp}}}
\,,\qquad
\Omega_{1,\u'_{\perp}} :=
\mathop{\sum}\limits_{\alpha = 2}^{D} \Omega_{1\alpha} {u'}_{\hspace{-2pt}\perp}^{\alpha}
\,.
$$
Then $g_{\u}$ has the following properties:
\begin{enumerate}
\medskip
\item[$(i)$] $g_{\u} (\e_1)$ $=$ $\u$ \; where \; $\e_1=(1,0,\dots,0) \in\C^D$.
\medskip
\item[$(ii)$] Let $U(g_{\u})$ be the representation of $g_{\u}$ on $V$ after the integration of the $\so(D,\C)$--action (according to the integrability condition of Definition~\ref{main-def}$(a)$).
Then each $V_{\Delta}$ is invariant under $U(g_{\u})$, and in a basis these are represented by matrices whose elements are polynomials in $\u$.
\end{enumerate}
\medskip

We then define
\beq\label{mud-def}
\mathop{\sum}\limits_{m = 0}^{\infty} \
\mathop{\sum}\limits_{\sigma = 1}^{\mathfrak{h}_m^D} \ 
\mu_{n;m,\sigma}^D (a \otimes b) \
h_{m,\sigma} (\u)
\,:=\,
U(g_{\u}) \, \mu_n^1 \bigl(U (g_{\u})^{-1}a \otimes U (g_{\u})^{-1}b\bigr)
\,,
\eeq
the left-hand side being a finite sum as the right-hand side is a polynomial in $\u$ whose degree depends on $a,b$ and $n$.
Note that in the right-hand side of \eqref{mud-def}, we have the 
natural finite-dimensional action of $g_{\u}\in \SO(D,\C)$ on
$\mu_n^1 \in \Hom(V_{\Delta'} \otimes V_{\Delta''}, V_{\Delta'+\Delta''+n})$.
Eq.\ \eqref{mud-def} determines unique coefficients $\mu_{n;m,\sigma}^D (a \otimes b)$, because for fixed $m$ the harmonic polynomials $h_{m,\sigma} (\u)$ form a basis
of the space of homogenous polynomials of degree $m$ on the sphere $\{ \u \in \C^D \,|\, \u^2 =1\}$
(see \reref{remharm}).

\subsection{Parity property}\label{sse5.4n}
 
\begin{lemma}\label{lm5.x1}
Let\/ $f_{\u}$ $\in$ $\SO(D,\C)$ be a function of\/ $\u \in \C^D$, where $\u^2=1$, which satisfies the above two conditions $(i)$ and $(ii)$.
Then
\beq\label{gu-inv}
U(g_{\u}) \, \mu_n^1 \bigl(U (g_{\u})^{-1}a \otimes U (g_{\u})^{-1}b\bigr)
\,=\,
U(f_{\u}) \, \mu_n^1 \bigl(U (f_{\u})^{-1}a \otimes U (f_{\u})^{-1}b\bigr)
\,.
\eeq
\end{lemma}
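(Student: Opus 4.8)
The claim is that the right-hand side of the defining formula \eqref{mud-def} is independent of the particular choice of group element used to move $\e_1$ to $\u$: any $f_{\u}\in\SO(D,\C)$ satisfying $(i)$ and $(ii)$ gives the same result as $g_{\u}$. The plan is to exploit the fact that both $g_{\u}$ and $f_{\u}$ send $\e_1$ to $\u$, so the element $h_{\u}:=g_{\u}^{-1}f_{\u}$ fixes $\e_1$; that is, $h_{\u}$ lies in the stabilizer of $\e_1$ in $\SO(D,\C)$, which is a copy of $\SO(D-1,\C)$ acting on the orthogonal complement $\{z^1=0\}$. Rewriting \eqref{gu-inv}, after applying $U(g_{\u})^{-1}$ on the left it suffices to show
\beqn
\mu_n^1 (a\otimes b) = U(h_{\u})\,\mu_n^1\bigl(U(h_{\u})^{-1}a\otimes U(h_{\u})^{-1}b\bigr)
\eeqn
for every $h_{\u}\in\SO(D-1,\C)$ stabilizing $\e_1$. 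In other words, the assertion reduces to the $\SO(D-1,\C)$--invariance of the mode map $\mu_n^1$, i.e., that $\mu_n^1$ is a homomorphism of $\so(D-1,\C)$--modules.

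\textbf{Reducing to infinitesimal invariance.}
By the integrability condition (Definition \ref{main-def}$(a)$), the vectors $a,b$ lie in finite-dimensional $\so(D,\C)$--invariant subspaces, so the group action $U$ is the genuine integration of the Lie algebra action and invariance under the connected group $\SO(D-1,\C)$ is equivalent to annihilation by its Lie algebra $\so(D-1,\C)$. The infinitesimal generators of this stabilizer subalgebra are exactly the rotations $\Om_{\alpha\beta}$ with $2\le\alpha,\beta\le D$. Thus the plan is to verify that
\beqn
\Om_{\alpha\beta}\,\mu_n^1(a\otimes b) = \mu_n^1(\Om_{\alpha\beta}a\otimes b) + \mu_n^1(a\otimes\Om_{\alpha\beta}b)
\,,\qquad 2\le\alpha,\beta\le D\,,
\eeqn
which is precisely the statement that $\Om_{\alpha\beta}$ acts as a derivation of the product $\mu_n^1$. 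This identity is supplied by Proposition \ref{ploc2}: its relation $[\Om_{\alpha\beta},Y_1(a,\xo)] = Y_1(\Om_{\alpha\beta}a,\xo)$ for $2\le\alpha,\beta\le D$ says exactly that each such $\Om_{\alpha\beta}$ commutes with $Y_1$ in the Leibniz sense, and reading off the coefficient of $\xo^{n}$ (equivalently, applying $\mathrm{res}_x\,x^{-n-1}$ as in Remark \ref{newrem150922}) yields the derivation property of $\mu_n^1$ displayed above.

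\textbf{Assembling the argument.}
Concretely, I would first check that both sides of \eqref{gu-inv} are well defined as polynomial expressions in $\u$ (guaranteed by property $(ii)$ applied to each of $g_{\u}$ and $f_{\u}$). Then I would write $f_{\u}=g_{\u}\,h_{\u}$ with $h_{\u}\in\SO(D-1,\C)_{\e_1}$ and substitute into the right-hand side of \eqref{gu-inv}; using the homomorphism property $U(g_{\u}h_{\u})=U(g_{\u})U(h_{\u})$ (valid because these act as the integrated $\so(D,\C)$--representation), the factors $U(g_{\u})$ cancel against the left-hand side, leaving exactly the $\SO(D-1,\C)$--invariance of $\mu_n^1$, which follows by exponentiating the infinitesimal derivation identity from Proposition \ref{ploc2}. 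The main obstacle is not any single computation but making the reduction clean: one must be careful that the stabilizer element $h_{\u}$ genuinely lies in the connected rotation subgroup whose algebra is spanned by the $\Om_{\alpha\beta}$ ($2\le\alpha,\beta\le D$), so that infinitesimal invariance integrates to global invariance — this is where the integrability hypothesis and the connectedness of $\SO(D-1,\C)$ are essential, and it is the step I would state most carefully.
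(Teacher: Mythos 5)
Your proposal is correct and follows essentially the same route as the paper's own proof: factor $f_{\u}=g_{\u}h_{\u}$ with $h_{\u}$ in the stabilizer $\SO(D-1,\C)$ of $\e_1$, observe via Proposition \ref{ploc2} that $\mu_n^1$ is $\so(D-1,\C)$--equivariant, and integrate to $\SO(D-1,\C)$--invariance. Your extra care about connectedness of the stabilizer and the passage from infinitesimal to group invariance is a point the paper leaves implicit, but it is the same argument.
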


\begin{proof}
Since $g_{\u}^{-1}f_{\u} (\e_1)$ $=$ $\e_1$, we have that $g_{\u}^{-1}f_{\u}$ $\in$ $\SO(D-1,\C)$, the stabilizer group of $\e_1$.
The Lie algebra of this subgroup is the $\so(D-1,\C)$--subalgebra of $\so(D,\C)$ spanned by $\Om_{\alpha\beta}$ for $2 \leqslant\alpha, \beta\leqslant D$.
Recall from \prref{ploc2} that 
$$
\bigl[\Om_{\alpha\beta},Y_1(a,\xo)\bigr] = Y_1(\Om_{\alpha\beta}a,\xo) 
\,, \qquad 2 \leqslant\alpha, \beta\leqslant D \,.
$$
This implies that the map $\mu^1_n$ is $\so(D-1,\C)$--equivariant, i.e., $\so(D-1,\C)$ acts trivially on 
$\mu_n^1 \in \Hom(V_{\Delta'} \otimes V_{\Delta''}, V_{\Delta'+\Delta''+n})$.
Hence, $\mu_n^1$ is $\SO(D-1,\C)$--invariant, and this implies (\ref{gu-inv}).
\end{proof}

\begin{lemma}\label{mu-sod-inv-lm}
For any $g \in \SO(D,\C)$, we have
\beq\label{mu-sod-inv-eq}
\begin{split}
U(g) &
\mathop{\sum}\limits_{m = 0}^{\infty} \
\mathop{\sum}\limits_{\sigma = 1}^{\mathfrak{h}_m^D} \ 
\mu_{n;m,\sigma}^D (U(g)^{-1}a \otimes U(g)^{-1}b) \
h_{m,\sigma} (\u) \\ 
&=
\mathop{\sum}\limits_{m = 0}^{\infty} \
\mathop{\sum}\limits_{\sigma = 1}^{\mathfrak{h}_m^D} \ 
\mu_{n;m,\sigma}^D (a \otimes b) \
h_{m,\sigma} \bigl(g(\u)\bigr) \,.
\end{split}
\eeq
\end{lemma}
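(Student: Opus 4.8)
The plan is to reduce everything to the defining relation \eqref{mud-def} together with the invariance already recorded in \leref{lm5.x1}. For a unit vector $\vv\in\C^D$ it is convenient to abbreviate
\[
P(a,b,\vv) \,:=\, U(g_\vv)\,\mu_n^1\bigl(U(g_\vv)^{-1}a \otimes U(g_\vv)^{-1}b\bigr)
\,=\, \sum_{m,\sigma}\mu^D_{n;m,\sigma}(a\otimes b)\,h_{m,\sigma}(\vv),
\]
the second equality being exactly \eqref{mud-def}. Since $g\in\SO(D,\C)$ preserves the quadratic form, $g(\u)$ is again a unit vector, so $P(a,b,g(\u))$ is defined and is precisely the right-hand side of \eqref{mu-sod-inv-eq}. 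The goal is therefore to show that the left-hand side of \eqref{mu-sod-inv-eq} equals $P(a,b,g(\u))$.

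First I would expand the left-hand side. Applying \eqref{mud-def} with $U(g)^{-1}a,\,U(g)^{-1}b$ in place of $a,b$, and keeping the same $\u$, gives
\[
\sum_{m,\sigma}\mu^D_{n;m,\sigma}\bigl(U(g)^{-1}a \otimes U(g)^{-1}b\bigr)\,h_{m,\sigma}(\u)
= U(g_\u)\,\mu_n^1\bigl(U(g_\u)^{-1}U(g)^{-1}a \otimes U(g_\u)^{-1}U(g)^{-1}b\bigr).
\]
Multiplying by $U(g)$ and using that $U$ is a representation, so that $U(g)U(g_\u)=U(g\,g_\u)$ and $U(g_\u)^{-1}U(g)^{-1}=U(g\,g_\u)^{-1}$, the left-hand side of \eqref{mu-sod-inv-eq} collapses to
\[
U(g\,g_\u)\,\mu_n^1\bigl(U(g\,g_\u)^{-1}a \otimes U(g\,g_\u)^{-1}b\bigr).
\]

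The heart of the argument is to recognize this as $P(a,b,g(\u))$, and here \leref{lm5.x1} enters. For a unit vector $\vv$ I would set $f_\vv := g\,g_{g^{-1}(\vv)}\in\SO(D,\C)$. Then $f_\vv(\e_1)=g\bigl(g_{g^{-1}(\vv)}(\e_1)\bigr)=g\bigl(g^{-1}(\vv)\bigr)=\vv$, so $f_\vv$ satisfies condition $(i)$; and since $U(f_\vv)=U(g)\,U(g_{g^{-1}(\vv)})$ with $g$ fixed and the entries of $U(g_{g^{-1}(\vv)})$ polynomial in $g^{-1}(\vv)$, hence polynomial in $\vv$ because $g^{-1}$ is linear, condition $(ii)$ holds as well. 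Thus \leref{lm5.x1} applies to the family $f_\vv$ and yields $U(f_\vv)\,\mu_n^1\bigl(U(f_\vv)^{-1}a \otimes U(f_\vv)^{-1}b\bigr)=P(a,b,\vv)$ for every unit $\vv$. Evaluating at $\vv=g(\u)$, where $f_{g(\u)}=g\,g_{g^{-1}(g(\u))}=g\,g_\u$, identifies the expanded left-hand side above with $P(a,b,g(\u))=\sum_{m,\sigma}\mu^D_{n;m,\sigma}(a\otimes b)\,h_{m,\sigma}(g(\u))$, which is the right-hand side of \eqref{mu-sod-inv-eq}.

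The only point that needs a little care is the verification of hypothesis $(ii)$ for the auxiliary family $f_\vv$, namely that composing the fixed element $g$ with the polynomial family $g_{g^{-1}(\vv)}$ preserves polynomial dependence on the base point; this is where the linearity of $g$ on $\C^D$ is used. Everything else is bookkeeping with the homomorphism property of $U$ and the relation $g_\u(\e_1)=\u$, so I expect no genuine obstacle beyond correctly tracking which unit vector plays the role of the base point in each invocation of \eqref{mud-def} and \leref{lm5.x1}.
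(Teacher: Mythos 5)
Your proof is correct and takes essentially the same route as the paper: the paper's proof applies Lemma \ref{lm5.x1} to the auxiliary family $f_{\u} := g^{-1}g_{g(\u)}$ and then invokes Eq.~\eqref{mud-def}, whereas you apply it to the equivalent family $f_{\vv} := g\,g_{g^{-1}(\vv)}$ evaluated at $\vv = g(\u)$ — the same idea with the bookkeeping inverted. Your verification of conditions $(i)$ and $(ii)$ for the auxiliary family (using linearity of $g$ for the polynomial dependence) is exactly the check the paper leaves implicit, so there is no gap.
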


\begin{proof}
We apply Lemma \ref{lm5.x1} to $f_{\u}$ $:=$ $g^{-1}g_{g(\u)}$ and use Eq.~(\ref{mud-def}).
\end{proof}

\begin{corollary}\label{mu-parity-cr}
We have\/
$\mu^D_{n;m,\sigma}=0$ if\/ $n-m$ is odd.
\end{corollary}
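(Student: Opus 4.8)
The plan is to specialize the $\SO(D,\C)$-invariance of Lemma~\ref{mu-sod-inv-lm} to the central element $g=-I$, which lies in $\SO(D,\C)$ precisely because $D$ is even, as $\det(-I)=(-1)^D=1$. The idea is that the geometric action of $-I$ on $\C^D$ is $\u\mapsto-\u$, so by homogeneity of the harmonic polynomials $h_{m,\sigma}\bigl((-I)(\u)\bigr)=h_{m,\sigma}(-\u)=(-1)^m h_{m,\sigma}(\u)$, whereas on $V$ the operator $U(-I)$ will be pinned down by strong integrability. Comparing the two sides of \eqref{mu-sod-inv-eq} and invoking the linear independence of the $h_{m,\sigma}(\u)$ (the same independence that makes the coefficients in \eqref{mud-def} unique) should then force $\mu^D_{n;m,\sigma}=0$ whenever $(-1)^m\neq(-1)^n$, i.e.\ whenever $n-m$ is odd.

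First I would identify $U(-I)$. Writing $J:=\Om_{12}+\Om_{34}+\cdots$ for the Cartan element appearing in \eqref{tot-cart}, the rotation $-I\in\SO(D,\C)$ equals $\e^{\pi J}$, since each $\e^{\pi\Om_{2j-1,2j}}$ acts as $-1$ on the corresponding coordinate two-plane and these $k=D/2$ planes exhaust $\C^D$. Hence, after integrating the $\so(D,\C)$-action, $U(-I)=\e^{\pi J}$. Because $H$ commutes with every $\Om_{\alpha\beta}$ by \eqref{cDcomm}, the strong integrability condition \eqref{tot-cart2}, which reads $\e^{\ii\pi(H+\ii J)}=1$, rearranges to $\e^{\ii\pi H}\e^{-\pi J}=1$, that is $\e^{\pi J}=\e^{\ii\pi H}$. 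Therefore $U(-I)=\e^{\ii\pi H}$, and in particular $U(-I)$ acts on the homogeneous component $V_{\Delta}$ as the scalar $\e^{\ii\pi\Delta}$.

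Next I would evaluate both sides of \eqref{mu-sod-inv-eq} for $g=-I$ on homogeneous vectors $a\in V_{\Delta'}$, $b\in V_{\Delta''}$, which is harmless since $\mu^D_{n;m,\sigma}$ is bilinear. On the left, $U(-I)^{-1}a=\e^{-\ii\pi\Delta'}a$ and $U(-I)^{-1}b=\e^{-\ii\pi\Delta''}b$, while $\mu^D_{n;m,\sigma}(a\otimes b)\in V_{\Delta'+\Delta''+n}$ by \eqref{muD-graded}, so applying $U(-I)$ multiplies it by $\e^{\ii\pi(\Delta'+\Delta''+n)}$; the three scalars combine to $\e^{\ii\pi n}=(-1)^n$. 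Thus the left-hand side equals $(-1)^n\sum_{m,\sigma}\mu^D_{n;m,\sigma}(a\otimes b)\,h_{m,\sigma}(\u)$, whereas the right-hand side equals $\sum_{m,\sigma}(-1)^m\mu^D_{n;m,\sigma}(a\otimes b)\,h_{m,\sigma}(\u)$. Comparing coefficients of the linearly independent $h_{m,\sigma}(\u)$ yields $(-1)^n\mu^D_{n;m,\sigma}(a\otimes b)=(-1)^m\mu^D_{n;m,\sigma}(a\otimes b)$ for all $a,b$, hence $\mu^D_{n;m,\sigma}=0$ when $n-m$ is odd.

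The only genuinely non-routine step is the identification $U(-I)=\e^{\ii\pi H}$: this is exactly where both the evenness of $D$ (so that $-I$ is an orientation-preserving rotation realized by $\e^{\pi J}$) and the strong integrability hypothesis \eqref{tot-cart2} are indispensable. Everything afterwards is bookkeeping with the $H$-grading of $\mu^D_{n;m,\sigma}$ and with the homogeneity degree $m$ of the harmonic polynomials.
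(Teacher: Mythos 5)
Your proposal is correct and is essentially the paper's own proof: the paper likewise specializes Lemma \ref{mu-sod-inv-lm} to the central rotation $-I\in\SO(D,\C)$ (written there as $(-1)^{\ii(\Omega_{12}+\cdots)}$, i.e.\ the exponential of the Cartan sum, which exists in $\SO(D,\C)$ because $D$ is even) and uses Eq.\ \eqref{tot-cart2} to identify $U(-I)$ with $(-1)^{H}$, so that the left-hand side of \eqref{mud-def} has parity $(-1)^n$ in $\u$ while the right-hand side has parity $(-1)^m$. Your write-up merely makes explicit the steps the paper leaves implicit (the splitting $\e^{\ii\pi(H+\ii J)}=\e^{\ii\pi H}\e^{-\pi J}$ using $[H,\Omega_{\alpha\beta}]=0$, and the grading bookkeeping via \eqref{muD-graded}).
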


\begin{proof}
We need to prove that the left-hand side of (\ref{mud-def}) as a function of $\u$ has parity $(-1)^n$.
This follows if we apply the previous lemma for 
$$
g \,=\, (-1)^{\ii(\Omega_{12}+\Omega_{34}+\cdots+\Omega_{k-1,k})}
\,,\quad
k \,:=\, \frac{D}{2}
$$
(recall that $D$ is even),
and use that according to Eq.~(\ref{tot-cart2}) we have
$U(g)$ $=$ $(-1)^H$. 
\end{proof}

\subsection{The case $D=2$}\label{sse5.2n-v2}

Before continuing with the general case of Theorem \ref{TH}, it will be useful to consider the special case $D=2$.
In this case, we have a chiral decomposition of the conformal Lie algebra $\mathfrak{c}_2\cong \so(4,\C)$, 
which becomes isomorphic to a direct sum of two copies of the M\"obius Lie algebra $\mathfrak{c}_1 \cong \so(3,\C) \cong \sl(2,\C)$.
To write this decomposition explicitly, let us introduce:
\beqa\label{v2e070922-1349}
L_{-1}^{\pm} 
&& \hspace{-15pt}
:=\, 
\frac{1}{2} \bigl(T_1\mp \ii T_2\bigr)
\,,\quad
L_1^{\pm} 
\,:=\, 
\frac{1}{2} \bigl(C_1\pm \ii C_2\bigr)  \,=\, \bigl(L_{-1}^{\pm}\bigr)^*
\,,\quad
\nonumber \\
L_0^{\pm} \,
&& \hspace{-15pt}
:=\, 
\frac{1}{2} \bigl(H\mp \ii \Omega_{12}\bigr)  
\,=\, 
\bigl(L_0^{\pm}\bigr)^*
\,.
\eeqa
Then commutation relations (\ref{cDcomm}) become:
\beq\label{v2e070922-1352}
\bigl[L^{\pm}_m,\,L^{\pm}_n\bigr] \,=\, (m-n) L^{\pm}_{m+n}
\,,\quad
\bigl[L^+_m,\,L^-_n\bigr] \,=\, 0 \,,
\eeq
for $m,n \in \{-1,0,1\}$.

Introduce the chiral coordinates
$$
z^{\pm} := z^1 \pm \ii z^2 \,,\quad\text{so that} \quad \z^2 = z^+z^- \,.
$$
Then in the representation of $\mathfrak{c}_2$ on $\C[\z] = \C[z^1,z^2]=\C[z^\pm]$ by differential operators, where $T_{\alpha}$ acts as $\di_{z^{\alpha}}$ for $\alpha=1,2$, we have that 
$L_{-1}^{\pm}$ are represented by $\di_{z^{\pm}}$. In this representation, $H$ acts as 
$z^1\di_{z^1}+z^1\di_{z^2}$ 
$=$
$z^+\di_{z^+}+z^-\di_{z^-}$, 
while $\Omega_{12}$ acts as
$z^1\di_{z^2}-z^2\di_{z^1}$
$=$
$\ii (z^+\di_{z^+}-z^-\di_{z^-})$; hence,
$L_0^{\pm}$ are represented by $z^{\pm}\di_{z^{\pm}}$.

Next, the integrability condition $(a)$ of Definition \ref{main-def} just means that $\ii \Omega_{12}$ is diagonalizable, while the strong integrability condition $(b)$ further implies that $V$ decomposes into a direct sum of eigenspaces:
\beq\label{v2e-grading2}
V \,=\, \mathop{\bigoplus}\limits_{(\Delta^+,\Delta^-) \,\in\, \ZP \times \ZP}
\, V_{\Delta^+,\Delta^-}
\,,\qquad
L_0^{\pm} \Bigl|_{V_{\Delta^+,\Delta^-}}
\,=\,
\Delta^{\pm} \, \mathit{id}_{V_{\Delta^+,\Delta^-}}
\,,
\eeq
where 
we use notation 
similar to Eq.\ (\ref{v2e-grading1}).
In particular, $L_0^{\pm}$ are simultaneously diagonalizable on $V$ and have integral eigenvalues.

Now, let us see how Eq.\ (\ref{mud-def}) works in this case.
Note that in the chiral coordinates $z^{\pm}$, there is a natural basis of homogeneous harmonic polynomials of degree $m$: 
\beq\label{hmpm}
h_{m,\pm} (z^+,z^-) \,:=\, (z^{\pm})^m
\quad  \text{for} \quad m > 0 \,, \qquad
h_0 (z^+,z^-) \,:=\, 1 
\,.
\eeq
It is convenient to set
$$
h_{m} := h_{m,+} \,, \quad h_{-m} := h_{m,-} \,, \qquad m > 0 \,.
$$
Then, for $\u$ $=$ $\z/(\z^2)^{\frac12}$, we have that $g_{\u}$ $=$ $e^{\ii \vartheta \Omega_{12}}$ for 
$\cos \vartheta = z^1/(\z^2)^{\frac12}$, i.e., 
$e^{2\ii\vartheta} = z^+/z^- = (u^{\pm})^{\pm 2}$.
Hence, 
$$
h_m (\u) \,=\, \Bigl(\frac{z^+}{z^-}\Bigr)^{\frac{m}{2}}
\,,\qquad m \in \Z \,,
$$
and Eq.\ (\ref{mud-def}) reads
$$
\mathop{\sum}\limits_{m \,\in\, \Z} \ 
\mu_{n;m}^2 (a \otimes b) \
\Bigl(\frac{z^+}{z^-}\Bigr)^{\frac{m}{2}}
\,=\,
\Bigl(\frac{z^+}{z^-}\Bigr)^{\frac{1}{2}(L_0^+-\Delta'{}^+-\Delta''{}^+-L_0^-+\Delta'{}^-+\Delta''{}^-)}
\mu_n^1 (a \otimes b)
\,
$$
for 
$a \in V_{\Delta'{}^+,\Delta'{}^-}$ 
and
$b \in  V_{\Delta''{}^+,\Delta''{}^-}$.
In other words, if we denote by $\mathrm{Pr}_{\Delta^+,\Delta^-}$ the projection onto $V_{\Delta^+,\Delta^-}$ in the direct sum (\ref{v2e-grading2}), then
$$
\mu^2_{n;m} (a \otimes b)
\,=\,
\mathrm{Pr}_{\Delta'{}^++\Delta''{}^++n^+,\Delta'{}^-+\Delta''{}^-+n^-}
\bigl(
\mu^1_n(a \otimes b) 
\bigr),
$$
where $n^{\pm} := (n\pm m)/2$, i.e., $m = n^+-n^-$ and $n = n^++n^-$.

\subsection{Pole bounds}\label{sse5.6n}

In this subsection, we shall specifically single out $\ii\Omega_{12}$ as in the previous subsection and diagonalize it simultaneously with $H$.

\begin{lemma}\label{v2lm-5.5-new}
Let\/ $V$ be a unitary representation of the conformal Lie algebra $\mathfrak{c}_D$, which is strongly integrable and has positive energy 
$($in the sense of Definition $\ref{main-def}$$(b)$, $(c)$$)$.
If\/ $(\Delta,j)$ is a pair of common eigenvalues for $H$ and $\ii\Omega_{12}$, respectively, then it obeys the inequality $\Delta \geqslant |j|$.
\end{lemma}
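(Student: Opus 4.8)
The plan is to use unitarity to turn the desired inequality $\Delta \geqslant |j|$ into a positivity statement for the norm of a suitable vector, and then to exploit the commutation relations \eqref{cDcomm} to analyze how the raising and lowering operators of $\mathfrak{c}_D$ move between the $(\Delta,j)$--eigenspaces. Concretely, I would single out the $\mathfrak{sl}(2,\C)$--triple attached to the chiral direction, as in the $D=2$ computation of Sect.\ \ref{sse5.2n-v2}: the operators
\beqn
L^\pm_{-1} = \tfrac12(T_1 \mp \ii T_2), \qquad
L^\pm_1 = \tfrac12(C_1 \pm \ii C_2), \qquad
L^\pm_0 = \tfrac12(H \mp \ii \Omega_{12})
\eeqn
satisfy $\bigl[L^\pm_m, L^\pm_n\bigr] = (m-n)L^\pm_{m+n}$ and $(L^\pm_{-1})^* = L^\pm_1$, $(L^\pm_0)^* = L^\pm_0$ by the anti-involution \eqref{eq2.1}. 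Note that $L^+_0 + L^-_0 = H$ and $L^+_0 - L^-_0 = -\ii\Omega_{12}$, so a common eigenvector of $H$ and $\ii\Omega_{12}$ with eigenvalues $(\Delta,j)$ is simultaneously an eigenvector of $L^\pm_0$ with eigenvalues $\Delta^\pm := \tfrac12(\Delta \mp j)$, and the claim $\Delta \geqslant |j|$ is equivalent to $\Delta^+ \geqslant 0$ and $\Delta^- \geqslant 0$.

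First I would reduce the problem to a single $\mathfrak{sl}(2,\C)$ factor. Since $L^+$ and $L^-$ commute, it suffices to show that each generates a unitary positive-energy representation of $\mathfrak{sl}(2,\C)$, so that each eigenvalue $\Delta^\pm$ of $L^\pm_0$ is nonnegative; combining $\Delta^+ \geqslant 0$ and $\Delta^- \geqslant 0$ yields $\Delta \geqslant |j|$ immediately. For a fixed factor, say $L^+$, I would take a common eigenvector $v \neq 0$ with $L^+_0 v = \Delta^+ v$ and compute the norm of the lowered vector $L^+_{-1} v$:
\beqn
\|L^+_{-1} v\|^2 = \langle L^+_{-1} v, L^+_{-1} v\rangle = \langle v, L^+_1 L^+_{-1} v\rangle = \langle v, \bigl([L^+_1,L^+_{-1}] + L^+_{-1}L^+_1\bigr) v\rangle.
\eeqn
Here I use $(L^+_{-1})^* = L^+_1$. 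The hard part will be iterating this: a single application only shows $\|L^+_{-1}v\|^2 = 2\Delta^+\|v\|^2 + \|L^+_1 v\|^2 \geqslant 2\Delta^+\|v\|^2$ when $[L^+_1,L^+_{-1}] = 2L^+_0$, which already gives $\Delta^+ \geqslant 0$ \emph{provided} $v$ is not annihilated by $L^+_1$ in a way that forces a negative contribution. To make this airtight I would instead argue by contradiction using energy positivity: if $\Delta^+ < 0$ for some eigenvector, then repeatedly applying the lowering operator $L^+_{-1}$ produces eigenvectors of $L^+_0$ with strictly decreasing eigenvalues $\Delta^+ - 1, \Delta^+ - 2, \dots$, hence eigenvectors of $H = L^+_0 + L^-_0$ with eigenvalues decreasing without bound (the $L^-_0$--eigenvalue is unchanged since $L^+$ and $L^-$ commute). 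This contradicts the positive-energy hypothesis that $H$ has spectrum in $\tfrac12\ZP$, unless the chain terminates. The chain terminates only if some iterated vector vanishes, i.e.\ $L^+_{-1}w = 0$ for a vector $w$ with $L^+_0 w = \mu < 0$; but then the norm computation $0 = \|L^+_{-1}w\|^2 = 2\mu\|w\|^2 + \|L^+_1 w\|^2 \geqslant 2\mu\|w\|^2$ forces $\mu \geqslant 0$, a contradiction. Hence $\Delta^+ \geqslant 0$.

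The main obstacle is handling the integrability/finite-dimensionality subtlety needed to run the descent argument cleanly. Strictly speaking the descent produces an infinite sequence of vectors, and I must ensure these remain genuine vectors in $V$ and that the $H$--eigenvalues they carry are honestly unbounded below; this is where I would invoke the diagonalizability of $H$ with spectrum in $\tfrac12\ZP$ (Definition \ref{main-def}$(c)$) to get the contradiction, and the fact that $L^\pm_{-1}$ shifts the $H$--eigenvalue by $-1$ (from $[H,L^\pm_{-1}] = -L^\pm_{-1}$, which follows from \eqref{cDcomm}). The cleaner route, which I would ultimately adopt, is the norm-positivity argument restricted to the lowest-weight vector of the chain: by positive energy the $L^+_0$--eigenvalues appearing in the $L^+$--cyclic span of $v$ are bounded below, so there exists a nonzero vector $w$ in this span with $L^+_0 w = \mu$ minimal and therefore $L^+_{-1}w = 0$; the norm identity then gives $\mu \geqslant 0$, and since $\Delta^+ = \mu + (\text{nonnegative integer})$ we conclude $\Delta^+ \geqslant 0$. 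The same argument applied to $L^-$ gives $\Delta^- \geqslant 0$, and together these yield $\Delta \geqslant |j|$ as claimed.
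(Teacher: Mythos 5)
Your overall strategy---splitting $\mathfrak{c}_2\subset\mathfrak{c}_D$ into the two commuting copies of $\sl(2,\C)$, reducing $\Delta\geqslant|j|$ to $\Delta^{\pm}\geqslant 0$, and extracting the sign from a norm computation at an extremal vector of the $L^+$--cyclic span---is sound, and it is genuinely different from the paper's proof, which instead invokes the structural fact that a unitary positive-energy representation of $\mathfrak{c}_D$ is generated over $\C[T_1,\dots,T_D]$ by quasi-primary vectors, verifies the bound for those, and finishes by induction on the raising operators applied to them. However, as written your argument contains a systematic raising/lowering mix-up that breaks every step of the descent.

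Concretely: from \eqref{cDcomm} one has $[H,T_\alpha]=+T_\alpha$, hence $[H,L^{\pm}_{-1}]=+L^{\pm}_{-1}$, \emph{not} $-L^{\pm}_{-1}$ as you claim; likewise $[L^+_0,L^+_{-1}]=+L^+_{-1}$. So $L^+_{-1}$ (built from translations) \emph{raises} both the $H$-- and $L^+_0$--eigenvalues by $1$, and the lowering operator is $L^+_1$ (built from the $C_\alpha$). This has three consequences. First, iterating $L^+_{-1}$ produces eigenvalues $\Delta^++1,\Delta^++2,\dots$, so no contradiction with positive energy ever arises and your contradiction argument is vacuous. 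Second, a nonzero vector $w$ of minimal $L^+_0$--eigenvalue $\mu$ in the cyclic span satisfies $L^+_1w=0$, not $L^+_{-1}w=0$. Third, and most importantly, your key inequality is backwards: from $0=\|L^+_{-1}w\|^2=2\mu\|w\|^2+\|L^+_1w\|^2$ one gets $2\mu\|w\|^2=-\|L^+_1w\|^2\leqslant 0$, i.e.\ $\mu\leqslant 0$---the opposite of what you assert. As written, the proof establishes nothing.

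The repair is to interchange the roles of $L^+_{1}$ and $L^+_{-1}$ throughout the descent: apply $L^+_1$, which lowers the $H$--eigenvalue by $1$ while leaving the $L^-_0$--eigenvalue $\Delta^-$ fixed, so positive energy forces some $w=(L^+_1)^k v\neq 0$ with $L^+_1w=0$ and $L^+_0w=\mu=\Delta^+-k$; then
\[
0\,\leqslant\,\|L^+_{-1}w\|^2\,=\,\langle w\,|\,L^+_1L^+_{-1}w\rangle\,=\,2\mu\|w\|^2
\]
gives $\mu\geqslant 0$, hence $\Delta^+=\mu+k\geqslant 0$, and symmetrically $\Delta^-\geqslant 0$. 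With this correction your argument is complete, and it is arguably more self-contained than the paper's, since it does not presuppose generation by quasi-primary vectors; note that the paper's quasi-primaries (which satisfy $C_\alpha a=0$, hence $L^{\pm}_1a=0$) are precisely where the paper runs the same norm identity that your corrected descent runs at $w$.
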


\begin{proof}
Since $V$ is a unitary positive energy representation of $\mathfrak{c}_D$, it is generated as a $\C[T_1,\dots,T_D]$--module by the subspace of quasi-primary elements, i.e., those $a \in V$ for which $C_{\alpha} a = 0$ for all $\alpha=1,\dots,D$.
We shall prove the lemma first for these quasi-primary elements.

Let us consider the subrepresentation of the algebra $\mathfrak{c}_2$ and introduce the M\"obius generators of Eqs.\ (\ref{v2e070922-1349}), (\ref{v2e070922-1352}).
If $a$ is a non-zero quasi-primary element that has eigenvalues $\Delta^{\pm}$ for $L_0^{\pm}$, respectively, then we have
$$
0 \,\leqslant\,
\|L_{-1}^{\pm}a\|^2 \,=\,
\langle L_{-1}^{\pm}a | L_{-1}^{\pm}a \rangle
\,=\,
\langle a | L_1^{\pm}L_{-1}^{\pm}a \rangle
\,=\,
2\Delta^{\pm} \|a\|^2
\,,
$$
since $L_1^{\pm}a = 0$.
Hence, $\Delta^{\pm} = (\Delta \mp j)/2$ are both non-negative.

A general eigenvector $a \in V$ with eigenvalues $(\Delta,j)$
can be obtained as a linear combination of products 
$X_1 \cdots X_{\ell}$ acting on quasi-primary vectors, where 
$X_1,\dots,X_{\ell} \in \{L_{-1}^+, L_{-1}^-, T_3 ,\dots, T_D\}$.
Then the bound $\Delta \geqslant |j|$ is established by a straightforward induction on $\ell$.
\end{proof}

\begin{lemma}\label{Lm.5.6-1new}
For every\/ $a, b \in V$, there exists\/ $N^D_{a,b}$ $\in$ $\ZP$ such that\/ $\mu^D_{n;m,\sigma} (a \otimes b)=0$ for all possible\/ $m$ and\/ $\sigma$ if\/ $\frac{m-n}{2}>N^D_{a,b}$ $($cf.\ Eqs.\ \eqref{vert4-nn1}, \eqref{e03922-1604}$)$.
\end{lemma}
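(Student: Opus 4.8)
The plan is to bound the pole order $\frac{m-n}{2}$ (which by \coref{mu-parity-cr} is an integer) by singling out $\ii\Om_{12}$, as announced at the start of this subsection, and combining the positivity inequality of \leref{v2lm-5.5-new} with $\SO(D,\C)$-covariance. First I reduce to homogeneous $a\in V_{\Delta'}$, $b\in V_{\Delta''}$: by bilinearity of the maps $\mu^D_{n;m,\sigma}$ and positive energy, the general bound will be the maximum of the homogeneous ones. Recall from \eqref{muD-graded} that $\mu^D_{n;m,\sigma}(a\otimes b)$ lies in $V_{\Delta'+\Delta''+n}$, and by \leref{v2lm-5.5-new} every $\ii\Om_{12}$-weight occurring in $V_{\Delta'+\Delta''+n}$ has absolute value at most $\Delta'+\Delta''+n$. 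I also choose the harmonic basis $\{h_{m,\sigma}\}$ of $\C_m^\harm[\z]$ to diagonalize $\ii\Om_{12}$, with weights $\ell(m,\sigma)$ satisfying $|\ell(m,\sigma)|\le m$ and extreme values $\pm m$ attained by $(z^1\pm\ii z^2)^m$.

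The elementary ingredient is a weight-matching identity: applying \leref{mu-sod-inv-lm} with $g=e^{\theta\Om_{12}}$ and comparing coefficients of $h_{m,\sigma}(\u)$ shows that, when $a,b$ are in addition $\ii\Om_{12}$-eigenvectors of weights $j',j''$, the coefficient $\mu^D_{n;m,\sigma}(a\otimes b)$ is an $\ii\Om_{12}$-eigenvector of weight $j'+j''-\ell(m,\sigma)$. Feeding this into \leref{v2lm-5.5-new} bounds the \emph{extreme}-weight coefficients, but the intermediate weights $|\ell|<m$ give no control on $m$; overcoming this is the main obstacle.

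I would resolve it using the irreducibility of $\C_m^\harm[\z]$ as an $\so(D,\C)$-module (valid for $D\ge3$; the case $D=2$ is \seref{sse5.2n-v2}, where the only harmonics $(z^1\pm\ii z^2)^m$ are already of extreme weight, so the bound of the previous paragraph applies at once). Suppose the whole degree-$m$ component $\sum_\sigma\mu^D_{n;m,\sigma}(a\otimes b)\,h_{m,\sigma}$ is nonzero. Since the $\SO(D,\C)$-orbit of the extreme-weight vector $(z^1+\ii z^2)^m$ spans the irreducible module $\C_m^\harm[\z]$, the covariance \leref{mu-sod-inv-lm} produces a $g\in\SO(D,\C)$ with $\mu^D_{n;m,+}\bigl(U(g)^{-1}a\otimes U(g)^{-1}b\bigr)\ne0$, where $\mu^D_{n;m,+}$ denotes the coefficient of $(z^1+\ii z^2)^m$. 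Decomposing $U(g)^{-1}a\in V_{\Delta'}$ and $U(g)^{-1}b\in V_{\Delta''}$ into $\ii\Om_{12}$-eigenvectors — whose weights are bounded by $\Delta'$ and $\Delta''$ by \leref{v2lm-5.5-new} — the weight-matching identity shows that some eigencomponent of this nonzero coefficient has weight $s+m$, where $s=j'+j''$ satisfies $|s|\le\Delta'+\Delta''$, while lying in $V_{\Delta'+\Delta''+n}$. For $m>\Delta'+\Delta''$ this weight is positive, so \leref{v2lm-5.5-new} gives $s+m\le\Delta'+\Delta''+n$, that is, $m\le n+2(\Delta'+\Delta'')$.

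Therefore $\frac{m-n}{2}>\Delta'+\Delta''$ forces $\mu^D_{n;m,\sigma}(a\otimes b)=0$ for all $\sigma$, and I take $N^D_{a,b}=\Delta'+\Delta''$ for homogeneous $a,b$ (and the maximum over homogeneous components in general). The essential point, and the place where unitarity enters through \leref{v2lm-5.5-new}, is the rotation step: positivity only constrains extreme $\ii\Om_{12}$-weights, so $\SO(D,\C)$-covariance is needed to rotate an arbitrary nonzero harmonic component into a form detectable at extreme weight.
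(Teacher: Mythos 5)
Your proof is correct and follows essentially the paper's own route: prove the contrapositive, use the equivariance of \leref{mu-sod-inv-lm} together with the irreducibility of $\C_m^{\harm}[\z]$ to force a nonzero coefficient at an extreme $\ii\Om_{12}$-weight, and then bound $m-n$ by the unitarity inequality $\Delta\geqslant|j|$ of \leref{v2lm-5.5-new}. The one mechanical difference is how irreducibility enters: the paper applies Schur's Lemma to the equivariant map $h_{m,\sigma}\mapsto\mu^D_{n;m,\sigma}\in\Hom(W'\otimes W'',W)$ (nonzero, hence injective, so the coefficient of $h_{m,\sigma_m}=(z^1-\ii z^2)^m$ is a nonzero bilinear map, which is then evaluated on weight vectors $c,d$), whereas you rotate $a\otimes b$ by a group element via orbit-spanning; these are equivalent uses of the same fact, your bound $N^D_{a,b}=\Delta'+\Delta''$ is just a slightly coarser version of the paper's $\frac12(\Delta'+J'+\Delta''+J'')$, and your separate treatment of $D=2$ (where $\C_m^{\harm}[\z]$ is reducible, so Schur's Lemma does not literally apply) is actually more careful than the paper's. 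One bookkeeping slip: under your own weight-matching identity and your labeling of $(z^1+\ii z^2)^m$, the nonzero eigencomponent has $\ii\Om_{12}$-weight $s-m$, not $s+m$ (the same is true under the paper's conventions, where $(z^1+\ii z^2)^m$ has weight $-m$ and the coefficient of a weight-$\ell$ harmonic has weight $j'+j''+\ell$); this is harmless, because $|s-m|\geqslant m-|s|$ and \leref{v2lm-5.5-new} still give $m\leqslant n+2(\Delta'+\Delta'')$, and indeed this version needs no restriction to $m>\Delta'+\Delta''$, which also closes the case $m\leqslant\Delta'+\Delta''$, $\frac{m-n}{2}>\Delta'+\Delta''$ that your write-up does not address explicitly (there one would otherwise have to invoke $V_{\Delta'+\Delta''+n}=0$ by positive energy).
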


\begin{proof}
We prove the contrapositive statement: assuming that some
$\mu^D_{n;m,\sigma_0} (a$ $\otimes$ $b)\neq 0$
we shall find an upper bound $N^D_{a,b}$ for
$$
N 
\,:=\,
\frac{m-n}{2}
\,,
$$
which depends only on $a$ and $b$. 
We can find irreducible $\so(D,\C)$--sub\-rep\-re\-sen\-ta\-tions $W' \subseteq V_{\Delta'}$ and $W'' \subseteq V_{\Delta''}$ such that $a \in W'$ and $b \in W''$. We set
$$
W \,:=\, \mu^1_n(W' \otimes W'') \,\subseteq\, V_{\Delta'+\Delta''+n}
\,,
$$
which is again an $\so(D,\C)$--subrepresentation.

Recall that the element $\Omega_{\alpha\beta}\in\so(D,\C)$ acts on $\C[\z]$ as $z^\alpha\di_{z^\beta}-z^\beta\di_{z^\alpha}$.
As $\ii\Omega_{12}$ is Hermitian (see \eqref{eq2.1}), we can choose the basis $\{h_{m,\sigma}(\z)\}_{\sigma = 1}^{\mathfrak{h}_m^D}$ of $\C_m^\harm[\z]$
so that it consists of eigenvectors for $\ii\Omega_{12}$. 
One of the eigenvalues is $m$, and we let it correspond to the index $\sigma_m$.
In fact, explicitly we can set (cf.\ \eqref{hmpm}):
$$
h_{m,\sigma_m} (\z) \,:=\, (z^1 - \ii z^2)^m \,,
$$
and then we have
$$
\ii\Omega_{12} \, h_{m,\sigma_m} (\z)
\,=\,
m \,h_{m,\sigma_m} (\z)
\,.
$$

As a consequence of Lemma \ref{mu-sod-inv-lm}, the linear map
\beq\label{so-map}
\C^{\harm}_m[\z] \to \Hom(W' \otimes W'', W) \,, \qquad
h_{m,\sigma} (\z) \mapsto \mu^D_{n;m,\sigma} \,,
\eeq
is a homomorphism of $\so(D,\C)$--modules (cf.\ \eqref{muD-hat}).
This map is non-zero, because by assumption $\mu^D_{n;m,\sigma_0} \neq 0$.
Since $\C^{\harm}_m[\z]$ is an irreducible $\so(D,\C)$--module (cf.\ Appendix \ref{sse5.3n}), it follows from Schur's Lemma that the map \eqref{so-map} is injective.

In particular, there exist $c \in W'$ and $d \in W''$ such that
$$
\mu^D_{n;m,\sigma_m} (c \otimes d) \,\neq\, 0
\,.
$$
Without loss of generality, we can assume that $c$ and $d$ are eigenvectors for $\ii\Omega_{12}$ with eigenvalues $k'$ and $k''$, respectively.
Hence,
$\mu^D_{n;m,\sigma_m}(c \otimes d)$ has eigenvalues $(\Delta'+\Delta''+n,k'+k''+m)$ for $H$ and $\ii\Omega_{12}$, respectively.
By Lemma \ref{Lm.5.6-1new}, we have that
$$
\Delta'+\Delta''+n \,\geqslant\,k'+k''+m
$$
and so,
$m-n \leqslant \Delta'+\Delta''-k'-k''$,
which finally gives us
\beq\label{newpolebound-070922-0925}
N \,\leqslant\,
\frac{1}{2}\bigl(\Delta'+J'+\Delta''+J''\bigr)
\,,
\eeq
where $-J'$ and $-J''$ are the minimal eigenvalues of $\ii\Omega_{12}$ on $W'$ and $W''$, respectively.
This proves the lemma.
\end{proof}

\begin{remark}\label{rm080922-0941}
In the physically most interesting case of $D=4$, there is a local isomorphism between $\SO(4,\C)$ and $\SO(3,\C) \times \SO(3,\C)$. 
Suppose that in the above proof the weights of $W'$, $W''$ are 
$$
(j_1',j_2')\,,\quad
(j_1'',j_2'')\,,\quad
$$
respectively, as irreducible $\so(3,\C) \oplus \so(3,\C)$--representations. Then $J'=j_1'+j_2'$ and $J''=j_1''+j_2''$, and estimate (\ref{newpolebound-070922-0925}) becomes:
$$
N 
\,\leqslant\,
\frac{1}{2}\,
\bigl(
\Delta'+j_1'+j_2'+\Delta''+j_1''+j_2''
\bigr)
\,,
$$
which corresponds to an earlier result of \cite[Proposition 4.3]{NT01}
(there the estimate is a bit stronger but is derived only for $D=4$).
\end{remark}

Now, we define $Y_D(a,\z)b$ via Eqs.\ (\ref{vert4-nn1}), (\ref{mud-def}), and as a direct corollary of the above lemma,
$Y_D(a,\z)b$ becomes a series from the space $V \llb \z \rrb_{\z^2}$.
Since $g_{\u}$ $=$ $1$ for $\u$ $=$ $\e_1$, it follows that
$$
\mathop{\sum}\limits_{m = 0}^{\infty} \
\mathop{\sum}\limits_{\sigma = 1}^{\mathfrak{h}_m^D} \ 
\mu_{n;m,\sigma}^D (a \otimes b) \
h_{m,\sigma} (\u)
\Bigl|_{\u = \e_1} \,=\, \mu_n^1 (a \otimes b) \,,
$$
and hence
\beq\label{D1restr}
Y_D(a,\z)b \bigl|_{\z = x\e_1} \,=\, Y_1(a,x)b
\,.
\eeq
In particular, this implies that the reconstructed state-field correspondence $Y_D$ is non-zero, because $Y_1\ne0$ by the definition of a vertex algebra.

From Eqs.\ (\ref{vert4-nn1}), (\ref{mud-def}), we derive the more explicit formula:
\beqa\label{vert4-nn1-mod090422}
Y_D (a,\z) b 
&&\hspace{-15pt} 
= \, 
\sum_{n \,\geqslant\, -N^D_{a,b}} \
\mathop{\sum}\limits_{m = 0}^{\infty} \
\mathop{\sum}\limits_{\sigma = 1}^{\mathfrak{h}_m^D} \
\mu^D_{n;m,\sigma}(a \otimes b) \ (\z^2)^{\frac{n}{2}} \, h_{m,\sigma}(\u)
\nonumber \\
&&\hspace{-15pt} 
= \, 
\sum_{n \,\geqslant\, -N^D_{a,b}} \
U(g_{\u(\z)}) \, \mu_n^1 \bigl(U (g_{\u(\z)})^{-1}a \otimes U (g_{\u(\z)})^{-1}b\bigr) \ (\z^2)^{\frac{n}{2}}
\nonumber \\
&&\hspace{-15pt} 
= \, 
U(g_{\u(\z)}) \,  Y_1\bigl( U(g_{\u(\z)})^{-1}a, (\z^2)^{\frac{1}{2}}\bigr)\,U(g_{\u(\z)})^{-1} b
\eeqa
for $\u(\z) = {\z}/(\z^2)^{\frac{1}{2}}$, where the second and the third lines can be viewed as generating series for the definition of $\mu^D_{n;m,\sigma}$ in the first line.
Note that if we restrict $Y_D(a,\z)b$ to $a \in V_{\Delta' }$ and $b \in V_{\Delta''}$ and then project the result onto $V_{\Delta}$ in the direct sum
$V = \mathop{\bigoplus}\limits_{\Delta \,\in\, \frac{1}{2}\ZP} \,V_{\Delta}$,
then it will give rise to a map between finite-dimensional spaces whose matrix entries are polynomials in $\z$ and $(\z^2)^{-\frac{1}{2}}$, but due to the above results all half-integer powers of $\z^2$ will be cancelled.

\subsection{Covariance properties}\label{sse5.5n}

\begin{lemma}\label{SOD-covar}
The series\/ $Y_D (a,\z) b$, constructed by Eqs.~\eqref{vert4-nn1}, \eqref{mud-def}, is\/ $\SO(D,$ $\C)$--equivariant, and is the unique\/ $\SO(D,\C)$--equivariant series whose restriction to\/ $\z$ $=$ $x\e_1$ is equal to\/ $Y_1 (a,x)b$.
\end{lemma}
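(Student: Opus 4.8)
The plan is to extract both assertions directly from the coefficient-level equivariance already recorded in Lemma \ref{mu-sod-inv-lm}, once the meaning of equivariance is pinned down. I take a series $F(a,\z)b$ to be $\SO(D,\C)$-\emph{equivariant} if $U(g)\,F(a,\z)b = F(U(g)a,g\z)\,U(g)b$ for every $g\in\SO(D,\C)$, where $g\z$ denotes the substitution of the constant linear map $g$ into the formal vector variable; since $g$ preserves $\z^2$ and preserves harmonicity of each $h_{m,\sigma}$, this substitution is a well-defined operation mapping $V\llb\z\rrb_{\z^2}$ to itself.

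For the equivariance of $Y_D$, I would expand $Y_D(U(g)a,g\z)U(g)b$ using \eqref{vert4-nn1}. Because $(g\z)^2=\z^2$ and each $h_{m,\sigma}$ is homogeneous of degree $m$, one has $h_{m,\sigma}(g\z)=(\z^2)^{m/2}h_{m,\sigma}(g\u)$ with $\u=\u(\z)=\z/(\z^2)^{1/2}$, and likewise $h_{m,\sigma}(\z)=(\z^2)^{m/2}h_{m,\sigma}(\u)$. Collecting the powers of $\z^2$, each side of the desired identity reduces, for fixed $n$, to a sum carrying the common factor $(\z^2)^{n/2}$; cancelling it leaves the coefficient identity $\sum_{m,\sigma}\mu^D_{n;m,\sigma}(U(g)a\otimes U(g)b)\,h_{m,\sigma}(g\u)=U(g)\sum_{m,\sigma}\mu^D_{n;m,\sigma}(a\otimes b)\,h_{m,\sigma}(\u)$. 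This is exactly Lemma \ref{mu-sod-inv-lm} applied with $a,b$ replaced by $U(g)a,U(g)b$. Hence $Y_D$ is equivariant, with no new computation beyond the harmonic bookkeeping.

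For uniqueness, let $\widetilde Y(a,\z)b$ be any equivariant series whose restriction to $\z=x\e_1$ equals $Y_1(a,x)b$, and expand it as in \eqref{vert4-nn1} with coefficients $\widetilde\mu_{n;m,\sigma}$. Repeating the bookkeeping of the previous paragraph turns the equivariance of $\widetilde Y$ into the coefficient identity $U(g)\sum_{m,\sigma}\widetilde\mu_{n;m,\sigma}(a\otimes b)h_{m,\sigma}(\u)=\sum_{m,\sigma}\widetilde\mu_{n;m,\sigma}(U(g)a\otimes U(g)b)h_{m,\sigma}(g\u)$. Since $g_{\e_1}=1$, the restriction hypothesis together with \eqref{D1restr} yields $\sum_{m,\sigma}\widetilde\mu_{n;m,\sigma}(a\otimes b)h_{m,\sigma}(\e_1)=\mu^1_n(a\otimes b)$. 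I would then specialize the coefficient identity to $\u=\e_1$ and $g=g_{\vv}$ (so $g_{\vv}(\e_1)=\vv$): its left-hand side becomes $U(g_{\vv})\,\mu^1_n(a\otimes b)$ by the restriction relation, and replacing $a,b$ by $U(g_{\vv})^{-1}a,U(g_{\vv})^{-1}b$ gives $\sum_{m,\sigma}\widetilde\mu_{n;m,\sigma}(a\otimes b)h_{m,\sigma}(\vv)=U(g_{\vv})\,\mu^1_n\bigl(U(g_{\vv})^{-1}a\otimes U(g_{\vv})^{-1}b\bigr)$ for every unit vector $\vv$. The right-hand side is precisely the defining relation \eqref{mud-def} of $\mu^D_{n;m,\sigma}$; since the $h_{m,\sigma}(\vv)$ form a basis of the polynomial functions on the quadric $\{\vv^2=1\}$ (Remark \ref{remharm}), matching coefficients forces $\widetilde\mu_{n;m,\sigma}=\mu^D_{n;m,\sigma}$, i.e.\ $\widetilde Y=Y_D$.

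The step that needs the most care, and which I expect to be the main obstacle, is the dictionary between the two pictures in which \eqref{vert4-nn1-mod090422} lives: the genuine unit vector $\u\in\C^D$ used in the generating relation \eqref{mud-def}, and the formal vector variable $\z$ entering $V\llb\z\rrb_{\z^2}$. I must justify that the substitution $\z\mapsto g\z$ for a fixed $g\in\SO(D,\C)$ is legitimate on these series (it is, since it fixes $\z^2$ and carries each harmonic $h_{m,\sigma}$ to a harmonic of the same degree), and that equality of the two harmonic expansions is equivalent to equality of the corresponding polynomial functions $\u\mapsto\sum_{m,\sigma}(\cdots)\,h_{m,\sigma}(\u)$ on the quadric, so that it indeed forces equality of the individual coefficients. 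Once this dictionary is in place, both claims reduce to Lemma \ref{mu-sod-inv-lm} and \eqref{mud-def}, and the remaining manipulations are routine.
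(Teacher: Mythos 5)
Your proposal is correct and follows essentially the same route as the paper's proof: equivariance is reduced to the coefficient identity of Lemma \ref{mu-sod-inv-lm}, and uniqueness is obtained by showing that equivariance plus the restriction at $\z=x\e_1$ forces any candidate's coefficients to satisfy the defining relation \eqref{mud-def}. You merely spell out the formal-series/unit-vector dictionary and the linear independence of the $h_{m,\sigma}$ on the quadric, which the paper leaves implicit (and your citation of \eqref{D1restr} in the uniqueness step is unnecessary, since the restriction hypothesis on $\widetilde Y$ alone gives the needed identity).
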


\begin{proof}
The $\SO(D,\C)$--equivariance of $Y_D (a,\z) b$ is equivalent to the $\SO(D,\C)$--equi\-var\-iance of the left-hand side of Eq.~(\ref{mud-def}),
which was established in Lemma \ref{mu-sod-inv-lm}. To see the uniqueness of $Y_D (a,\z) b$, note that
the $\SO(D,\C)$--equivariance of the left-hand side of Eq.~(\ref{mud-def}) and the fact that its restriction to $\u$ $=$ $\e_1$ is $Y_1(a,x)b$ implies Eq.~(\ref{mud-def}).
\end{proof}

\begin{lemma}\label{lm090422-1429}
The translation operators $T_{\alpha}$ are derivations of all maps $\mu^{D}_{n;m,\sigma}$, i.e., we have
$$
T_{\alpha} \,\mu^{D}_{n;m,\sigma} (a\otimes b)
\,=\,
\mu^{D}_{n;m,\sigma} (T_{\alpha}a\otimes b)
+
\mu^{D}_{n;m,\sigma} (a\otimes T_{\alpha} b)
\,,
$$
for all $a,b \in V$ and all possible values of\/ $\alpha$, $n$, $m$ and $\sigma$.
Equivalently, we have
\beq\label{e090422-1624}
\bigl[T_{\alpha}, Y_D(a,\z)\bigr]
\,=\,
Y_D(T_{\alpha}a,\z)
\,.
\eeq

\end{lemma}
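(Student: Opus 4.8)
The plan is to prove the equivalent statement \eqref{e090422-1624}, namely that $[T_\alpha, Y_D(a,\z)] = Y_D(T_\alpha a, \z)$, and then read off the derivation property of the $\mu^D_{n;m,\sigma}$ by comparing coefficients in the expansion \eqref{vert4-nn1}. The most natural route is to exploit the generating-series formula from the third line of \eqref{vert4-nn1-mod090422}, namely
\beq\label{eqn-gen-series-plan}
Y_D (a,\z) b \,=\, U(g_{\u(\z)}) \, Y_1\bigl( U(g_{\u(\z)})^{-1}a, (\z^2)^{\frac{1}{2}}\bigr)\,U(g_{\u(\z)})^{-1} b \,,
\eeq
together with the translation covariance $[T_1, Y_1(a,x)] = Y_1(T_1 a, x)$, which holds since $Y_1(T_1 a,x) = \di_x Y_1(a,x)$ by the dimension-$1$ translation axiom. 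First I would establish the statement along the line, that is, verify $[T_\alpha, Y_D(a,\z)]\big|_{\z = x\e_1} = Y_D(T_\alpha a, \z)\big|_{\z = x\e_1}$ directly. For $\alpha = 1$ this is just the dimension-$1$ covariance after using the restriction \eqref{D1restr}; for $\alpha \geqslant 2$ one uses the commutation relations of Proposition \ref{ploc2} at the point $\u = \e_1$ (where $g_{\u} = 1$), noting that the terms proportional to $x$ or $\di_x$ that appear there should cancel against the differentiated prefactors.

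Since by Lemma \ref{SOD-covar} both sides of \eqref{e090422-1624} are $\SO(D,\C)$-equivariant series in $\z$, it would suffice to show they agree after restriction to the line $\z = x\e_1$, \emph{provided} the operator $\ad T_\alpha$ interacts compatibly with the $\SO(D,\C)$-equivariance. The cleanest argument avoids this subtlety by working directly with \eqref{eqn-gen-series-plan}. Applying $\ad T_\alpha$ and using the Leibniz rule for the supercommutator, I would expand
\beqs
[T_\alpha, Y_D(a,\z)]b &=& [T_\alpha, U(g_{\u(\z)})]\, Y_1(U(g_{\u})^{-1}a,(\z^2)^{\frac12}) U(g_{\u})^{-1} b \\
&& + U(g_{\u})\,[T_\alpha, Y_1(\,\cdot\,,(\z^2)^{\frac12})]\,\cdots + U(g_{\u}) Y_1(\cdots)\,[T_\alpha, U(g_{\u})^{-1}]\, b,
\eeqs
and then use the commutation relations between $T_\alpha$ and the rotation generators $\Omega_{1\beta}$ encoded in $g_{\u} = \exp(\vartheta\,\Omega_{1,\u'_\perp})$ from \eqref{cDcomm}. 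The key identity is that conjugation by $U(g_{\u})$ rotates the translation generators: $U(g_{\u})^{-1} T_\alpha\, U(g_{\u})$ is a $\u$-dependent linear combination of $T_1,\dots,T_D$ whose coefficients are precisely the matrix entries of $g_{\u}^{-1}\in\SO(D,\C)$ acting on the vector $(T_1,\dots,T_D)$. Substituting $T_\alpha = \sum_\beta (g_{\u})_{\alpha\beta}\,U(g_{\u}) T_\beta U(g_{\u})^{-1}$ should let every term reorganize into $U(g_{\u})\,Y_1(T_\beta(\cdot),\dots)U(g_{\u})^{-1}$, i.e.\ into $Y_D(T_\beta a,\z)$-type terms weighted by the rotation matrix, which reassemble to $Y_D(T_\alpha a,\z)$.

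The main obstacle I anticipate is the bookkeeping of the $\u$-dependence of $g_{\u(\z)}$ when $\di_{z^\alpha}$ hits the prefactor $U(g_{\u(\z)})$ and the argument $(\z^2)^{1/2}$: since $\u(\z) = \z/(\z^2)^{1/2}$, these derivative terms are genuinely present and must cancel in a nontrivial way, mirroring exactly the extra $x$- and $\di_x$-terms in the Proposition \ref{ploc2} relations. The cleanest way to control this is probably \emph{not} to differentiate \eqref{eqn-gen-series-plan} but instead to work coefficient-wise: from \eqref{mud-def} and the $\so(D,\C)$-equivariance established in Lemma \ref{mu-sod-inv-lm}, together with the fact that $\mu^1_n$ is a derivation for $T_1$ (from dimension-$1$ translation covariance) and is $\so(D-1,\C)$-equivariant (Lemma \ref{lm5.x1}), one reduces the derivation property of $\mu^D_{n;m,\sigma}$ to the single identity at $\u=\e_1$ after averaging over the group. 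Concretely, I would apply $T_\alpha$ to both sides of \eqref{mud-def}, push it through $U(g_{\u})$ using the $\mathfrak{c}_D$-commutators $[T_\alpha,\Omega_{1\beta}]$, and use that $\mu^1_n$ intertwines $T_1$ as a derivation; the $\SO(D,\C)$-equivariance then propagates this derivation property from the basepoint $\e_1$ to all $\u$, yielding the claim for every $\alpha$. The equivalence with \eqref{e090422-1624} is then immediate by matching the coefficients of $(\z^2)^{n/2}h_{m,\sigma}(\u)$.
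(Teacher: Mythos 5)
Your proposal is correct and takes essentially the same route as the paper: the paper's proof applies $T(\vv)=\sum_\beta v^\beta T_\beta$ to the defining relation \eqref{mud-def}, moves it through $U(g_{\u})$ via the conjugation identity $T(\vv)\,U(g_{\u})=U(g_{\u})\,T(g_{\u}^{-1}\vv)$, and uses that every such rotated translation is a derivation of $\mu^1_n$ (by Proposition \ref{ploc2}) --- exactly the rotation-of-translations mechanism of your second and third paragraphs. One caution on phrasing: what is needed is the derivation property of $\mu^1_n$ for \emph{all} $T_\beta$, $\beta=1,\dots,D$ (which is what Proposition \ref{ploc2} supplies), not merely for $T_1$ together with $\so(D-1,\C)$-equivariance as your final paragraph suggests; since your second paragraph already invokes the full set of relations, this is a slip of wording rather than a gap.
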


\begin{proof}
Consider the linear combination $T(\vv) := \mathop{\sum}\limits_{\beta = 1}^{D} v^{\beta} T_{\beta}$ for an arbitrary 
complex vector $\vv = (v^1,\dots,v^D) \in \C^D$. We need to prove that all $T(\vv)$ are derivations of $\mu^{D}_{n;m,\sigma}$, and
we already know that $T(\vv)$ are derivations of $\mu^1_n$ for every $\vv$ and $n$.

Note that $T(\vv)\,U(g_{\u}) = U(g_{\u})\,T(\vv')$ where $\vv' := g_{\u}^{-1}(\vv)$.
Then we apply Eq.\ (\ref{mud-def}) and compute:
\begin{align*}
\mathop{\sum}\limits_{m = 0}^{\infty} \
& \mathop{\sum}\limits_{\sigma = 1}^{\mathfrak{h}_m^D} \ 
\Bigl(T(\vv) \,
\mu_{n;m,\sigma}^D (a \otimes b) \Bigr)\,
h_{m,\sigma} (\u)
\\ &
=\,
T(\vv) \,
U(g_{\u}) \, \mu_n^1 \bigl(U (g_{\u})^{-1}a \otimes U (g_{\u})^{-1}b\bigr)
\\ &
=\,
U(g_{\u}) \, T(\vv') \, \mu_n^1 \bigl(U (g_{\u})^{-1}a \otimes U (g_{\u})^{-1}b\bigr)
\\ &
=\,
U(g_{\u}) \, \mu_n^1 \bigl(U (g_{\u})^{-1} \, T(\vv) \, a \otimes U (g_{\u})^{-1}b\bigr)
\\ &
\quad+\
U(g_{\u}) \, \mu_n^1 \bigl(U (g_{\u})^{-1} \, a \otimes U (g_{\u})^{-1} T(\vv) \, b\bigr)
\\ &
=\,
\mathop{\sum}\limits_{m = 0}^{\infty} \
\mathop{\sum}\limits_{\sigma = 1}^{\mathfrak{h}_m^D} \ 
\Bigl(
\mu_{n;m,\sigma}^D (T(\vv) \, a \otimes b) 
+
\mu_{n;m,\sigma}^D (a \otimes T(\vv) \, b) 
\Bigr)\,
h_{m,\sigma} (\u)
\,,
\end{align*}
which completes the proof of the lemma.
\end{proof}

\begin{lemma}\label{lm090422-1725}
We have 
\beq\label{e090422-1726}
Y_D(T_{\alpha}a,\z) \, b 
\,=\, 
\di_{z^{\alpha}} Y_D(a,\z) \, b
\eeq
for all $a,b \in V$ and $\alpha=1,\dots,D$.
\end{lemma}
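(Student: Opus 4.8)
The plan is to prove \eqref{e090422-1726} by reducing it, via $\SO(D,\C)$--equivariance, to an identity on the line $\z=x\e_1$, and then to verify that identity by a direct computation with the defining formula \eqref{vert4-nn1-mod090422} together with \prref{ploc2}. I set
\[
F_\alpha^{a,b}(\z) \,:=\, Y_D(T_\alpha a,\z)b - \di_{z^\alpha}Y_D(a,\z)b \,\in\, V\llb\z\rrb_{\z^2}\,,
\]
so the goal is $F_\alpha^{a,b}\equiv 0$ for all $a,b\in V$ and all $\alpha$. The first step is to observe that the $D$--tuple $(F_1^{a,b},\dots,F_D^{a,b})$ transforms as a vector under $\SO(D,\C)$. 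Indeed, $Y_D$ is $\SO(D,\C)$--equivariant by \leref{SOD-covar}, i.e.\ $U(g)Y_D(a,\z)b=Y_D(U(g)a,g\z)U(g)b$, while $U(g)T_\alpha U(g)^{-1}=\sum_\beta g_{\beta\alpha}T_\beta$ because the $T_\alpha$ span a vector representation of $\so(D,\C)$ (see \eqref{cDcomm}); hence both $Y_D(T_\alpha a,\z)b$ and $\di_{z^\alpha}Y_D(a,\z)b$ satisfy $U(g)\,(\cdot)_\alpha=\sum_\beta g_{\beta\alpha}\,(\cdot)_\beta$ with $a,b$ rotated and $\z$ replaced by $g\z$. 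Since every $\z$ with $\z^2\neq0$ can be written as $g_{\u}(x\e_1)$ with $x=(\z^2)^{\frac12}$ and $\u=\z/x$, applying $U(g_{\u})$ to the relation $F_\alpha^{a,b}(x\e_1)=0$ and inverting the invertible matrix $\bigl(g_{\u}\bigr)_{\beta\alpha}\in\SO(D,\C)$ shows that $F_\alpha^{a,b}$ vanishes everywhere once it vanishes on the line. This is the vector--valued analogue of the uniqueness in \leref{SOD-covar}, and it reduces the lemma to proving $F_\alpha^{a,b}(\z)\big|_{\z=x\e_1}=0$ for every $\alpha$.

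For $\alpha=1$ this is immediate: the operation $\di_{z^1}$ commutes with the restriction $\z=(x,0,\dots,0)$ (see \seref{sec2.3}), so
$\di_{z^1}Y_D(a,\z)b\big|_{\z=x\e_1}=\di_x Y_1(a,x)b=Y_1(T_1a,x)b$
by the translation covariance of the vertex algebra $(V,Y_1)$, which is exactly $Y_D(T_1a,\z)b\big|_{\z=x\e_1}$.

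The main obstacle is the transverse case $2\le\alpha\le D$, where $\di_{z^\alpha}$ does \emph{not} commute with the restriction to the line and one must use the rotational part of the construction. Here I would differentiate the formula $Y_D(a,\z)b=U(g_{\u})\,Y_1\bigl(U(g_{\u})^{-1}a,(\z^2)^{\frac12}\bigr)\,U(g_{\u})^{-1}b$ at $\z=x\e_1$. Writing $r=(\z^2)^{\frac12}$ and $\u=\z/r$, one checks that at $\u=\e_1$ one has $\di_{z^\alpha}r\big|_{\z=x\e_1}=0$ and $\di_{z^\alpha}u^\beta\big|_{\z=x\e_1}=\delta^\beta_\alpha/x$ for $\alpha\ge2$, while $g_{\e_1}=1$ and $\di_{u^\alpha}U(g_{\u})\big|_{\u=\e_1}=\Omega_{1\alpha}$ (from $g_{\u}=e^{\vartheta\Omega_{1,\u'_{\perp}}}$); since $U(g_{\u})$ acts by polynomials in $\u$ on each graded piece (property $(ii)$ of $g_{\u}$), these derivatives are legitimate. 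The product rule then gives
\[
\di_{z^\alpha}Y_D(a,\z)b\big|_{\z=x\e_1}
\,=\,\tfrac{1}{x}\bigl(\,[\Omega_{1\alpha},Y_1(a,x)]b-Y_1(\Omega_{1\alpha}a,x)b\,\bigr)\,.
\]
Now \prref{ploc2} supplies precisely $[\Omega_{1\alpha},Y_1(a,x)]=Y_1(\Omega_{1\alpha}a,x)+x\,Y_1(T_\alpha a,x)$, so the right-hand side collapses to $Y_1(T_\alpha a,x)b=Y_D(T_\alpha a,\z)b\big|_{\z=x\e_1}$, that is $F_\alpha^{a,b}(\z)\big|_{\z=x\e_1}=0$.

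Combining this with the equivariance reduction of the first paragraph yields $F_\alpha^{a,b}\equiv0$, which is \eqref{e090422-1726}. Equivalently, in the transverse computation one may bypass the explicit differentiation of $U(g_{\u})$ by restricting to the line the infinitesimal rotation covariance \eqref{CVA2} for $Y_D$ (obtained by differentiating the finite $\SO(D,\C)$--equivariance of \leref{SOD-covar}) and again invoking \prref{ploc2}; both routes isolate the same point, namely that the transverse derivative on the axis is governed by $\Omega_{1\alpha}$ and that the discrepancy between $[\Omega_{1\alpha},Y_1]$ and $Y_1(\Omega_{1\alpha}\,\cdot\,)$ is exactly $x\,Y_1(T_\alpha\,\cdot\,)$.
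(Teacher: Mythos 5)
Your proof is correct and follows essentially the same route as the paper: both differentiate the generating formula $Y_D(a,\z)b=U(g_{\u(\z)})\,Y_1\bigl(U(g_{\u(\z)})^{-1}a,(\z^2)^{\frac12}\bigr)U(g_{\u(\z)})^{-1}b$, invoke the relation $[\Omega_{1\alpha},Y_1(a,x)]=Y_1(\Omega_{1\alpha}a,x)+x\,Y_1(T_\alpha a,x)$ from \prref{ploc2}, and use $\SO(D,\C)$--equivariance to localize the verification to the line $\z=x\e_1$. The only difference is organizational: the paper differentiates at general $\z$ (introducing $\widetilde{\Omega}$, $\widetilde{T}$) and then reduces the resulting operator identity \eqref{CHECK-IT} to the line, whereas you reduce to the line first and differentiate only there --- the same computation in a slightly cleaner order.
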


\begin{proof}
Let us differentiate with respect to $z^{\alpha}$ the third line of Eq.\ (\ref{vert4-nn1-mod090422}) of the generating series of $Y_D (a,\z)b$:
\beqa\label{e050922-1059}
\di_{z^{\alpha}} \, Y_D(a,\z) \, b
\hspace{-15pt}
&&
=\,\di_{z^{\alpha}}
\Bigl(
U(g_{\u(\z)}) \,  Y_1\bigl( U(g_{\u(\z)})^{-1}\,a, (\z^2)^{\frac{1}{2}}\bigr)\,U(g_{\u(\z)})^{-1} b
\Bigr)
\nonumber \\ &&
=\
\di_{z^{\alpha}} \bigl(U(g_{\u(\z)})\bigr)  \,  Y_1\bigl( U(g_{\u(\z)})^{-1}\,a, (\z^2)^{\frac{1}{2}}\bigr)\,U(g_{\u(\z)})^{-1} b
\nonumber \\ &&
+\
U(g_{\u(\z)}) \,  Y_1\bigl( \di_{z^{\alpha}} \bigl(U(g_{\u(\z)})^{-1}\bigr)\,a, (\z^2)^{\frac{1}{2}}\bigr)\,U(g_{\u(\z)})^{-1} b
\\ \nonumber  &&
+\
U(g_{\u(\z)}) \,  Y_1\bigl( U(g_{\u(\z)})^{-1}\,a, (\z^2)^{\frac{1}{2}}\bigr)\,\di_{z^{\alpha}} \bigl(U(g_{\u(\z)})^{-1}\bigr) b
\\ \nonumber  &&
+\
U(g_{\u(\z)}) \, 
\Bigl(z^{\alpha}\,x^{-1}\di_x
Y_1\bigl( U(g_{\u(\z)})^{-1}\,a,x\bigr)
\Bigr)\Bigl|_{x\,:=\,(\z^2)^{\frac{1}{2}}} \,U(g_{\u(\z)})^{-1} b
\,.
\eeqa
For the last term in the right-hand side, we apply the equation $\di_x Y_1(a',x)b' = Y_1 (T_1\,a',x)b'$, which holds
for arbitrary $a',b' \in V$. 
For the previous three terms, we first calculate:
\begin{align*}
\di_{z^{\alpha}} \bigl(U(g_{\u(\z)})\bigr)
&=\,
U(g_{\u(\z)}) \,\di_{{z'}^{\alpha}} 
\bigl(U(g_{\u(\z)}^{-1}\,g_{\u(\z')})\bigr)\bigl|_{\z' \,=\, \z}
\,=:\, U(g_{\u(\z)}) \, \widetilde{\Omega} 
\,, \\ 
\di_{z^{\alpha}} \bigl(U(g_{\u(\z)})^{-1}\bigr)
&=\,
- \widetilde{\Omega} \, U(g_{\u(\z)})^{-1}
\,,
\end{align*}
where $\widetilde{\Omega}$ is a linear combination of $\Omega_{\alpha\beta}$ with coefficients that are polynomial in $\z$ and $(\z^2)^{-\frac{1}{2}}$.
Then we apply the commutation relations with $\Omega_{\alpha\beta}$ from Proposition \ref{ploc2} to obtain:
$$
\widetilde{\Omega} \, Y_1(a',x)\,b' 
-
Y_1(\widetilde{\Omega}\,a',x)\,b'
-
Y_1(a',x)\,\widetilde{\Omega}\,b'
\,=\,
Y_1(\widetilde{T}\,a',x)
\,	
$$
for any $a',b' \in V$,
where $\widetilde{T}$ is a linear combination of $T_{\alpha}$ again with polynomial coefficients in $\z$ and $(\z^2)^{-\frac{1}{2}}$.
Thus, the right-hand side of (\ref{e050922-1059}) becomes:
\begin{align*}
\di_{z^{\alpha}} & Y_D(a,\z) \, b
\nonumber \\ &
=\, U(g_{\u(\z)}) \,  Y_1\bigl( \widetilde{T} \, U(g_{\u(\z)})^{-1}\,a, (\z^2)^{\frac{1}{2}}\bigr)\,U(g_{\u(\z)})^{-1} b
\nonumber \\ &
\quad+\
U(g_{\u(\z)})  \, z^{\alpha} (\z^2)^{-\frac{1}{2}} Y_1\bigl( T_1 \, U(g_{\u(\z)})^{-1}\,a, (\z^2)^{\frac{1}{2}}\bigr)\,U(g_{\u(\z)})^{-1} b
\nonumber \\ &
=\, U(g_{\u(\z)}) \,  Y_1\bigl( U(g_{\u(\z)})^{-1}\, T_{\alpha}\,a, (\z^2)^{\frac{1}{2}}\bigr)\,U(g_{\u(\z)})^{-1} b
\,.
\end{align*}

The second equality above is a consequence of the relation
\beq\label{CHECK-IT}
U(g_{\u(\z)}) \, \bigl(\widetilde{T} + z^{\alpha}\,(\z^2)^{-\frac{1}{2}} \, T_1 \bigr) \, U(g_{\u(\z)})^{-1}
\,=\,
T_{\alpha}
\,,
\eeq
which can be derived from an explicit computation of $\widetilde{\Omega}$ and $\widetilde{T}$.
This computation is a bit cumbersome, but fortunately there is a simplifying argument.
Since both sides of Eq.\ (\ref{e090422-1726}) are $\SO(D,\C)$--equivariant, it is enough to compare them for $\z=x\e_1$,
so we only need to establish Eq.\ \eqref{CHECK-IT} for $\z=x\e_1$.
To this end, we first find
$\widetilde{\Omega}\bigl|_{\z=x\e_1} = x^{-1}\Omega_{1\alpha}$,
and then it follows that
$$\widetilde{T}\bigl|_{\z=x\e_1} = (1-\delta_{\alpha,1})T_{\alpha}\,.$$
Therefore,
$\widetilde{T} + z^{\alpha}\,(\z^2)^{-\frac{1}{2}} T_1$
restricts to $T_{\alpha}$ for $\z=x\e_1$, as claimed.
\end{proof}

\begin{lemma}\label{lm090422-1925}
The commutation relation $(\ref{CVA1})$ for $H$ holds, i.e.,
$$
\bigl[H,Y_D(a,\z)\bigr] = Y(Ha,\z) + \z\cdot\di_\z \, Y_D(a,\z)\,.
$$
\end{lemma}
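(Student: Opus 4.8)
The plan is to compute $H\,Y_D(a,\z)b$ directly from the explicit generating formula in the third line of Eq.~\eqref{vert4-nn1-mod090422},
$$
Y_D(a,\z)b = U(g_{\u(\z)})\,Y_1\bigl(U(g_{\u(\z)})^{-1}a,(\z^2)^{\frac12}\bigr)\,U(g_{\u(\z)})^{-1}b,
\qquad \u(\z)=\z/(\z^2)^{\frac12},
$$
in the same spirit as the proof of Lemma~\ref{lm090422-1725}. The computation is in fact shorter here, because $H$ passes through the rotation operators $U(g_{\u})$ without producing correction terms, so there is no analog of the cumbersome $\widetilde\Omega$, $\widetilde T$ bookkeeping. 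Since $H$ is even, the supercommutator $[H,Y_D(a,\z)]$ carries no sign, and it suffices to show $H\,Y_D(a,\z)b = Y_D(Ha,\z)b + \z\spr\di_{\z}\,Y_D(a,\z)b + Y_D(a,\z)\,Hb$.

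First I would record two facts. (1)~The operator $H$ commutes with $U(g_{\u})$: by property $(ii)$ of $g_{\u}$ each eigenspace $V_{\Delta}$ of $H$ is $U(g_{\u})$-invariant, and $H$ acts as the scalar $\Delta$ there, so $H\,U(g_{\u})=U(g_{\u})\,H$ on all of $V$ (equivalently, $g_{\u}$ is built from the rotations $\Om_{1\alpha}$, which commute with $H$ by \eqref{cDcomm}). (2)~The Euler-operator identity $\z\spr\di_{\z}=x\,\di_x$ holds on any expression built from $\u(\z)$ and $x:=(\z^2)^{\frac12}$, because $\u(\z)$ is homogeneous of degree $0$ and $x$ of degree $1$ in $\z$; since $Y_D(a,\z)b$ depends on $\z$ only through $\u$ (inside $U(g_{\u})$) and through $x$ (inside $Y_1(\cdot,x)$), this yields
$$
\z\spr\di_{\z}\,Y_D(a,\z)b = U(g_{\u})\bigl(x\,\di_x\,Y_1(U(g_{\u})^{-1}a,x)\bigr)U(g_{\u})^{-1}b.
$$

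Next I would push $H$ through the formula. Commuting $H$ past the outer $U(g_{\u})$ by fact (1) and then applying the $H$-relation of Proposition~\ref{ploc2}, namely $[H,Y_1(a',x)]=Y_1(Ha',x)+x\,\di_x\,Y_1(a',x)$ with $a'=U(g_{\u})^{-1}a$, produces exactly three terms. The term $U(g_{\u})\,Y_1(H\,U(g_{\u})^{-1}a,x)\,U(g_{\u})^{-1}b$ equals $Y_D(Ha,\z)b$ after using $H\,U(g_{\u})^{-1}=U(g_{\u})^{-1}H$; the term $U(g_{\u})\bigl(x\,\di_x\,Y_1(U(g_{\u})^{-1}a,x)\bigr)U(g_{\u})^{-1}b$ is precisely $\z\spr\di_{\z}\,Y_D(a,\z)b$ by fact (2); and the remaining term $U(g_{\u})\,Y_1(U(g_{\u})^{-1}a,x)\,H\,U(g_{\u})^{-1}b$ equals $Y_D(a,\z)\,Hb$, again moving $H$ through $U(g_{\u})^{-1}$. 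Collecting these three contributions gives the desired relation.

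The only genuinely delicate point is the Euler-operator identity of fact (2); it is the exact analog of how the last term of Eq.~\eqref{e050922-1059} arose in the previous lemma, and it rests on the clean splitting of the $\z$-dependence into the degree-$0$ direction $\u$ and the degree-$1$ radius $x$. I would justify it either by the chain rule (differentiating $U(g_{\u(e^t\z)})$ and $(e^t\z)^2$ at $t=0$) or, most transparently, on the series form \eqref{vert4-nn1}, where the monomial $(\z^2)^{\frac{n-m}{2}}h_{m,\sigma}(\z)$ is homogeneous of degree $n$, so $\z\spr\di_{\z}$ multiplies it by $n$, matching $x\,\di_x$ on $x^n$. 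All manipulations are legitimate as identities of series in $V\llb\z\rrb_{\z^2}$, since the pole bounds of Section~\ref{sse5.6n} already place $Y_D(a,\z)b$ in that space.
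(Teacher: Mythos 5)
Your proof is correct and takes essentially the same approach as the paper: both rely on the generating formula in the third line of Eq.~\eqref{vert4-nn1-mod090422}, the homogeneity of $\u(\z)$ (degree $0$) and $(\z^2)^{\frac12}$ (degree $1$) to identify $\z\spr\di_\z$ with $x\,\di_x$, the commutation of $H$ with $U(g_{\u(\z)})$, and the $H$-relation of Proposition~\ref{ploc2}. The only difference is the direction of the computation---the paper applies $\z\spr\di_\z$ to $Y_D(a,\z)b$ and solves for $[H,Y_D(a,\z)]$, while you expand $H\,Y_D(a,\z)b$ directly---which is purely cosmetic.
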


\begin{proof}
As in the  proof of the previous lemma, we use (\ref{vert4-nn1-mod090422}) and apply the Euler differential operator to the expression in the third line of (\ref{vert4-nn1-mod090422}):
\beqs
\z \cdot \di_{\z} \, Y_D(a,\z) \, b
\hspace{-15pt}
&&
=\,
\z \cdot \di_{\z} \,
\Bigl(
U(g_{\u(\z)}) \,  Y_1\bigl( U(g_{\u(\z)})^{-1}\,a, (\z^2)^{\frac{1}{2}}\bigr)\,U(g_{\u(\z)})^{-1} b
\Bigr)
\\ &&
=\,
U(g_{\u(\z)}) \, \Bigl( x\di_x Y_1\bigl( U(g_{\u(\z)})^{-1}\,a,x\bigr)\Bigr)\Bigl|_{x\,:=\,(\z^2)^{\frac{1}{2}}}
\,U(g_{\u(\z)})^{-1} b
\\ &&
=\,
U(g_{\u(\z)}) \, 
\Bigl( 
\bigl[H, Y_1 \bigl(U(g_{\u(\z)})^{-1}\,a, (\z^2)^{\frac{1}{2}}\bigr) \bigr]
\\ &&
\quad-\ 
Y_1\bigl( H\, U(g_{\u(\z)})^{-1}\,a, (\z^2)^{\frac{1}{2}}\bigr) 
\Bigr)\,U(g_{\u(\z)})^{-1}\,b
\\ &&
=\,
\Bigl( \bigl[H, Y_D(a,\z) \bigr] - Y_D\bigl( H\,a, \z) \Bigr)\,b
\,,
\eeqs
where we use that $u(\z)$ and $(\z^2)^{\frac{1}{2}}$ are homogeneous of degree $0$ and $1$, respectively, as well as that $H$ commutes with $U(g_{\u(\z)})$.
\end{proof}

\begin{lemma}\label{lm090422-1949}
The commutation relations $(\ref{CVA3})$ for $C_{\alpha}$ hold.
\end{lemma}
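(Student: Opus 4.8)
The plan is to establish (\ref{CVA3}) by the same reduction-to-a-line strategy used in \leref{lm090422-1725}, exploiting the $\SO(D,\C)$--equivariance already proved in \leref{SOD-covar}. The key observation is that both sides of (\ref{CVA3}), viewed as a family indexed by $\alpha$, are $\SO(D,\C)$--equivariant: the operators $C_\alpha$ transform in the vector representation under $\mathrm{Ad}(U(g))$ (consistent with $[\Om_{\alpha\beta},C_\gamma]=\delta_{\alpha\gamma}C_\beta-\delta_{\beta\gamma}C_\alpha$ in (\ref{cDcomm})), the series $Y_D(a,\z)$ is $\SO(D,\C)$--equivariant by \leref{SOD-covar}, and the explicit factors $z^\alpha$, $\z^2$ together with the differential operators $\di_{z^\alpha}$, $\z\cdot\di_\z$ transform correspondingly. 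Consequently it suffices to verify (\ref{CVA3}) after restricting to the line $\z=x\e_1$, for every $\alpha=1,\dots,D$; the identity for general $\z$ then follows because $\SO(D,\C)$ acts transitively on non-isotropic directions and equivariance transports the vanishing of the difference from the orbit of $\e_1$ to all of $\z$.

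To carry out the restriction, I would first note that the substitution $\z=x\e_1$ affects only the formal variables and commutes with the $V$--endomorphism $C_\alpha$, so by (\ref{D1restr}) the left-hand side restricts to $[C_\alpha,Y_1(a,x)]$, which is computed explicitly in \prref{ploc2}. For the right-hand side I would use (\ref{D1restr}) together with \leref{lm090422-1725} to replace $\di_{z^\gamma}Y_D(a,\z)$ by $Y_1(T_\gamma a,x)$ after restriction, along with $z^\gamma\big|_{\z=x\e_1}=x\,\delta_{\gamma 1}$, $\z^2\big|_{\z=x\e_1}=x^2$, and $\z\cdot\di_\z Y_D(a,\z)\big|_{\z=x\e_1}=x\,\di_x Y_1(a,x)$. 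A short case-by-case computation, treating $\alpha=1$ and $2\le\alpha\le D$ separately since the prefactors $z^\alpha$ and $z^\gamma$ behave differently, collapses the restricted right-hand side to $Y_1(C_\alpha a,x)+2xY_1(\Om_{1\alpha}a,x)+x^2Y_1(T_\alpha a,x)$ for $\alpha\ge2$ and to $Y_1(C_1a,x)-2xY_1(Ha,x)-x^2\di_xY_1(a,x)$ for $\alpha=1$; these are precisely the expressions recorded in \prref{ploc2}, so the two restrictions agree and the lemma follows.

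The main obstacle is the equivariance reduction itself: one must verify carefully that the \emph{entire} relation (\ref{CVA3}), including its inhomogeneous second-order differential part $\z^2\di_{z^\alpha}-2z^\alpha\z\cdot\di_\z$, transforms as a vector in the index $\alpha$, so that checking it only at $\z=x\e_1$ genuinely suffices. An alternative, more computational route is to set $D_\alpha(a,\z):=[C_\alpha,Y_D(a,\z)]-\bigl(\text{RHS of (\ref{CVA3})}\bigr)$ and prove $D_\alpha=0$ directly: the Jacobi identity $[T_\beta,[C_\alpha,Y_D]]=[[T_\beta,C_\alpha],Y_D]+[C_\alpha,[T_\beta,Y_D]]$, combined with $[T_\beta,C_\alpha]=2\delta_{\alpha\beta}H+2\Om_{\alpha\beta}$ and the already-established relations (\ref{CVA1}), (\ref{CVA2}) and translation covariance, shows that $[T_\beta,D_\alpha(a,\z)]=\di_{z^\beta}D_\alpha(a,\z)$, while $D_\alpha(a,\z)\vac=0$ follows from $\cD\vac=0$, $Y_D(a,\z)\vac=e^{\z\cdot\T}a$, and the formula $e^{-\ad(\z\cdot\T)}C_\alpha=C_\alpha-2z^\alpha H-2\sum_\beta z^\beta\Om_{\alpha\beta}+\z^2T_\alpha-2z^\alpha(\z\cdot\T)$ from \prref{ploc1}. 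However, concluding $D_\alpha=0$ from these two facts requires the higher-dimensional Goddard Uniqueness Theorem, whose hypotheses include locality of $Y_D$ — which is only established later, in Sect.\ \ref{sse5.7n}. For this reason the $\SO(D,\C)$--equivariance argument, which relies solely on \leref{SOD-covar}, \leref{lm090422-1725}, (\ref{D1restr}), and \prref{ploc2}, is the preferable route at this stage.
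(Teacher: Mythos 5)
Your proposal is correct and follows essentially the same route as the paper: the paper also verifies the relation on the line $\z=x\e_1$ via \prref{ploc2} (with the derivative terms handled through \leref{lm090422-1725}) and then extends to general $\z$ by the rotation/equivariance argument of \leref{lm090422-1429}. The only cosmetic difference is that the paper makes the covariance in the index $\alpha$ manifest by contracting with an arbitrary vector $\vv$, rewriting (\ref{CVA3}) as the single $\SO(D,\C)$--covariant identity (\ref{e090422-2023}) involving $C(\vv)$, $H(\vv\cdot\z)$, and $\Omega(\vv,\z)$ --- precisely the point you flag as the ``main obstacle.''
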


\begin{proof}
We have already established (\ref{CVA1}) and (\ref{CVA2}).
Under them, Eq.\ (\ref{CVA3}) is equivalent to the following:
\begin{align}
\bigl[C(\vv),&\,Y_D(a,\z)\bigr] -
Y_D\bigl(C(\vv)\,a,\z\bigr)
\nonumber \\ 
&+
\bigl[H (\z \cdot \w),Y_D(a,\z)\bigr] +
Y_D\bigl(H (\vv \cdot \z)\,a,\z\bigr)
\nonumber \\
\label{e090422-2023}
&+
\bigl[\Omega(\z,\vv),Y_D(a,\z)\bigr] +
Y_D\bigl(\Omega(\vv,\z)\,a,\z\bigr)
\,=\, 0
\,,
\end{align}
where we have set 
$$
C(\vv) := \mathop{\sum}\limits_{\alpha \,=\, 1}^D v^{\alpha} C_{\alpha}
\,,\quad
H(\vv \cdot \z) = H  \mathop{\sum}\limits_{\alpha \,=\, 1}^D v^{\alpha}z^{\alpha} 
\,,\quad
\Omega(\vv,\z) := \mathop{\sum}\limits_{\alpha,\beta \,=\, 1}^D v^{\alpha}z^{\beta} \, \Omega_{\alpha\beta}
\,.
$$
Now, Eq.\ (\ref{e090422-2023}) is  valid for $Y_1$ in the restriction $\z = z\e_1$ 
(as a similar consequence of the relations in Proposition \ref{ploc2}).
Hence, it also holds for $Y_D$ and general $\z$ with a ``rotation'' by similar methods as those used above in Lemma \ref{lm090422-1429}.
\end{proof}

As a consequence of Lemma \ref{lm090422-1725} (and Eq.\ (\ref{e090422-1726})), we have:
$$
\iota_{\w,\z} Y_D(a,\z+\w)b \, = \, Y_D(e^{\z \, \spr \T}a,\w)b \,,
$$
where the expansion $\iota_{\w,\z}$ is defined by Eq.\ \eqref{loc4}. 
Then setting $\w=\xo\e_1$, we obtain the identity
\beq\label{Prop1}
\iota_{\xo\e_1,\z} Y_D(a,\z+\xo\e_1)b 
\, = \, Y_1(e^{\z \, \spr \T}a,\xo)b 
\,,
\eeq
in which the left-hand side is defined as
\[
\iota_{\xo\e_1,\z} Y_D(a,\z+\xo\e_1)b := e^{\z \, \spr \di_{\w}} Y_D(a,\w)b \bigr|_{\w = x\e_1} \,.
\]
We remark that formula \eqref{Prop1} could be used for an alternative construction of $Y_D$ if we could prove that the right-hand side is an $\iota_{\xo\e_1,\z}$--expansion of a Laurent polynomial in $\z+x\e_1$.

\subsection{Locality}\label{sse5.7n}

Throughout this subsection, $\u,\vv,\w,\z$ will denote $D$-dimensional 
formal variables, while $x,y$ will be $1$-dimensional formal variables.
For each $a,b\in V$, fix a non-negative integer $N_{a,b}$ ($:=N_{a,b}^D$, cf.\ \leref{Lm.5.6-1new}) such that 
\beq\label{e-loc1}
(\z^2)^{N_{a,b}} Y_D(a,\z)b \in V \llb \z \rrb
\,.
\eeq
Applying $e^{\u \spr \di_\z}$ to the left side of \eqref{e-loc1} and using the translation covariance \eqref{e090422-1726}, we obtain
$$
((\u+\z)^2)^{N_{a,b}} Y_D(e^{\u \spr \T} a,\z)b \in V \llb \u+\z \rrb \subseteq V \llb \u,\z \rrb \,.
$$
Setting $\z=x\e_1$, we get that
\beq\label{e-loc2}
((\u+x\e_1)^2)^{N_{a,b}} Y_1(e^{\u \spr \T} a,x)b \in V \llb \u,x \rrb \,.
\eeq

\begin{lemma}\label{l-loc1}
For every\/ $a,b,c\in V$, we have
\beq\label{e-loc3}
\begin{split}
((\z&+x\e_1)^2)^{N_{a,c}} ((\w+y\e_1)^2)^{N_{b,c}} ((\z-\w+x\e_1-y\e_1)^2)^{N_{a,b}}  \\
&\qquad\times Y_1(e^{\z \spr \T} a,x) Y_1(e^{\w \spr \T} b,y) c \\
&= ((\z+x\e_1)^2)^{N_{a,c}} ((\w+y\e_1)^2)^{N_{b,c}} ((\z-\w+x\e_1-y\e_1)^2)^{N_{a,b}}  \\
&\qquad\times (-1)^{p_ap_b} Y_1(e^{\w \spr \T} b,y) Y_1(e^{\z \spr \T} a,x) c
\in V \llb \z,\w,x,y \rrb \,.
\end{split}
\eeq
\end{lemma}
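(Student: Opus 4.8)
The plan is to deduce this $D$-dimensional locality statement from the $1$-dimensional locality of the vertex algebra $(V,Y_1)$, using the reconstruction formula as a bridge. By \eqref{Prop1} together with the translation covariance \eqref{e090422-1726}, each factor $Y_1(e^{\z\spr\T}a,\xo)$ is the $\io_{\xo\e_1,\z}$-expansion of the reconstructed field $Y_D(a,\z+\xo\e_1)$; hence \eqref{e-loc3} is precisely the $D$-dimensional locality of $Y_D$ at the three mutually non-isotropic insertion points $0$, $\w+y\e_1$, $\z+\xo\e_1$, written as an iterated formal expansion. Accordingly, the three clearing polynomials are the squared invariant distances $(\z+\xo\e_1)^2$, $(\w+y\e_1)^2$ and $(\z-\w+(\xo-y)\e_1)^2$ between these points, and the exponents are the pole bounds supplied by \leref{Lm.5.6-1new}.

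I would treat the symmetry of the two orderings as the core statement, since the outer factors $((\z+\xo\e_1)^2)^{N_{a,c}}((\w+y\e_1)^2)^{N_{b,c}}$ are manifestly invariant under the swap $(a,\z,\xo)\leftrightarrow(b,\w,y)$ and only the middle factor $((\z-\w+(\xo-y)\e_1)^2)^{N_{a,b}}$ is responsible for exchanging the order. To access $1$-dimensional locality, decompose $\z=z^1\e_1+\z_{\perp}$ and $\w=w^1\e_1+\w_{\perp}$ into longitudinal and transverse parts. Since $Y_1(T_1a,\xo)=\di_\xo Y_1(a,\xo)$, the $T_1$-translation becomes an honest shift of the one-dimensional argument,
$$Y_1(e^{\z\spr\T}a,\xo)=\io_{\xo,z^1}Y_1(e^{\z_{\perp}\spr\T}a,\xo+z^1),$$
and likewise for $b$. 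Applying the locality of $Y_1$ to the fields $Y_1(e^{\z_{\perp}\spr\T}a,\cdot)$ and $Y_1(e^{\w_{\perp}\spr\T}b,\cdot)$ at the shifted arguments $\xo+z^1$ and $y+w^1$ exchanges the two orderings after multiplication by a power of $\xo+z^1-y-w^1$, which is exactly the first component of $\z-\w+(\xo-y)\e_1$; note the identity $(\z-\w+(\xo-y)\e_1)^2=(\xo+z^1-y-w^1)^2+(\z_{\perp}-\w_{\perp})^2$.

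Once the symmetry is in hand, regularity (that both sides lie in $V\llb\z,\w,\xo,y\rrb$) follows by the classical vertex-algebra argument: the single-field bound \eqref{e-loc2} applied to the pair $(b,c)$ shows that $((\w+y\e_1)^2)^{N_{b,c}}Y_1(e^{\w\spr\T}b,y)c\in V\llb\w,y\rrb$, and the same bound applied to $(a,c)$ in the opposite ordering (made available precisely by the symmetry) clears the poles in $(\z+\xo\e_1)^2$ with the fixed exponent $N_{a,c}$; combining the two orderings through the middle factor removes all negative powers simultaneously.

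The main obstacle is the uniformity of the locality bound in the transverse variables. The naive one-dimensional bound $N_{e^{\z_{\perp}\spr\T}a,\,e^{\w_{\perp}\spr\T}b}$ grows with the transverse orders, so the power $(\xo+z^1-y-w^1)^{2N_{a,b}}$ alone does not annihilate the commutator; the transverse term $(\z_{\perp}-\w_{\perp})^2$ in the invariant distance must absorb this growth. To control it uniformly I would exploit the full $\mathfrak{c}_D$-equivariance already established for $Y_D$ (the $\SO(D,\C)$-equivariance of \leref{SOD-covar} together with the covariance identities for $H$, $\Om_{\alpha\beta}$ and $C_\alpha$): the identity \eqref{e-loc3} holds on the line $\z_{\perp}=\w_{\perp}=0$ directly from $1$-dimensional locality, and, since the conformal group acts transitively on non-isotropic triples, equivariance should propagate it to general $\z,\w$. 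Concretely, one uses the commutation relations of \prref{ploc2} for $\Om_{1\alpha}$ and $C_\alpha$ ($\alpha\ge2$) to show that the transverse derivatives of the commutator are governed by the single combination $(\xo+z^1-y-w^1)^2+(\z_{\perp}-\w_{\perp})^2$, so that the fixed $D$-dimensional exponent $N_{a,b}=N^D_{a,b}$ of \leref{Lm.5.6-1new} indeed suffices. The delicate point to verify is that this propagation off the line is compatible with the iterated formal expansions defining the two sides of \eqref{e-loc3}.
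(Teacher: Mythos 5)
You have set the statement up correctly: the exponents are the $D$-dimensional pole bounds of \leref{Lm.5.6-1new}, the two outer factors are symmetric under the swap $(a,\z,x)\leftrightarrow(b,\w,y)$, and the regularity of both sides does follow from \eqref{e-loc2} essentially as you describe (one ordering has no negative powers of $y$, the other none of $x$). But the core of the lemma --- the equality of the two orderings --- is exactly where your argument has a gap, and you flag it yourself: the one-dimensional locality order of the pair $(e^{\z_\perp\spr\T}a,\,e^{\w_\perp\spr\T}b)$ grows with the transverse degree, and your proposed repair, propagating the identity off the line $\z_\perp=\w_\perp=0$ by conformal equivariance and transitivity on non-isotropic triples, is the heuristic from the introduction rather than a proof. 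Concretely, rotations, dilations and translations alone do \emph{not} act transitively on triples $(0,\,\w+y\e_1,\,\z+x\e_1)$; mapping a general triple to a collinear one genuinely requires special conformal transformations, and the $C_\alpha$ cannot be exponentiated on $V$ (the paper points out this needs a Hilbert-space completion). Only the infinitesimal relations of \prref{ploc2} are available, and converting them into the statement that the single fixed power $((\z-\w+(x-y)\e_1)^2)^{N_{a,b}}$ annihilates a doubly-infinite formal series is precisely the unproved step. The compatibility ``with the iterated formal expansions'' that you defer is not a technical footnote; it is the whole difficulty.

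The paper's proof bypasses all of this by never leaving the line and never invoking rotations or special conformal transformations. It starts from the one-dimensional Borcherds commutator formula written as $[Y_1(a,x),Y_1(b,y)]=(\iota_{x,y}-\iota_{y,x})\,Y_1\bigl(Y_1(a,x-y)b,\,y\bigr)$, replaces $a$ by $e^{\u\spr\T}a$, and multiplies by $((\u+x\e_1-y\e_1)^2)^{N_{a,b}}$. The key point is that \eqref{e-loc2} --- the $D$-dimensional pole bound of \leref{Lm.5.6-1new} transported along the line by translation covariance, hence already \emph{uniform} in $\u$ --- makes $((\u+x\e_1-y\e_1)^2)^{N_{a,b}}\,Y_1(e^{\u\spr\T}a,x-y)b$ a formal power series in $\u$ and $x-y$; since the $\iota_{x,y}$- and $\iota_{y,x}$-expansions of a power series coincide, their difference vanishes and the whole commutator is killed by this one fixed power. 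The transverse-growth problem you identified simply never arises: the uniformity is supplied by the pole bound already established in Sect.\ \ref{sse5.6n} (where unitarity was used), not by one-dimensional locality orders. After that, conjugating by $e^{\vv\spr\T}$ (an automorphism of $Y_1$, by \prref{ploc2}), applying to $c$, multiplying by the two outer factors, and substituting $\u=\z-\w$, $\vv=\w$ gives \eqref{e-loc3}. If you want to salvage your outline, replace the equivariance step by this commutator-formula argument.
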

\begin{proof}
We start from the commutator formula \eqref{vert10} or \eqref{vert12} for the vertex algebra $(V,Y_1,\vac,T_1)$,
written in the form 
$$
\bigl[Y_1(a,x), Y_1(b,y)\bigr] = (\iota_{x,y}-\iota_{y,x}) Y_1\bigl( Y_1(a,x-y)b, y\bigr)
$$
(see \cite{FLM88} and \eqref{loc4}).
First, we replace $a$ with $e^{\u \spr \T} a$ and get
$$
\bigl[Y_1(e^{\u \spr \T} a,x), Y_1(b,y)\bigr] = (\iota_{x,y}-\iota_{y,x}) Y_1\bigl( Y_1(e^{\u \spr \T} a,x-y)b, y\bigr) .
$$
Second, we multiply both sides by $((\u+x\e_1-y\e_1)^2)^{N_{a,b}}$. Notice that the $\iota$--expansions commute
with the multiplication by a polynomial. But, by \eqref{e-loc2}, we have
$$
((\u+x\e_1-y\e_1)^2)^{N_{a,b}} Y_1(e^{\u \spr \T} a,x-y)b \in V \llb \u,x-y \rrb \subseteq V \llb \u,x,y \rrb \,.
$$
The expansions $\iota_{x,y}$ and $\iota_{y,x}$ of a formal power series in $x,y$ are equal; hence their difference is $0$.
Therefore,
$$
((\u+x\e_1-y\e_1)^2)^{N_{a,b}} \bigl[Y_1(e^{\u \spr \T} a,x), Y_1(b,y)\bigr] = 0 \,.
$$

Recall that, by \prref{ploc2}, we have
$\bigl[T_{\alpha},Y_1(a,\xo)\bigr] = Y_1(T_{\alpha}a,\xo),$
which implies that all $T_\alpha$ are derivations of $Y_1$. If we apply $e^{\vv \spr \T}$ to the above equation, it will act as an automorphism of $Y_1$ and give
$$
((\u+x\e_1-y\e_1)^2)^{N_{a,b}} \bigl[Y_1(e^{(\u+\vv) \spr \T} a,x), Y_1(e^{\vv \spr \T} b,y)\bigr] = 0 \,.
$$
Now let us apply the left side to a vector $c\in V$ and multiply it by $((\u+\vv+x\e_1)^2)^{N_{a,c}} ((\vv+y\e_1)^2)^{N_{b,c}}$. We obtain
\begin{align*}
((\u+\vv&+x\e_1)^2)^{N_{a,c}} ((\vv+y\e_1)^2)^{N_{b,c}} ((\u+x\e_1-y\e_1)^2)^{N_{a,b}} \\
&\qquad\times Y_1(e^{(\u+\vv) \spr \T} a,x) Y_1(e^{\vv \spr \T} b,y) c \\
&= ((\u+\vv+x\e_1)^2)^{N_{a,c}} ((\vv+y\e_1)^2)^{N_{b,c}} ((\u+x\e_1-y\e_1)^2)^{N_{a,b}} \\
&\qquad\times (-1)^{p_ap_b} Y_1(e^{\vv \spr \T} b,y) Y_1(e^{(\u+\vv) \spr \T} a,x) c \,.
\end{align*}
Notice that, by \eqref{e-loc2}, the left-hand side has only non-negative powers of $y$ while the right-hand side has only non-negative powers of $x$.
Therefore, both sides lie in the space $V \llb \u,\vv,x,y \rrb$.
Making the substitution $\u=\z-\w$, $\vv=\w$ completes the proof of Eq.\ \eqref{e-loc3}.
\end{proof}

\begin{lemma}\label{l-loc2}
For every\/ $a,b\in V$, we have the locality condition
\beq\label{e-loc4}
((\z-\w)^2)^{N_{a,b}} \bigl[Y_D(a,\z), Y_D(b,\w)\bigr] = 0 \,.
\eeq
\end{lemma}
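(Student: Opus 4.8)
The plan is to derive the $D$-dimensional locality \eqref{e-loc4} from the one-dimensional identity \eqref{e-loc3} of \leref{l-loc1}, using the relation \eqref{Prop1}, which identifies the twisted field $Y_1(e^{\z\spr\T}a,x)$ with the $\io_{x\e_1,\z}$-expansion of the $D$-dimensional field $Y_D(a,\z+x\e_1)$. Set $\u := \z+x\e_1$ and $\vv := \w+y\e_1$, so that
\[
\u^2 = (\z+x\e_1)^2, \qquad \vv^2 = (\w+y\e_1)^2, \qquad (\u-\vv)^2 = (\z-\w+x\e_1-y\e_1)^2,
\]
and the three polynomial prefactors appearing in \eqref{e-loc3} are exactly $(\u^2)^{N_{a,c}}$, $(\vv^2)^{N_{b,c}}$ and $((\u-\vv)^2)^{N_{a,b}}$; denote their product by $P$.

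First I would read both sides of \eqref{e-loc3} through \eqref{Prop1}. Applying the twisted outer operator coefficient-wise to the $(\w,y)$-series representing $Y_1(e^{\w\spr\T}b,y)c$, the normally-ordered product $Y_1(e^{\z\spr\T}a,x)\,Y_1(e^{\w\spr\T}b,y)\,c$ becomes the iterated expansion $\io_{x\e_1,\z}\io_{y\e_1,\w}$ of the composition $Y_D(a,\u)\,Y_D(b,\vv)\,c$, and the opposite product becomes the corresponding expansion of $Y_D(b,\vv)\,Y_D(a,\u)\,c$. Thus \leref{l-loc1} says precisely that $P$ times each of these two expansions agree and lie in $V\llb\z,\w,x,y\rrb$, i.e., are power series with no negative powers of $x$ or $y$.

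Next I would specialize at $x=y=0$, which is legitimate since the combined objects are regular in $x,y$. Because the prefactors $(\u^2)^{N_{a,c}}$, $(\vv^2)^{N_{b,c}}$, $((\u-\vv)^2)^{N_{a,b}}$ kill the poles of the $Y_D$-products along $\u^2=0$, $\vv^2=0$, $(\u-\vv)^2=0$, the expansion $\io_{x\e_1,\z}\io_{y\e_1,\w}$ reduces at $x=y=0$ to the natural expansion around $\u=\z$, $\vv=\w$. Writing $Q := (\z^2)^{N_{a,c}}(\w^2)^{N_{b,c}}((\z-\w)^2)^{N_{a,b}}$ for the value of $P$ at $x=y=0$, the specialization yields
\[
Q\,Y_D(a,\z)\,Y_D(b,\w)\,c = (-1)^{p_ap_b}\,Q\,Y_D(b,\w)\,Y_D(a,\z)\,c .
\]
Since multiplication by $(\z^2)^{N_{a,c}}(\w^2)^{N_{b,c}}$ is injective on the localized space $V\llb\z,\w\rrb_{\z^2\w^2}$ of \cite{BN06}, this factor may be cancelled, leaving $((\z-\w)^2)^{N_{a,b}}[Y_D(a,\z),Y_D(b,\w)]c=0$; as $c\in V$ is arbitrary, this is \eqref{e-loc4}.

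The main obstacle is the bookkeeping in the middle two steps: one must justify, in the formal-series framework of \cite{N05,BN06}, that applying $Y_1(e^{\z\spr\T}a,x)$ coefficient-wise to the inner $(\w,y)$-series really produces the iterated $\io$-expansion of $Y_D(a,\u)Y_D(b,\vv)c$, and that after multiplication by $P$ this expansion is regular in $x,y$ with value at $x=y=0$ equal to the genuine $D$-dimensional product $Q\,Y_D(a,\z)Y_D(b,\w)c$. Granting this correspondence between the two pictures together with the pole bounds of \leref{Lm.5.6-1new}, the specialization and the cancellation of the spurious factors $(\z^2)^{N_{a,c}}(\w^2)^{N_{b,c}}$ are routine.
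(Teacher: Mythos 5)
Your proposal is correct and follows essentially the same route as the paper's own proof: rewriting Eq.\ \eqref{e-loc3} via \eqref{Prop1} as iterated $\io_{x\e_1,\z}\,\io_{y\e_1,\w}$--expansions of the $Y_D$--products, noting that both sides lie in $V\llb\z,\w,x,y\rrb$ so the expansions are redundant and the specialization $x=y=0$ is legitimate, and finally cancelling the invertible factors $(\z^2)^{N_{a,c}}(\w^2)^{N_{b,c}}$ in $V\llb\z,\w\rrb_{\z^2\w^2}$. The ``bookkeeping'' you flag as the main obstacle is handled in the paper exactly as you sketch it, by the fact that $\io$--expansions commute with multiplication by polynomials.
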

\begin{proof}
Using \eqref{Prop1} and the fact that $\iota$--expansions commute with the multiplication by a polynomial, we can rewrite Eq.\ \eqref{e-loc3} as follows:
\begin{align*}
\iota_{x\e_1,\z} & \, \iota_{y\e_1,\w} \, ((\z+x\e_1)^2)^{N_{a,c}} ((\w+y\e_1)^2)^{N_{b,c}} ((\z-\w+x\e_1-y\e_1)^2)^{N_{a,b}}  \\
&\qquad\times Y_D(a,\z+x\e_1) Y_D(b,\w+y\e_1) c \\
&= \iota_{x\e_1,\z} \, \iota_{y\e_1,\w} \, ((\z+x\e_1)^2)^{N_{a,c}} ((\w+y\e_1)^2)^{N_{b,c}} ((\z-\w+x\e_1-y\e_1)^2)^{N_{a,b}}  \\
&\qquad\times (-1)^{p_ap_b} Y_D(b,\w+y\e_1) Y_D(a,\z+x\e_1) c \,.
\end{align*}
But as both sides lie in $V \llb \z,\w,x,y \rrb$, the expansions are redundant, so we can drop $\iota_{x\e_1,\z} \, \iota_{y\e_1,\w}$ from the above formula.
Moreover, since both sides are in $V \llb \z,\w,x,y \rrb$, it makes sense to set $x=y=0$ in them, which gives
\begin{align*}
(\z^2)^{N_{a,c}} & (\w^2)^{N_{b,c}} ((\z-\w)^2)^{N_{a,b}} Y_D(a,\z) Y_D(b,\w) c \\
&= (\z^2)^{N_{a,c}} (\w^2)^{N_{b,c}} ((\z-\w)^2)^{N_{a,b}} (-1)^{p_ap_b} Y_D(b,\w) Y_D(a,\z) c \,.
\end{align*}
Now we can divide both sides by $(\z^2)^{N_{a,c}} (\w^2)^{N_{b,c}}$ to obtain 
\begin{align*}
((\z-\w)^2)^{N_{a,b}} Y_D(a,\z) Y_D(b,\w) c = ((\z-\w)^2)^{N_{a,b}} (-1)^{p_ap_b} Y_D(b,\w) Y_D(a,\z) c \,.
\end{align*}
This proves Eq.\ \eqref{e-loc4}.
\end{proof}

Combining the results of Sections \ref{Se.3.1}--\ref{sse5.7n}
completes the proof of Theorem \ref{TH}.

\appendix
\renewcommand{\theequation}{\Alph{section}.\arabic{equation}}

\section[${}$\hspace{45pt}Representation Theory Point of View]{Representation Theory Point of View}\label{sse5.3n}

In this appendix, we present an alternative construction of $Y_D (a,\z)b$.
Fix an integer $n$ and two elements $a\in V_{\Delta'}$ and $b\in V_{\Delta''}$.
As in Sect.\ \ref{Se.3.1}, due to the integrability assumption from Definition \ref{main-def}$(a)$, without loss of generality we can assume
that all eigenspaces $V_\Delta$ of $H$ are finite dimensional; since we can choose
finite-dimensional $\so(D,\C)$--representations $W'\subseteq V_{\Delta'}$ and $W''\subseteq V_{\Delta''}$
such that $a\in W'$ and $b\in W''$.

Then the map $\mu_n^1$, defined by Eq.\ (\ref{vert44}),  restricts to an element
$\mu_n^1$ $\in$ $\Hom(V_{\Delta'}$ $\otimes$ $V_{\Delta''},$ $V_{\Delta'+\Delta''+n})$
as in Eq.\ \eqref{mu1-graded}.
Recall from the proof of \leref{lm5.x1} that 
$\mu^1_n$ is $\so(D-1,\C)$--equivariant, where the $\so(D-1,\C)$--subalgebra of
$\so(D,\C)$ is spanned by
$\Om_{\alpha\beta}$ for $2 \leqslant\alpha, \beta\leqslant D$.
Thus, $\mu_n^1$ is an invariant 
element of $\Hom(V_{\Delta'} \otimes V_{\Delta''}, V_{\Delta'+\Delta''+n})$
under the subalgebra $\so(D-1,\C)$,
i.e., the action of $\so(D-1,\C)$ on it is trivial.
Hence, we can apply the following lemma.

\begin{lemma}\label{Lm.3.1}
Let\/ $F$ be a representation of\/ $\so(D,\C)$, which is decomposable into a direct sum of
finite-dimensional irreducible\/ $\so(D,\C)$--representations.
Assume that\/ $v \in F$ is invariant under the action of the subalgebra\/ $\so(D-1,\C)$.
Then\/ $v$ is contained in a subrepresentation of the type\/
$\mathop{\bigoplus}\limits_{m \, = \, 0}^{\infty} Q_m \otimes \C_m^{\harm} [\z]$,
where\/ $\C_m^{\harm} [\z]$ are the\/ $\so(D,\C)$--representations of
degree\/ $m$ homogeneous harmonic polynomials, and\/ $Q_m$ are multiplicity spaces with only a finite number of them non-zero.
\end{lemma}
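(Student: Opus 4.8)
The plan is to reduce the statement to a branching question for a single finite-dimensional irreducible $\so(D,\C)$--module, and then to identify the harmonic modules $\C_m^{\harm}[\z]$ as exactly those irreducibles that carry a non-zero $\so(D-1,\C)$--invariant vector. First I would fix a decomposition $F=\bigoplus_i W_i$ into finite-dimensional irreducibles and write $v=\sum_i v_i$ with $v_i\in W_i$; only finitely many $v_i$ are non-zero, so $v$ already lies in the finite subsum over its support. Because this decomposition is $\so(D,\C)$--equivariant it is a fortiori $\so(D-1,\C)$--equivariant, so each projection $F\to W_i$ commutes with $\so(D-1,\C)$; hence every component $v_i$ is itself $\so(D-1,\C)$--invariant. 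This reduces everything to the claim that a finite-dimensional irreducible $\so(D,\C)$--module $W$ admitting a non-zero $\so(D-1,\C)$--invariant vector must be isomorphic to $\C_m^{\harm}[\z]$ for some $m\in\ZP$.

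To prove this claim I would pass to the compact real form, using that finite-dimensional $\so(D,\C)$--modules correspond to $\SO(D,\R)$--modules and that $\so(D-1,\C)$--invariance coincides with $\SO(D-1,\R)$--invariance, since $\SO(D-1,\R)$ is connected and annihilation by $\so(D-1,\R)$ is equivalent, by complex-linearity, to annihilation by $\so(D-1,\C)$. The pair $(\SO(D,\R),\SO(D-1,\R))$ is a Gelfand pair: the sphere $S^{D-1}=\SO(D,\R)/\SO(D-1,\R)$ carries the multiplicity-free harmonic decomposition $\bigoplus_{m\geqslant0}\C_m^{\harm}[\z]$, with each $\C_m^{\harm}[\z]$ an irreducible $\SO(D,\R)$--module occurring exactly once. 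By Frobenius reciprocity the space of $\SO(D-1,\R)$--invariants in $W$ is isomorphic to the multiplicity of $W$ among functions on $S^{D-1}$; this is non-zero precisely when $W\cong\C_m^{\harm}[\z]$ for some $m$, and then it is one-dimensional. Equivalently, one may read this off the Gelfand--Tsetlin branching rule $\so(D)\downarrow\so(D-1)$, under which the trivial $\so(D-1)$--module occurs in the irreducible of highest weight $(\lambda_1,\dots,\lambda_k)$ if and only if $\lambda=(m,0,\dots,0)$, which is exactly the highest weight of $\C_m^{\harm}[\z]$.

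Assembling the pieces, each $W_i$ with $v_i\neq0$ is isomorphic to some $\C_{m_i}^{\harm}[\z]$, so $v$ lies in the sum of the harmonic isotypic components of $F$. Collecting the finitely many relevant summands by isomorphism type yields a subrepresentation $\bigoplus_m Q_m\otimes\C_m^{\harm}[\z]$ containing $v$, with multiplicity spaces $Q_m$ of which only finitely many are non-zero, as required.

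The main obstacle is precisely the branching claim: pinning down the harmonic modules as the unique irreducibles with an $\so(D-1,\C)$--fixed line, that line being one-dimensional. I expect the Gelfand-pair argument to be the cleanest route, since multiplicity-freeness of the sphere is classical (and gives irreducibility of $\C_m^{\harm}[\z]$ for $D\geqslant3$, the range in which this general machinery is invoked, the small cases being treated directly). Everything else is routine bookkeeping about isotypic components and the finiteness of the support of $v$.
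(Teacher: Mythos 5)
Your proposal is correct in substance, and its primary route is genuinely different from the paper's. The reduction in your first paragraph (projecting $v$ onto the irreducible summands and noting each component is again $\so(D-1,\C)$--invariant) is the same, largely implicit, first step as in the paper; the difference lies in how the key claim --- that an irreducible $\so(D,\C)$--module with a non-zero $\so(D-1,\C)$--invariant vector must be some $\C_m^{\harm}[\z]$ --- is established. The paper proves this claim (\leref{lm-so1}) purely algebraically: it identifies $\C_m^{\harm}[\z]\cong R(m\pi_1)$ via the decomposition \eqref{so1} of $S^m\C^D$, and then invokes the Gelfand--Tsetlin branching rules \eqref{so4}, \eqref{so2}, whose interlacing inequalities force the highest weight to be $m\pi_1$ whenever the trivial $\so(D-1,\C)$--module occurs in the restriction, with multiplicity one. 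Your primary argument is instead analytic: the Gelfand pair structure of the sphere together with Frobenius reciprocity, which is shorter and explains conceptually why harmonics are exactly the answer, whereas the branching route stays entirely within finite-dimensional highest weight theory and handles all weights, including spinor-type ones, explicitly. That last point is where you have an imprecision to repair: finite-dimensional $\so(D,\C)$--modules correspond to representations of $\mathrm{Spin}(D,\R)$, not of $\SO(D,\R)$ --- half-integer (spinor-type) highest weights do not descend to $\SO(D,\R)$. Since the very content of the key claim is to exclude such modules, you cannot begin by assuming every candidate $W$ is an $\SO(D,\R)$--module. The fix is cosmetic: run the identical argument for the pair $\bigl(\mathrm{Spin}(D,\R),\mathrm{Spin}(D-1,\R)\bigr)$, using that $S^{D-1}=\mathrm{Spin}(D,\R)/\mathrm{Spin}(D-1,\R)$ and that $L^2(S^{D-1})$ remains multiplicity free with constituents $\C_m^{\harm}[\z]$; Frobenius reciprocity then shows that spinor-type irreducibles admit no invariant vector, and the rest of your assembly argument goes through unchanged. (Your parenthetical fallback to the branching rule, which is precisely the paper's argument, also closes this case directly.)
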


\begin{proof}
Recall that the Lie algebra $\so(D,\C)$ is of type $D_l$ for $D=2l$ and of type $B_l$ for $D=2l+1$. In either case it has rank $l$.
Denote by $R(\Lambda)$ the irreducible highest weight $\so(D,\C)$--module with  
highest weight $\Lambda$, and denote by $\pi_1,\dots,\pi_l$ the fundamental 
weights. Then $R(\pi_1) \cong \C^D$ is the vector representation.
Its symmetric powers are (see e.g.\ \cite{GW09,OV90}):
\beq\label{so1}
S^m \C^D \cong \bigoplus_{0 \leqslant k \leqslant \frac{m}2} R((m-2k)\pi_1) \,.
\eeq
Therefore, an irreducible $\so(D,\C)$--module $R(\Lambda)$ is contained
in $S^* \C^D \cong \C[\z]$ if and only if $\Lambda=m\pi_1$ for some $m\in\ZP$.
It is well known that the space $\C_m^{\harm}[\z]$ is an irreducible
$\so(D,\C)$--representation with highest weight $m\pi_1$, i.e.,
\beq\label{so-m}
\C_m^{\harm}[\z] \cong R(m\pi_1) \,.
\eeq
In fact, the decomposition \eqref{so1} corresponds to the expression \eqref{harexp} of a degree $m$ homogeneous polynomial in terms of $\z^2$
and harmonic polynomials.
The proof now follows immediately from the next lemma.
\end{proof}

\begin{lemma}\label{lm-so1}
Let\/ $R$ be an irreducible\/ $\so(D,\C)$--module. Assume that there exists
a non-zero vector\/ $v\in R$ annihilated by the subalgebra\/ $\so(D-1,\C)$.
Then\/ $R\cong \C_m^{\harm}[\z]$ for some\/ $m\in\ZP$.
Moreover, the vector\/ $v$ is unique up to a scalar multiple.
\end{lemma}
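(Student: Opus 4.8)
The plan is to identify the pair $(\so(D,\C),\so(D-1,\C))$ with the complexification of the compact Gelfand pair $(\SO(D,\R),\SO(D-1,\R))$, whose associated homogeneous space is the sphere $S^{D-1}=\SO(D,\R)/\SO(D-1,\R)$, and then to read off both the classification and the multiplicity-one statement from the classical decomposition of functions on $S^{D-1}$ into spherical harmonics. Throughout I assume $D\geqslant 3$, so that $\so(D,\C)$ is simple and $\C_m^\harm[\z]$ is irreducible (cf.\ \eqref{so-m}); the degenerate low-dimensional cases are not needed in the body of the paper.

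First I would reduce to the compact picture. Since $R$ is finite dimensional and irreducible, Weyl's unitary trick realizes it as the complexification of an irreducible representation of the compact group $K:=\SO(D,\R)$, and the condition that $v$ be annihilated by $\so(D-1,\C)$ is equivalent to $v$ being fixed by the connected subgroup $K':=\SO(D-1,\R)$, because $\so(D-1,\C)=\so(D-1,\R)\otimes_{\R}\C$ and $K'$ is connected. Thus $\dim R^{\so(D-1,\C)}=\dim R^{K'}$, and it suffices to compute this multiplicity.

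The key step is Frobenius reciprocity together with the spherical-harmonics decomposition. By Frobenius reciprocity for the compact pair $(K,K')$, the number $\dim R^{K'}=\dim\Hom_{K'}(\mathbf{1},R|_{K'})$ equals the multiplicity of $R$ in the induced representation $\mathrm{Ind}_{K'}^{K}\mathbf{1}=L^2(K/K')=L^2(S^{D-1})$. The classical Peter--Weyl decomposition of functions on the sphere is $L^2(S^{D-1})\cong\widehat{\bigoplus}_{m\geqslant 0}\C_m^\harm[\z]\big|_{S^{D-1}}$ as $K$-modules, with each harmonic space $\C_m^\harm[\z]$ occurring exactly once; concretely, the $K'$-fixed vector inside $\C_m^\harm[\z]$ is the zonal harmonic, the unique (up to scalar) degree-$m$ harmonic polynomial depending only on $z^1$ and $\z^2$. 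Comparing the two decompositions, the multiplicity of an irreducible $R$ in $L^2(S^{D-1})$ is $1$ when $R\cong\C_m^\harm[\z]$ for some $m\in\ZP$ and $0$ otherwise. Hence $\dim R^{K'}=1$ exactly in the former case, which simultaneously proves that $R\cong\C_m^\harm[\z]$ and that the fixed vector $v$ is unique up to a scalar.

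The main obstacle is the multiplicity-one input, i.e.\ that each $\C_m^\harm[\z]$ appears only once in $L^2(S^{D-1})$, equivalently that $(\SO(D),\SO(D-1))$ is a Gelfand pair. I would justify this either by citing the classical theory of spherical harmonics, or purely algebraically via the fact that the restriction $\so(D,\C)\downarrow\so(D-1,\C)$ is multiplicity-free and that the trivial $\so(D-1,\C)$-module occurs in $R(\Lambda)|_{\so(D-1,\C)}$ precisely when $\Lambda=m\pi_1$, again with multiplicity one; combined with $\C_m^\harm[\z]\cong R(m\pi_1)$ from \eqref{so-m}, this yields the same conclusion and stays entirely within the highest-weight bookkeeping already set up in \leref{Lm.3.1}. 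Everything else is routine.
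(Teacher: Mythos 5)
Your proposal is correct, but it takes a genuinely different route from the paper. The paper stays entirely algebraic: it invokes the classical branching rules for the restriction $\so(D,\C)\downarrow\so(D-1,\C)$ (Theorems 8.1.3 and 8.1.4 of Goodman--Wallach), notes that this restriction is multiplicity-free, and reads off from the interlacing inequalities that the trivial $\so(D-1,\C)$-constituent occurs precisely when $\lambda=(\lambda_1,0,\dots,0)=m\pi_1$, with uniqueness of $v$ following from multiplicity-freeness. You instead compactify via Weyl's unitary trick and derive both conclusions from Frobenius reciprocity together with the Peter--Weyl decomposition $L^2(S^{D-1})\cong\widehat{\bigoplus}_m \C_m^{\harm}[\z]$, i.e.\ from the Gelfand-pair property of $(\SO(D,\R),\SO(D-1,\R))$; this is non-circular as long as the multiplicity-one input is taken from the classical theory of spherical harmonics (density of polynomials plus pairwise inequivalence of the harmonic spaces, which have distinct highest weights), rather than from your fallback option, which is verbatim the paper's branching argument. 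What your route buys is geometric transparency and a connection to classical analysis on the sphere; what it costs is the analytic overhead (unitary trick, $L^2$-completeness, Frobenius reciprocity for the infinite-dimensional induced representation), which the paper's bookkeeping avoids. Two shared caveats are handled at least as carefully by you as by the paper: you make explicit the restriction $D\geqslant3$ (necessary, since $\C_m^{\harm}[\z]$ is reducible for $D=2$, a case the paper treats separately in Sect.\ 4.4), and you state the finite-dimensionality of $R$ explicitly, whereas the paper assumes it tacitly by writing $R=R(\lambda)$; both assumptions are legitimate because Lemma \ref{Lm.3.1} only applies the present lemma to finite-dimensional constituents.
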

\begin{proof}
For $D=2l$, in accordance with the branching rule (see e.g. \cite[Theorem 8.1.4]{GW09}),
the restriction of $R(\lambda)$
to the subalgebra $\so(2l-1,\C) $ is given by
\beq\label{so4}
R(\lambda)\big|^{}_{\so(2l-1,\C) } \cong \bigoplus_\nu R'(\nu) \,.
\end{equation}
Here $R'(\nu)$ is the irreducible finite-dimensional representation of
$\so(2l-1,\C) $ with highest weight $\nu$, and the sum is taken over
all weights $\nu$ satisfying the inequalities
\beq\label{so5}
\lambda_1\geqslant\nu_1\geqslant\lambda_2\geqslant\nu_2\geqslant\cdots\geqslant\lambda_{l-1} \geqslant
\nu_{l-1}\geqslant|\lambda_{l}| \,,
\end{equation}
with all the $\nu_i$ being simultaneously
integers or half-integers together with the $\lambda_i$.

For completeness, let us also consider the case $D=2l+1$.
Then, in accordance with the branching rule (see e.g. \cite[Theorem 8.1.3]{GW09}),
the restriction of $R(\lambda)$
to the subalgebra $\so(2l,\C) $ is given by
\beq\label{so2}
R(\lambda)\big|^{}_{\so(2l,\C) } \cong \bigoplus_{\nu} R'(\nu) \,,
\end{equation}
where $R'(\nu)$ is the irreducible finite-dimensional representation of
$\so(2l,\C) $ with highest weight $\nu$, and the sum is taken over
all weights $\nu$ satisfying the inequalities
\beq\label{so3}
\lambda_1\geqslant\nu_1\geqslant\lambda_2\geqslant\nu_2\geqslant\cdots\geqslant
\lambda_{l-1}\geqslant \nu_{l-1}\geqslant\lambda_{l}\geqslant|\nu_l| \,,
\end{equation}
with all the $\nu_i$ being simultaneously
integers or half-integers together with the $\lambda_i$.

Now let us assume that the restriction $R(\lambda)\big|^{}_{\so(D-1,\C)}$
contains the trivial representation $R'(0)$.
Then inequalities \eqref{so5} and \eqref{so3} imply that
for $\nu=(0,\dots,0)$, one has $\lambda=(\lambda_1,0,\dots,0)$, where
$\lambda_1$ is a non-negative integer. Therefore $\lambda=m\pi_1$
for some $m \in\ZP$. This means that $R\cong R(m\pi_1) \cong \C_m^{\harm}[\z]$.

Finally, the uniqueness of $v$ follows from the fact that the decompositions \eqref{so4} and \eqref{so2} are multiplicity free; in particular, 
the trivial $\so(D-1,\C)$--module $R'(0)$ appears only once in them.
 \end{proof}

We remark that the unique (up to a scalar multiple) $\so(D-1,\C)$--invariant element $h_m(\z)\in\C_m^{\harm}[\z]$ can be obtained as the image
of the projection of $(z^1)^m \in S^m\C^D$ onto the summand $R(m\pi_1)\cong \C_m^{\harm}[\z]$ in the decomposition \eqref{so1}.
This implies that $h_m(\z)$ can be normalized so that
$
h_m(x\e_1) = x^m 
$.

\begin{remark}
A more explicit expression for $h_m(\z)$ can be derived from 
\cite[Eqs.\ (3.25), (3.29)]{BN06}.
In more details, a generating series for $h_m(\z)$ is the expansion
\begin{align*}
\iota_{\e_1,\z} \bigl((\e_1+\z)^2\bigr)^{-(D-2)/2}
&:= e^{\z \spr \di_{\w}} (\w^2)^{-(D-2)/2} \Bigr|_{\w=\e_1} \\
&=
\mathop{\sum}\limits_{m=0}^{\infty} 
\binom{-D+2}{m}
h_m(\z)
\,.
\end{align*}
We can further write
\[
h_m(\z) = \frac{(\z^2)^{m/2}}{\binom{-D+2}{m}} \, C^{((D-2)/2)}_m \bigl( -z^1 (\z^2)^{-1/2} \bigr)
\]
in terms of the Gegenbauer polynomials $C^{(\alpha)}_m(x)$, which are defined by the expansion
\[
(1-2xt+t^2)^{-\alpha} = \mathop{\sum}\limits_{m=0}^{\infty} C^{(\alpha)}_m(x) \, t^m
\,, \qquad 0\leqslant |x| <1 \,, \; |t| \leqslant 1 \,, \; \alpha>0 \,.
\]
\end{remark}

Let us fix a basis $\{h_{m,\sigma}(\z)\}_{\sigma=1}^{\mathfrak{h}_m^D}$ for $\C_m^{\harm}[\z]$ such that $h_{m,1}(\z)=h_m(\z)$ 
and
\beq\label{so6}
h_{m,\sigma}(x\e_1) = \delta_{\sigma,1} x^m \,;
\eeq
the last property can be achieved by subtracting from each $h_{m,\sigma}$ a suitable multiple of $h_{m,1}$.
As a consequence of Lemmas \ref{Lm.3.1} and \ref{lm-so1}, we obtain that in the space $\Hom(V_{\Delta'} \otimes V_{\Delta''},V_{\Delta'+\Delta''+n})$
there is a system of linearly independent elements $\{f^n_{k,m,\sigma}\}_{k,m,\sigma}$ such that:
\begin{enumerate}
\medskip
\item[$(a)$]
For every fixed $n$, $k$ and $m$, the subsystem $\{f^n_{k,m,\sigma}\}_{\sigma=1}^{\mathfrak{h}_m^D}$ is a basis of an irreducible $\so(D,\C)$--subrepresentation isomorphic to $\C_m^{\harm}[\z]$,
corresponding to the fixed basis $\{h_{m,\sigma }(\z)\}_{\sigma=1}^{\mathfrak{h}_m^D}$. 

\medskip
\item[$(b)$]
There is a decomposition
\(
\mu_n^1\big|_{V_{\Delta'} \otimes V_{\Delta''}}
= \displaystyle\sum\limits_{k,m} \gamma^{n}_{k,m} f^n_{k,m,1}
\)
where the $\gamma$'s are complex numbers.
\end{enumerate}
\medskip
Then we can define
\beq\label{HSFC}
Y_D(a,\z)b \, := \,
\mathop{\sum}\limits_{n,k,m,\sigma} \gamma^{n}_{k,m} \,
f^n_{k,m,\sigma} (a \otimes b) \,
\bigl(\z^2\bigr)^{\frac{n-m}{2}} \, h_{m,\sigma} (\z)
\,.
\eeq
By construction, this expression is $\SO(D,\C)$--equivariant; hence, it does not depend on the choice of basis $\{h_{m,\sigma }(\z)\}_{\sigma=1}^{\mathfrak{h}_m^D}$.
Moreover, $Y_D(a,x\e_1)=Y_1(a,x)$ by the assumption \eqref{so6}.
Thus, \eqref{HSFC} agrees with our previous definition, due to the uniqueness from \leref{SOD-covar}.

\section*{Declarations}

\subsection*{Funding}
B.N.B. was supported in part by a Simons Foundation grant 584741. 
Both authors were supported in part by the Bulgarian National Science Fund under research grant KP-06-N68/3.

\subsection*{Competing interests}
The authors have no competing interests to declare that are relevant to the content of this article.

\subsection*{Data availability statement}
Data sharing not applicable to this article as no datasets were generated or analysed during the current study.

\subsection*{Acknowledgements}
B.N.B. is grateful to the Institute for Nuclear Research and Nuclear Energy of the Bulgarian Academy of Sciences for the hospitality 
during the final stages of this work in June 2022. 
We would like to thank Ludmil Hadjiivanov and the referees for their comments, which helped improve the exposition of the paper.


\begin{thebibliography}{BLOT90}

\bibitem[A99]{A99}
Araki, H.:
\textit{Mathematical theory of quantum fields}.
Int. Ser. Monographs on Phys. 101. 
Oxford: Oxford University Press, 1999

\bibitem[BN06a]{BN06}
Bakalov, B., and Nikolov, N.M.: Jacobi identity for vertex algebras in higher dimensions, Journ. Math. Phys. \textbf{47},  053505 (2006)

\bibitem[BN06b]{BN-ICM}
Bakalov, B., and Nikolov, N.M.: 
Reconstruction of a vertex algebra in higher dimensions from its
one-dimensional restriction. In:
Abstracts of Short Communications at ICM 2006, Madrid

\bibitem[BN08]{BN08}
Bakalov, B., and Nikolov, N.M.: Constructing models of vertex algebras in higher dimensions. In: Dobrev, V.K. et al. (eds.): \textit{Lie Theory and Its Applications in Physics VII}, Heron Press, Sofia, 2008

\bibitem[BPZ84]{BPZ84} 
Belavin,  A.A., Polyakov, A.M., and Zamolodchikov, A.B.:
Infinite conformal symmetry in two-dimensional quantum field theory,
Nucl. Phys. B \textbf{241}, 333--380 (1984)

\bibitem[BLOT90]{BLOT90} 
Bogolyubov, N.N., Logunov, A.A., Oksak, A.I., Todorov, I.T.:
\textit{General principles of quantum field theory}.
Math. Phys. and Appl. Math., 10. Dordrecht etc.: Kluwer Academic Publishers, 1990

\bibitem[B86]{B86}
Borcherds, R.E.: Vertex algebras, Kac--Moody algebras, and the Monster, Proc. Natl. Acad. Sci. U.S.A. \textbf{83}, 3068--3071 (1986)

\bibitem[DMS97]{DMS97}
Di Francesco, P., Mathieu, P., and S\'en\'echal, D.:
\textit{Conformal field theory}.
Grad. Texts Contemp. Phys., Springer--Verlag, New York, 1997

\bibitem[EK00]{EK00}
Etingof, P., and Kazhdan, D.:
Quantization of Lie bialgebras, Part V: Quantum vertex operator algebras,
Sel. Math., New Ser. \textbf{6}, 105--130 (2000)

\bibitem[FB04]{FB04}
Frenkel, E., and Ben-Zvi, D.:
\textit{Vertex algebras and algebraic curves}.
Math. Surveys and Monographs \textbf{88},
Amer. Math. Soc., Providence, RI, 2001; 2nd ed., 2004

\bibitem[FLM88]{FLM88}
Frenkel, I.B., Lepowsky, J., and Meurman, A.: 
\textit{Vertex Operator Algebras and the Monster}, 
Pure and Applied Math. \textbf{134}, Academic, Boston, MA, 1988

\bibitem[G89]{G89}
Goddard, P.:
Meromorphic conformal field theory. In: \textit{Infinite-dimensional Lie algebras and groups},
556--587, Adv. Ser. Math. Phys. \textbf{7}, 
World Sci. Publishing, Teaneck, NJ, 1989

\bibitem[GW09]{GW09}
Goodman, R., and Wallach, N.R.:
\textit{Symmetry, representations, and invariants}. Vol. 255. Dordrecht: Springer, 2009

\bibitem[H96]{H96}
Haag, R.: 
\textit{Local Quantum Physics}. Berlin-New York: Springer, 2nd ed., 1996

\bibitem[K98]{K98}
Kac, V.G.: 
\textit{Vertex algebras for beginners}. 
University Lecture Series \textbf{10}, 
Amer. Math. Soc., Providence, RI, 1996; 2nd ed., 1998

\bibitem[LL04]{LL04}
Lepowsky, J., and Li, H.:
\textit{Introduction to vertex operator algebras and their 
representations}.
Progr. Math. \textbf{227}, Birkh\"auser Boston, Boston, MA, 2004

\bibitem[L05]{L05}
Li, H.:
Pseudoderivations, pseudoautomorphisms and simple current modules for vertex algebras.
In: \textit{Infinite-dimensional Aspects of Representation Theory and Applications}, Contemp. Math. \textbf{392}, 55--65,
Amer. Math. Soc., 2005

\bibitem[Ne15]{Ne15}
Nedanovski, D.:
Superconformal vertex algebras in four dimensions,
Modern Physics Letters A, \textbf{30}:12, 1550064 (2015)

\bibitem[Ne16a]{Ne16}
Nedanovski, D.:
Compact picture for theories with extended superconformal symmetry,
Comptes Rendus de L'Academie Bulgare des Sciences, \textbf{69}:7, 851--858 (2016)

\bibitem[Ne16b]{Ne16a}
Nedanovski, D.:
Superconformal vertex algebras in four-dimensional space-time,
PhD thesis, INRNE-BAS, Sofia, 2016

\bibitem[N05a]{N05}
Nikolov, N.M.: Vertex algebras in higher dimensions and globally conformal invariant quantum field theory,
Commun. Math. Phys. \textbf{253}, 283--322 (2005) 

\bibitem[N05b]{N05a}
Nikolov, N.M.:
Reconstruction of higher dimensional conformal quantum field theory from its restriction to a time-like line.
Seminar given at Institut f\"ur Theoretische Physik, Universit\"at G\"ottingen,
30 Nov 2005

\bibitem[NT01]{NT01}
Nikolov, N.M., and Todorov, I.T.:
Rationality of conformally invariant local correlation functions on compactified Minkowski space, 
Commun. Math. Phys. \textbf{218}, 417--436 (2001) 

\bibitem[N87]{N87}
Nikolov, P.A.:
On the dimensional reduction of invariant fields and differential operators,
J. Math. Phys. {\bf 28} 2354 (1987)

\bibitem[NP03a]{NP03-1}
Nikolov, P.A., and Petrov, N.P.:
A local approach to dimensional reduction. I. General formalism,
J. Geom. Phys. {\bf 44}(4) 523--538 (2003)

\bibitem[NP03b]{NP03-2}
Nikolov, P.A., and Petrov, N.P.:
A local approach to dimensional reduction. II. Conformal invariance in Minkowski
space, J. Geom. Phys. {\bf 44}(4) 539--554 (2003)

\bibitem[OV90]{OV90}
Onishchik, A.L., and Vinberg, E.B.:
\textit{Lie groups and algebraic groups}.
Springer Ser. Soviet Math., Springer--Verlag, Berlin, 1990


\end{thebibliography}
\end{document}